\newcommand{\cbstart}{}
\newcommand{\cbend}{}
\newcommand{\aautoref}[2]{{\autoref{#1}}}
\newcommand{\sem}[1]{{[\![{#1}]\!]}}
\newcommand{\subtype}{{\;\sqsubseteq\;}}
\newcommand{\subsketch}{{\;\trianglelefteq\;}}
\newcommand{\proves}{{\;\vdash\;}}
\newcommand{\notproves}{{\not\;\vdash\;}}
\newcommand{\typename}[1]{\underline{\mathsf{#1}}}
\newcommand{\stPop}[1]{{\mathsf{pop}\; {#1}}}
\newcommand{\stPush}[1]{{\mathsf{push}\; {#1}}}
\newcommand{\one}{\underline{1}}
\newcommand{\zero}{\underline{0}}
\newcommand{\denotes}[1]{[\![{#1}]\!]}
\newcommand{\scheme}[3]{\forall{#1}.\,{#2}\!\Rightarrow\!{#3}}
\newcommand{\rewrite}{\longrightarrow}
\newcommand{\rewrites}{\stackrel{*}{\rewrite}}
\newcommand{\stStart}{{^\#\textsc{Start}}}
\newcommand{\stEnd}{{^\#\textsc{End}}}
\newcommand{\Deriv}[1]{\mathsf{Deriv}_{\mathcal{#1}}}
\newcommand{\term}[1]{\textsc{Var}\;{#1}}
\newcommand{\pdsrule}[4]{	\langle {#1}; {#2} \rangle \hookrightarrow
	\langle {#3}; {#4} \rangle}
\providecommand{\bigsqcap}{  \mathop{    \mathpalette\@updown\bigsqcup
  }}
\newcommand*{\@updown}[2]{  \rotatebox[origin=c]{180}{$\m@th#1#2$}}
 \theoremstyle{definition}
 \newtheorem{example}{Example}[section]
 \theoremstyle{definition}
 \newtheorem{lemma}{Lemma}[section]
 \theoremstyle{definition}
 \newtheorem{theorem}{Theorem}[section]
 \theoremstyle{definition}
 \newtheorem{corollary}{Corollary}[section]
 \theoremstyle{definition}
 \newtheorem{definition}{Definition}[section]
 \theoremstyle{definition}
 \newtheorem{note}{Note}
\begin{document}
\toappear{}

\title{Polymorphic Type Inference for Machine Code\thanks{This research was developed with
funding from the Defense Advanced Research Projects Agency (DARPA). The views, opinions,
and/or findings contained in this material are those of the authors and should not be interpreted
as representing the official views or policies of the Department of Defense or the U.S. Government.\\
{\textsc{Distribution A.}} Approved for public release; distribution unlimited.}}

\authorinfo{Matthew Noonan \and Alexey Loginov \and David Cok}
{GrammaTech, Inc.\\Ithaca NY, USA}
{\{mnoonan,alexey,dcok\}@grammatech.com}
\titlebanner{DRAFT -- Do not distrubute}
\copyrightyear{2016}

\maketitle

\begin{abstract}
For many compiled languages, source-level types are erased very early
in the compilation process.  As a result, further compiler passes 
may convert type-safe source into type-unsafe machine code.
Type-unsafe idioms in the original source and type-unsafe optimizations
mean that type information in a stripped binary is essentially nonexistent.
The problem of recovering high-level types by performing type inference over
stripped machine code is called {\em type reconstruction}, and offers a useful
capability in support of reverse engineering and decompilation.

In this paper, we motivate and develop a novel type system and algorithm for
 machine-code type inference.  The features of this type system were
developed by surveying a wide collection of common source- and machine-code
idioms, building a catalog of challenging cases for type reconstruction.  We
found that these idioms place a sophisticated set of
requirements on the type system, inducing features such as recursively-constrained
polymorphic types.  Many of the features we identify are often seen only in
expressive and powerful type systems used by high-level functional languages.

Using these type-system features as a guideline, we have developed Retypd:
a novel static type-inference algorithm for
machine code that supports recursive types, polymorphism, and subtyping.
Retypd yields more accurate inferred types than existing algorithms,
while also enabling new capabilities such as reconstruction of
pointer {\verb|const|} annotations with 98\% recall.
Retypd can operate on weaker
program representations than the current state of the art, removing the
need for high-quality points-to information that may be impractical
to compute.
 \end{abstract}

\category{F.3.2}{Logics and Meanings of Programs}{Semantics of Programming Languages}
\category{D.2.7}{Software Engineering}{Distribution, Maintenance, and Enhancement}
\category{D.3.3}{Programming Languages}{Language Constructs and Features}
\category{F.4.3}{Mathematical Logic and Formal Languages}{Formal Languages}
\keywords{} Reverse Engineering, Type Systems, Polymorphism, Static Analysis, Binary Analysis, Pushdown Automata

\section{Introduction}

In this paper we introduce Retypd, a machine-code type-inference tool that finds
{\bf re}gular {\bf ty}pes using {\bf p}ush{\bf d}own systems.  Retypd
includes several novel features targeted at improved types for
reverse engineering, decompilation, and high-level program analyses.
These features include:
\begin{itemize}[noitemsep]
\item Inference of most-general type schemes (\autoref{simplify})
\item Inference of recursive structure types (\autoref{type-recovery-example})
\item Sound analysis of pointer subtyping (\autoref{model-ptrs})
\item Tracking of customizable, high-level information such as purposes and
typedef names (\autoref{sketch-sec})
\item Inference of type qualifiers such as {\verb|const|} (\autoref{const-recovery-stats})
\item No dependence on high-quality points-to data (\autoref{sec-evaluation})
\item More accurate recovery of source-level types (\autoref{sec-evaluation})
\end{itemize}

Retypd continues in the tradition of SecondWrite \citep{2ndwrite} and TIE \citep{tie} by introducing a principled
static type-inference algorithm applicable to stripped binaries.  Diverging from previous work on machine-code type reconstruction, we use a rich type system that supports polymorphism, mutable references, and recursive types.  The principled type-inference phase is followed
by a second phase that uses heuristics to ``downgrade'' the inferred types to human-readable
C types before display.
By factoring type inference into two phases, we can sequester unsound heuristics and quirks of the C type systems
 from the sound core of the type-inference engine.  This adds a degree of freedom to
the design space so that we may leverage a relatively complex type system during 
type analysis, yet still emit familiar C types for the benefit of the reverse engineer.

Retypd operates on an intermediate representation (IR) recovered by automatically disassembling a
binary using GrammaTech's static analysis tool CodeSurfer\textregistered{} for Binaries \citep{csurf}.  By generating type constraints
from a TSL-based abstract interpreter \citep{lim12}, Retypd can operate uniformly on binaries for any platform supported by CodeSurfer, including x86, x86-64, and ARM.

During the development of Retypd, we carried out an extensive investigation of common machine-code
idioms in compiled C and C++ code that create challenges for existing type-inference methods.  For each challenging case, we
identified requirements for any type system that could correctly type the idiomatic code. The results
of this investigation appear in \autoref{challenge-sec}.  The type system used by
Retypd was specifically designed to satisfy these requirements.  These common idioms pushed us
into a far richer type system than we had first expected, including features like recursively constrained type schemes that have not previously been applied to machine-code type inference.

Due to space limitations, details of the proofs and algorithms appear in the appendices, which are available in the online version of this paper at \verb|arXiv:1603.05495| \citep{arxiv}. Scripts and data sets used for evaluation also appear there.

\section {Challenges}
\label{challenge-sec}

There are many challenges to carrying out type inference on machine code, and many common idioms
that lead to sophisticated demands on the feature set of a type system.  In this section, we
describe several of the challenges seen during the development of Retypd that led to our particular
combination of type-system features.

\subsection {Optimizations after Type Erasure}
Since type erasure typically happens early in the compilation process, many compiler optimizations
may take well-typed machine code and produce functionally equivalent but ill-typed results. We found
that there were three common optimization techniques that required special care: the use of a variable as a syntactic constant, early returns along error paths, and the re-use of stack slots.

\smallskip
\noindent
{\em Semi-syntactic constants}:
Suppose a function with signature {\verb|void f(int x, char* y)|} is invoked 
as {\verb|f(0, NULL)|}.  This will usually be compiled to x86 machine code similar to

\begin{code}
xor  eax, eax
push eax        ; y := NULL
push eax        ; x := 0
call f
\end{code}

\noindent
This represents a code-size optimization, since {\verb|push eax|} can be encoded in one byte instead of
the five bytes needed to push an immediate value (0).
We must be careful that the type variables for $x$ and $y$ are not unified; here, {\verb|eax|} is being
used more like a syntactic constant than a dynamic value that should be typed.

\smallskip
\noindent
{\em Fortuitous re-use of values}:
A related situation appears in the common control-flow pattern represented by the snippet of C and the corresponding machine code in \autoref{fortuitous}.
Note that on procedure exit, the return value in {\verb|eax|} may have come from either the return
value of {\verb|S2T|} or from the return value of {\verb|get_S|}
(if {\verb|NULL|}).  If this
situation is not detected, we will see a false relationship between the incompatible return types of
{\verb|get_T|} and {\verb|get_S|}.

\begin{figure}
\hrule
\vrule
\begin{minipage}{.48\linewidth}
\begin{code}[frame=none]
 T * get_T(void)
 {
     S * s = get_S();
     if (s == NULL) {
         return NULL;
     }
     T * t = S2T(s);
     return t;
 }
\end{code}
\end{minipage}
\vrule
\begin{minipage}{.48\linewidth}
\begin{code}[frame=none]
 get_T:
     call get_S
     test eax, eax
     jz   local_exit
     push eax
     call S2T
     add  esp, 4
 local_exit:
     ret
\end{code}
\end{minipage}
\hspace{2pt}
\vrule
\vspace{-4pt}
\caption{A common fortuitous re-use of a known value.}
\label{fortuitous}
\end{figure}

\smallskip
\noindent
{\em Re-use of stack slots}:
 If a function uses two variables of the same size in disjoint scopes,
there is no need to allocate two separate stack slots for those variables. Often the optimizer will reuse a
stack slot from a variable that has dropped out of scope.  This is true even if the new variable has a different
type.  This optimization even applies to the stack slots used to store formal-in parameters, as in \autoref{type-recovery-example}; once the function's argument is no
longer needed, the optimizer can overwrite it with a local variable of an incompatible type.

More generally, we cannot assume that the map from program variables to physical locations
is one-to-one.  We cannot even make the weaker assumption that the program variables inhabiting a
single physical location at different times will all belong to a single type.

\smallskip
We handle these issues through a combination of type-system features (subtyping instead of unification)
and program analyses (reaching definitions for stack variables and trace partitioning \citep{mauborgne2005trace}).

\begin{figure*}
  \hrule
    \vrule
  \begin{minipage}{0.33\linewidth}
\begin{code}[frame=none]
#include <stdlib.h>

struct LL
{
    struct LL * next;
    int handle;
};
    
int close_last(struct LL * list)
{
    while (list->next != NULL)
    {
        list = list->next;
    }
    return close(list->handle);
}
\end{code}
\end{minipage}
\vrule
\begin{minipage}{0.34\linewidth}
\begin{code}[frame=none]
close_last:
   push    ebp
   mov     ebp,esp
   sub     esp,8
   mov     edx,dword [ebp+arg_0]
   jmp     loc_8048402
loc_8048400:
   mov     edx,eax
loc_8048402:
   mov     eax,dword [edx]
   test    eax,eax
   jnz     loc_8048400
   mov     eax,dword [edx+4]
   mov     dword [ebp+arg_0],eax
   leave
   jmp     __thunk_.close
\end{code}
\end{minipage}
\vrule
\begin{minipage}{0.32\linewidth}
\hspace{0.1in}
\resizebox{0.9\linewidth}{!}{
\begin{minipage}{0.32\linewidth}
\begin{align*}
\scheme{F}{(\exists \tau . \mathcal{C})}{F} & \quad\text{where}\quad \mathcal{C} = \\
F.\mathsf{in_{stack0}} &\subtype \tau \\
\tau.\mathsf{load}.\sigma\mathsf{32@0} & \subtype \tau \\
\tau.\mathsf{load}.\sigma\mathsf{32@4} & \subtype \mathsf{int} \wedge ^\#\!\mathsf{FileDescriptor} \\
\mathsf{int} \vee ^\#\!\mathsf{SuccessZ} & \subtype F.\mathsf{out_{eax}}
\end{align*}
\end{minipage}
}
\vspace{0.1in}
\hrule
\hspace{0.09in}
\begin{minipage}{0.32\linewidth}
\begin{code}[frame=none]
typedef struct {
    Struct_0 * field_0;
    int // #FileDescriptor
      field_4;
} Struct_0;

int // #SuccessZ
close_last(const Struct_0 *);
\end{code}
\end{minipage}
\end{minipage}
\hspace{-5pt}
\vrule
\vspace{-4pt}
\caption{Example C code (compiled with {\tt gcc} 4.5.4 on Linux with flags {\tt -m32 -O2}), disassembly, type scheme inferred from the machine code, and reconstructed C type. The tags $^\#\!\mathsf{FileDescriptor}$ and $^\#\!\mathsf{SuccessZ}$ encode inferred higher-level purposes.}
\label{type-recovery-example}
\end{figure*}
 
\subsection{Polymorphic Functions}
\label{polyfun}
We discovered that, although not directly supported by the C type system, most programs define or make use of functions that are effectively polymorphic. The most well-known example is {\verb|malloc|}: the return value is expected to be immediately cast to some other
type {\verb|T*|}.  Each call to {\verb|malloc|} may be thought of as returning
some pointer of a different type.  The type of {\verb|malloc|} is effectively
{\em not} $\texttt{size\_t} \to \texttt{void*}$, but rather $\forall \tau. \texttt{size\_t} \to \tau\texttt{*}$.

The problem of a polymorphic {\verb|malloc|} could be mitigated by treating each call site $p$ as a call to a distinct function {\verb|malloc|}$_p$,
 each of which may have a distinct return type {\tt T$_p$*}.  Unfortunately, it is not sufficient to treat a handful of special functions like {\verb|malloc|} this way: it is common to see binaries that
 use user-defined allocators and wrappers to {\verb|malloc|}.  All of these functions would also need to be accurately identified and duplicated for each callsite.
 
A similar problem exists for functions like {\verb|free|}, which is polymorphic in its lone parameter.  Even more complex are functions like {\verb|memcpy|}, which is polymorphic in its first two parameters and its return type, though the three types are not independent of each other.  Furthermore,
the polymorphic type signatures
\begin{align*}
&\texttt{malloc} : \forall \tau . \texttt{size\_t} \to \tau\texttt{*} \\
&\texttt{free} : \forall \tau . \tau\texttt{*} \to \texttt{void} \\
&\texttt{memcpy} : \forall \alpha, \beta . (\beta \subtype \alpha) \Rightarrow (\alpha\texttt{*} \times \beta\texttt{*} \times \texttt{size\_t}) \to \alpha\texttt{*}
\end{align*}
are all strictly more informative to the reverse engineer than the standard C signatures.  How else
could one know that the {\verb|void*|} returned by {\verb|malloc|} is not meant to be an
opaque handle, but rather should be cast to some other pointer type?

In compiled C++ binaries, polymorphic functions are even more common. For example, a class member function must potentially accept
both {\verb|base_t*|} and {\verb|derived_t*|} as types for {\verb|this|}.

\citet{foster2006flow} noted that using bounded polymorphic
type schemes for {\verb|libc|} functions increased the precision of type-qualifier inference, at the level of source code.
To advance the state of the art in machine-code type recovery, we believe it is important to also
embrace polymorphic functions as a natural and common feature of machine code.
Significant improvements to static type reconstruction---even for monomorphic types---will
require the capability to infer polymorphic types of some nontrivial complexity.

\subsection{Recursive Types}

The relevance of recursive types for decompilation was recently discussed by \citet{schwartz}, where
lack of a recursive type system for machine code was cited as an important source of imprecision.
Since recursive data structures are relatively common, it is desirable that a type-inference scheme for machine code be able to represent and infer recursive types natively.

\subsection{Offset and Reinterpreted Pointers}
\label{ptr-to-member}

Unlike in source code, there is no syntactic distinction in machine code between a pointer-to-{\verb|struct|} and
a pointer-to-first-member-of-{\verb|struct|}.
For
example, if $X$ has type 
{\verb|struct|} {\verb|{ char*, FILE*, size_t }*|} on a 32-bit platform, then it should be possible to infer that $X+4$ can be safely passed to {\verb|fclose|}; conversely, if $X+4$ is passed
to {\verb|fclose|} we may need to 
infer that $X$ points to a structure that, at offset 4, contains a {\verb|FILE*|}.
This affects the typing of local structures, as well: a structure on the
stack may be manipulated using a pointer to its starting address or by manipulating the members 
directly, e.g., through the frame pointer.

These idioms, along with casts from {\verb|derived*|} to {\verb|base*|}, fall under the
general class of {\em physical} \citep{siff1999coping} or {\em non-structural}
\citep{palsberg1997type} subtyping. In Retypd, we model these forms of
subtyping using type scheme specialization (\autoref{sketch-sec}).
Additional hints about the extent
of local variables are found using data-delineation analysis \citep{param-offset}.

\subsection{Disassembly Failures}
\label{dis-fail}
The problem of producing correct disassembly for stripped binaries is equivalent to the halting problem. As a result,
we can never assume that our reconstructed program representation will be perfectly correct.
Even sound analyses built on top of an unsound program representation may exhibit inconsistencies and quirks.

Thus, we must be careful that incorrect disassembly or analysis results from one part of the binary
will not influence the correct type results we may have gathered for the rest of the binary. 
Type systems that model value assignments 
as type unifications are vulnerable to over-unification issues caused by bad IR.  Since unification is
non-local, bad constraints in one part of the binary can degrade {\em all} type results.

Another instance of this problem arises from the use of register parameters.
Although the x86 {\tt cdecl} calling convention uses the stack for parameter passing, most optimized
binaries will include many functions that pass parameters in registers for speed.  Often, these functions do not conform to any standard calling convention. Although we work hard to ensure that only
true register parameters are reported, conservativeness demands the occasional false positive.

Type-reconstruction methods that are based on unification are generally sensitive to precision
loss due to false-positive register parameters.
A common case is the ``{\verb|push ecx|}'' idiom that reserves space for a single
local variable in the stack frame of a function $f$.  If {\verb|ecx|} is incorrectly viewed as a
register parameter of $f$ in a unification-based
scheme, whatever type variables are bound to {\verb|ecx|} at each
callsite to $f$ will be mistakenly unified.  In our early experiments, we found these overunifications
to be a persistent and hard-to-diagnose source of imprecision.

In our early unification-based experiments, mitigation heuristics against overunification quickly
ballooned into a disproportionately large and unprincipled component of type analysis.
We designed Retypd's subtype-based constraint system to avoid the need for such ad-hoc
prophylactics against overunification.

\subsection{Cross-casting and Bit Twiddling}
\label{twiddle}
Even at the level of source code, there are already many type-unsafe idioms in
common use.  Most of these idioms operate by directly manipulating the bit representation of
a value, either to encode additional information or to perform computations that are not
possible using the type's usual interface.  Some common examples include

\begin{itemize}[noitemsep]
\item hashing values by treating them as untyped bit blocks \citep{tr1},
\item stealing unused bits of a pointer for tag information, such as whether a thunk has been
evaluated \citep{spj},
\item reducing the storage requirements of a doubly-linked list by \textsc{xor}-combining the
{\verb|next|} and {\verb|prev|} pointers, and
\item directly manipulating the bit representation of another type, as in the {\verb|quake3|}
inverse square root trick \citep{inv-sqrt}.
\end{itemize}

Because of these type-unsafe idioms, it is important that a type-inference scheme continues to produce useful results even in the presence of apparently contradictory constraints.  We handle this situation in three ways:
\begin{enumerate}[noitemsep]
\item separating the phases of constraint entailment, solving, and consistency checking,
\item modeling types with sketches (\autoref{sketch-sec}) that carry more information than C types, and
\item using unions to combine types with otherwise incompatible capabilities
(e.g., $\tau$ is both {\verb|int|}-like and pointer-like).
\end{enumerate}

\subsection{Incomplete Points-to Information}
 Degradation of points-to accuracy on large programs has been identified
as a source of type-precision loss in other systems \citep{2ndwrite}.
Our algorithm can provide high-quality types even in the absence of points-to information.  Precision can be further improved by increasing points-to knowledge via machine-code analyses such as VSA \citep{Balakrishnan2004}, but good results are already attained with no points-to analysis beyond the simpler problem of tracking the
stack pointer.

\subsection{Ad-hoc Subtyping}
\label{ad-hoc}
Programs may define an ad-hoc type hierarchy via typedefs.  This idiom appears in the Windows API, where
a variety of handle types are all defined as typedefs of {\tt void*}.  Some of the handle types are to be
used as subtypes of other handles; for example, a GDI handle ({\tt HGDI}) is a generic handle used to
represent any one of the more specific {\tt HBRUSH}, {\tt HPEN}, {\it etc}.
In other cases, a typedef may indicate a {\em supertype}, as in {\verb|LPARAM|} or {\verb|DWORD|};
 although these are typedefs of {\verb|int|}, they have the intended semantics of a generic 32-bit type,
 which in different contexts may be used as a pointer, an integer, a flag set, and so on.

To accurately track ad-hoc hierarchies requires a type system based around subtyping rather than
unification. Models for common API type hierarchies are useful; still better is the ability for the end
user to define or adjust the initial type hierarchy at run time.  We support this feature by parameterizing
the main type representation by an uninterpreted lattice $\Lambda$, as described in \autoref{sketch-sec}.

\section{The Type System}

\begin{table}
\centering
\caption{Example field labels (type capabilities) in $\Sigma$.}
\label{label-table}
\vspace{0.5ex}
\begin{tabular}{@{}rcl@{}}
\toprule
Label & Variance & Capability \\
\midrule
$\mathsf{.in}_L$ & $\ominus$ & Function with input in location $L$. \\
$\mathsf{.out}_L$ & $\oplus$ & Function with output in location $L$. \\
$\mathsf{.load}$ & $\oplus$ & Readable pointer.\\
$\mathsf{.store}$ & $\ominus$ & Writable pointer.\\
$.\sigma\mathsf{N@k}$ & $\oplus$ & Has $N$-bit field at offset $k$. \\
\bottomrule
\end{tabular}
\end{table}

\begin{figure*}
\hrule
\vspace{0.1in}
\begin{minipage}{0.43\linewidth}
\resizebox{0.9\textwidth}{!}{
\begin{minipage}{\linewidth}
\centering
{\it Derived Type Variable Formation}
\[
\inferrule
  {\alpha \subtype \beta}
  {\term{\alpha}}
  \; {\textsc {(T-Left)}}
\hspace{0.25in}
\inferrule
  {\alpha \subtype \beta, \quad \term{\alpha.\ell}}
  {\term{\beta.\ell}}
  \; {\textsc {(T-InheritL)}}
\]\[
\inferrule
  {\alpha \subtype \beta}
  {\term{\beta}}
  \; {\textsc {(T-Right)}}
\hspace{0.25in}
\inferrule
  {\alpha \subtype \beta, \quad \term{\beta.\ell}}
  {\term{\alpha.\ell}}
  \; {\textsc {(T-InheritR)}}
\]\[
\inferrule
  {\term{\alpha.\ell}}
  {\term{\alpha}}
  \; {\textsc {(T-Prefix)}}
\]
\end{minipage}}
\end{minipage}
\vrule
\begin{minipage}{0.56\linewidth}
\resizebox{0.9\textwidth}{!}{
\begin{minipage}{\linewidth}
\centering
{\it Subtyping}
\[
\hspace{0.5in}
\inferrule
  {\term{\alpha}}
  {\alpha \subtype \alpha}
  \; {\textsc {(S-Refl)}}
\hspace{0.2in}
\inferrule
  {\alpha \subtype \beta, \quad \term{\beta.\ell}, \quad \langle \ell \rangle = \oplus}
  {\alpha.\ell \subtype \beta.\ell}
  \; {\textsc {(S-Field$_\oplus$)}}
\]\[
\inferrule
  {\alpha \subtype \beta ,\quad \beta \subtype \gamma}
  {\alpha \subtype \gamma}
  \; {\textsc {(S-Trans)}}
\hspace{0.2in}
\inferrule
  {\alpha \subtype \beta, \quad \term{\beta.\ell}, \quad \langle \ell \rangle = \ominus}
  {\beta.\ell \subtype \alpha.\ell}
  \; {\textsc {(S-Field$_\ominus$)}}
\]\[
\inferrule
  {\term{\alpha.\mathsf{load}}, \quad \term{\alpha.\mathsf{store}}}
  {\alpha.\mathsf{store} \subtype \alpha.\mathsf{load}}
  \; {\textsc {(S-Pointer)}}
\]
\end{minipage}}
\end{minipage}

\caption{Deduction rules for the type system.  $\alpha, \beta, \gamma$ represent derived
type variables; $\ell$ represents a label in $\Sigma$.}
\label{deduction-rules}
\end{figure*}
 
The type system used by Retypd is based around the inference of
{\em recursively constrained type schemes} (\autoref{section-syntax}). Solutions
to constraint sets are modeled by {\em sketches} (\autoref{sketch-sec});
the sketch associated to a value
consists of a record of capabilities which that value holds, such
as whether it can be stored to, called, or accessed at a certain offset.
Sketches also include markings drawn from a customizable lattice
$(\Lambda, \vee, \wedge, <:)$,
used to propagate high-level information such as typedef
names and domain-specific purposes during type inference.

Retypd also supports recursively constrained type schemes that abstract
over the set of types subject to a constraint set $\mathcal{C}$.
The language of type constraints used by Retypd is weak
enough that for any constraint set $\mathcal{C}$, satisfiability of 
$\mathcal{C}$ can be reduced (in cubic time) to checking a set of
scalar constraints $\kappa_1 <: \kappa_2$,
 where $\kappa_i$ are constants belonging to $\Lambda$.

Thanks to the reduction of constraint satisfiability to scalar constraint checking,
we can omit expensive satisfiability checks during type inference.
Instead, we delay the  check
until the final stage when internal types are converted to C types for display,
providing a natural place to instantiate union types that resolve any inconsistencies.
Since compiler optimizations and type-unsafe idioms in the original source
frequently lead to program fragments with unsatisfiable type constraints
(\autoref{dis-fail}, \autoref{twiddle}),
this trait is  particularly desirable.

\subsection {Syntax: the Constraint Type System}
\label{section-syntax}

Throughout this section, we fix a set $\mathcal{V}$ of {\em type variables}, an alphabet
$\Sigma$ of {\em field labels}, and a function $\langle \cdot \rangle : \Sigma \to \{ \oplus, \ominus \}$
denoting the {\em variance} (\autoref{variance-def}) of each label.  We do not require the set $\Sigma$ to be finite.
Retypd makes use of a large set of labels; for simplicity, we will focus on those in
\autoref{label-table}.

Within $\mathcal{V}$, we assume there is a distinguished set of {\em type constants}.  These type constants
are symbolic representations $\overline{\kappa}$ of elements $\kappa$ belonging to some lattice, but
are otherwise uninterpreted.  It is usually sufficient to think of the type constants as type names or
semantic tags.

\begin{definition}
  A {\em derived type variable} is an expression of the form $\alpha w$ with
  $\alpha \in \mathcal{V}$ and $w \in \Sigma^*$.
\end{definition}

\begin{definition}
\label{variance-def}
  The variance of a label $\ell$ encodes the subtype relationship between $\alpha.\ell$
  and $\beta.\ell$ when $\alpha$ is a subtype of $\beta$, formalized in rules
  $\textsc{S-Field}_\oplus$ and $\textsc{S-Field}_\ominus$ of \autoref{deduction-rules}.
  The variance function $\langle \cdot \rangle$ can be extended to $\Sigma^*$ by defining
$\langle \varepsilon \rangle = \oplus$ and
$\langle xw \rangle = \langle x \rangle \cdot \langle w \rangle$,
where $\{\oplus, \ominus\}$ is the sign monoid with
  $\oplus \cdot \oplus = \ominus \cdot \ominus = \oplus$ and 
  $\oplus \cdot \ominus = \ominus \cdot \oplus = \ominus$.
  A word $w \in \Sigma^*$
    is
  called {\em covariant} if $\langle w \rangle = \oplus$, or {\em contravariant} if
  $\langle w \rangle = \ominus$.
\end{definition}

\begin{definition}
\label{constraint-def}
  Let $\mathcal{V} = \{\alpha_i\}$ be a set of base type variables.
  A {\em constraint} is an expression of the form $\term{X}$ (``existence of
  the derived type variable $X$'') or $X \subtype Y$ (``$X$ is a subtype of $Y$''), where
  $X$ and $Y$ are derived type variables.
A {\em constraint set over $\mathcal{V}$} is a finite
  collection $\mathcal{C}$ of constraints, where the type variables in each constraint are
  either type constants or members of $\mathcal{V}$. We will say that $\mathcal{C}$ {\em entails} $c$,
  denoted $\mathcal{C} \proves c$, if $c$ can be derived from the constraints in $\mathcal{C}$
  using the deduction rules of \autoref{deduction-rules}. We also allow projections:
  given a constraint set $\mathcal{C}$ with free variable $\tau$, the projection
  $\exists \tau . \mathcal{C}$ binds $\tau$ as an ``internal'' variable in the constraint
  set. See $\tau$ in \autoref{type-recovery-example} for an example or, for a more in-depth
  treatment of constraint projection, see \citet{su2002first}.
\end{definition}

The field labels used to form derived type variables are meant to represent {\em capabilities} of
a type.  For example, the constraint $\term{\alpha.\mathsf{load}}$ means
$\alpha$ is a readable pointer, and the derived type variable $\alpha.\mathsf{load}$
represents the type of the memory region obtained by loading from $\alpha$.

Let us briefly see how operations in the original program translate to type constraints,
using C-like pseudocode for clarity.
The full conversion from disassembly to type constraints is described in 
\aautoref{sec-interp}{Appendix A of \citep{arxiv}}.

\smallskip
\noindent
{\bf Value copies:}
When a value is moved between program variables in an assignment like {\verb|x := y|}, we
make the conservative assumption that the type of {\verb|x|} may be upcast to a supertype of
{\verb|y|}.
We will generate a constraint of the form $Y \subtype X$.

\smallskip
\noindent
{\bf Loads and stores:}
Suppose that {\verb|p|} is a pointer to a 32-bit type,
and a value is loaded into {\verb|x|} by the assignment {\verb|x := *p|}.
Then we will generate a constraint of the form $P.\mathsf{load}.\sigma\mathsf{32@0} \subtype X$.
Similarly, a store {\verb|*q := y|} results in the constraint
$Y \subtype Q.\mathsf{store}.\sigma\mathsf{32@0}$.

In some of the pointer-based examples in this paper we omit the final $.\sigma\mathsf{N@k}$ access after a
$.\mathsf{load}$ or $.\mathsf{store}$ to simplify the presentation.

\smallskip
\noindent
{\bf Function calls:}
Suppose the function $f$ is invoked by
{\verb|y := f(x)|}. We will generate the constraints $X \subtype F.\mathsf{in}$ and
$F.\mathsf{out} \subtype Y$, reflecting the flow of actuals to and from formals. Note that
if we define $A.\mathsf{in} = X$ and $A.\mathsf{out} = Y$ then the two constraints are equivalent
to $F \subtype A$ by the rules of \autoref{deduction-rules}. This encodes the fact that the
called function's type must be at least as specific as the type used at the callsite.

One of the primary goals of our type-inference engine is to associate to each procedure a
most-general type scheme.
\begin{definition}
A {\em type scheme} is an expression of the form $\scheme{\overline{\alpha}}
{\mathcal{C}}{\alpha_1}$
where $\forall \overline{\alpha} = \forall \alpha_1 \dots \forall \alpha_n$
is quantification over a set
of type variables, and $\mathcal{C}$ is a constraint set over $\{\alpha_i\}_{i=1..n}$.
\end{definition}
Type schemes provide a way of encoding the pre- and post-conditions that a function places on
the types in its calling context.  Without the constraint sets, we would only be able to 
represent conditions of the form ``the input must be a subtype of $X$'' and ``the output must be
a supertype of $Y$''.  The constraint set $\mathcal{C}$ can be used to encode more
interesting type relations between inputs and outputs, as in the case of
{\verb|memcpy|} (\autoref{polyfun}).  For example, a function that 
returns the second 4-byte element from a {\verb|struct*|} may have the type scheme
$\scheme{\tau}{(\tau.\mathsf{in}.\mathsf{load}.\sigma\mathsf{32@4} \subtype \tau.\mathsf{out})}{\tau}$.

\subsection{Deduction Rules}
\label{deduction-rule-sec}

The deduction rules for our type system appear in \autoref{deduction-rules}.
Most of the rules are self-evident under the interpretation in
\autoref{constraint-def}, but a few require some additional motivation.

\smallskip
\noindent
{\bf $\textsc{S-Field}_\oplus$ and $\textsc{S-Field}_\ominus$:}
These rules ensure that field labels act as co- or contra-variant type operators,
generating subtype relations between derived type variables from subtype relations between
the original variables.

\smallskip
\noindent
{\bf $\textsc{T-InheritL}$ and $\textsc{T-InheritR}$:}
The rule $\textsc{T-InheritL}$
should be uncontroversial, since a subtype should have all capabilities of its supertype.
The rule $\textsc{T-InheritR}$ is more unusual since it moves capabilities in the other
direction; taken together, these rules require that
two types in a subtype relation must have exactly the same set of capabilities.
This is a form of structural typing, ensuring that comparable types have the
same shape.

Structural typing appears to be at odds with the need to cast more capable objects to
less capable ones, as described in \autoref{ptr-to-member}.  Indeed, $\textsc{T-InheritR}$
eliminates the possibility of
forgetting capabilities during value assignments.  But we still maintain this capability at procedure
invocations due to our use of  polymorphic type schemes.  An
explanation of how type-scheme instantiation enables us to forget
fields of an object appears in \autoref{narrowing}, with more details in \aautoref{reverse-dns}{\S E.1.2 of \citep{arxiv}}.

These rules ensure that Retypd can perform ``iterative variable recovery''; lack of
iterative variable recovery was cited by the creators of the Phoenix decompiler \citep{schwartz}
as a common cause of incorrect decompilation when using TIE \citep{tie} for type recovery.

\smallskip
\noindent
{\bf $\textsc{S-Pointer}$:} This rule is a consistency condition ensuring that
the type that can be loaded from a pointer is a supertype of the type that can be stored to
a pointer.  Without this rule, pointers would provide a channel for subverting the
type system.  An example of how this rule is used in practice
appears in \autoref{model-ptrs}.
\smallskip

The deduction rules of \autoref{deduction-rules} are simple enough that each proof may
be reduced to a normal form (see \aautoref{normal-form-thm}{Theorem B.1 in \citep{arxiv}}).  An encoding of the normal forms as
transition sequences in a modified pushdown system is used to provide a compact representation
of the entailment closure $\overline{\mathcal{C}} = \{ c ~|~ \mathcal{C} \proves c\}$.
The pushdown system modeling $\overline{\mathcal{C}}$ is queried and manipulated to 
provide most of the interesting type-inference functionality.  An outline of this functionality
appears in \autoref{pds-section}.

\subsection{Modeling Pointers}
\label{model-ptrs}
To model pointers soundly in the presence of subtyping, we found that our initial na\"ive
approach suffered from unexpected difficulties when combined with subtyping.  Following the
C type system, it seemed natural to model pointers by introducing an injective unary
type constructor $\mathsf{Ptr}$, so that $\mathsf{Ptr}(T)$ is the type of pointers
to $T$. In a unification-based type system, this approach works as expected.

In the presence of subtyping, a new issue arises. Consider the two programs in \autoref{aliased-ptrs}.
Since the type variables $P$ and $Q$ associated to {\verb|p|}, {\verb|q|} can be seen to be pointers,
we can begin by writing $P = \mathsf{Ptr}(\alpha), Q = \mathsf{Ptr}(\beta)$.
The first program will generate the constraint set
$\mathcal{C}_1 = \{ \mathsf{Ptr}(\beta) \subtype \mathsf{Ptr}(\alpha),$ $ X \subtype \alpha,$ 
$ \beta \subtype Y\}$
while the second generates
$\mathcal{C}_2 = \{ \mathsf{Ptr}(\beta) \subtype \mathsf{Ptr}(\alpha),$ $ X \subtype \beta,$ 
$ \alpha \subtype Y\}$.
Since each program has the effect of copying the value in $x$ to $y$, both constraint sets should
satisfy $\mathcal{C}_i \proves X \subtype Y$.  To do this, the pointer subtype constraint
must entail some constraint on $\alpha$ and $\beta$, but which one?

If we assume that $\mathsf{Ptr}$ is covariant, then $\mathsf{Ptr}(\beta) \subtype \mathsf{Ptr}(\alpha)$
entails $\beta \subtype \alpha$ and so $\mathcal{C}_2 \proves X \subtype Y$,
but $\mathcal{C}_1 \notproves X \subtype Y$.  On the other hand, if we make $\mathsf{Ptr}$ contravariant
then $\mathcal{C}_1 \proves X \subtype Y$ but $\mathcal{C}_2 \notproves X \subtype Y$.

It seems that our only recourse is to make subtyping degenerate to type equality under $\mathsf{Ptr}$:
we are forced to declare that $\mathsf{Ptr}(\beta) \subtype \mathsf{Ptr}(\alpha) \proves \alpha = \beta$,
which of course means that $\mathsf{Ptr}(\beta) = \mathsf{Ptr}(\alpha)$ already.
This is a catastrophe for subtyping as used in machine code, since many natural subtype relations are
mediated through pointers.  For example, the unary $\mathsf{Ptr}$ constructor cannot handle the
simplest kind of C++ class subtyping, where a derived class physically extends a base class by appending
new member variables.

The root cause of the difficulty seems to be in conflating two capabilities that (most) pointers have:
the ability to be written through and the ability to be read through.  In Retypd, these two capabilities
are modeled using different field labels $\mathsf{.store}$ and $\mathsf{.load}$.  The $\mathsf{.store}$
label is contravariant, while the $\mathsf{.load}$ label is covariant.

\begin{figure}
  \hrule
  \vrule
\begin{minipage}{0.48\linewidth}
\begin{code}[frame=none]
    f() {
       p =  q;
      *p =  x;
       y = *q;
    }
\end{code}
\end{minipage}
\vrule
\begin{minipage}{0.48\linewidth}
\begin{code}[frame=none]
    g() {
       p =  q;
      *q =  x;
       y = *p;
    }
\end{code}
\end{minipage}
\hspace{2pt}
\vrule
\vspace{-4pt}
\caption{Two programs, each mediating a copy from {\small{\tt x}} to {\small{\tt y}} through a pair of aliased pointers.}
\label{aliased-ptrs}
\end{figure}

To see how the separation of pointer capabilities avoids the loss of precision suffered by $\mathsf{Ptr}$,
we revisit the two example programs.  The first generates the constraint set
\[\mathcal{C}'_1 = \left\{ Q \subtype P,\quad X \subtype P\mathsf{.store},\quad Q\mathsf{.load} \subtype Y \right\}\]
By rule $\textsc{T-InheritR}$ we may conclude that $Q$ also has a field of type $\mathsf{.store}$.  By
$\textsc{S-Pointer}$ we can infer that $Q\mathsf{.store} \subtype Q\mathsf{.load}$.  Finally, since $\mathsf{.store}$
is contravariant and $Q \subtype P$, $\textsc{S-Field}_\ominus$ says we also have
$P\mathsf{.store} \subtype Q\mathsf{.store}$.  Putting these parts together gives the subtype chain
\[X \subtype P\mathsf{.store} \subtype Q\mathsf{.store} \subtype Q\mathsf{.load} \subtype Y\]

\noindent
The second program generates the constraint set
\[\mathcal{C}'_2 = \left\{ Q \subtype P,\quad X \subtype Q\mathsf{.store},\quad
P\mathsf{.load} \subtype Y \right\}\]
Since $Q \subtype P$ and $P$ has a field $\mathsf{.load}$, we conclude that $Q$ has a $\mathsf{.load}$ field
as well.  Next, $\textsc{S-Pointer}$ requires that $Q\mathsf{.store} \subtype Q\mathsf{.load}$.
Since $\mathsf{.load}$ is covariant, $Q \subtype P$ implies that $Q\mathsf{.load} \subtype P\mathsf{.load}$.
This gives the subtype chain
\[X \subtype Q\mathsf{.store} \subtype Q\mathsf{.load} \subtype P\mathsf{.load} \subtype Y\]

By splitting out the read- and write-capabilities of a pointer, we can achieve a sound account of 
pointer subtyping that does not degenerate to type equality.  Note the importance
of the consistency condition $\textsc{S-Pointer}$: this rule ensures that writing through a pointer
and reading the result cannot subvert the type system.

The need for separate handling of read- and write-capabilities in a mutable reference has been rediscovered
multiple times.  A well-known instance is the covariance of the array type constructor in Java
and C\#, which can
cause runtime type errors if the array is mutated; in these languages, the read capabilities are soundly modeled only
by sacrificing soundness for the write capabilities.

\subsection{Non-structural Subtyping and $\textsc{T-InheritR}$}
\label{narrowing}

It was noted in \autoref{deduction-rule-sec} that the rule $\textsc{T-InheritR}$ leads to a
system with a form of structural typing: any two types in a subtype relation must have the same
capabilities. Superficially, this seems
problematic for modeling typecasts that forget about fields, such as a cast from
{\verb|derived*|} to {\verb|base*|} when {\verb|derived*|} has additional
fields (\autoref{ptr-to-member}).

The missing piece that allows us to effectively forget capabilities is instantiation of callee type schemes at a callsite.  To demonstrate how polymorphism enables forgetfulness, consider the 
example type scheme $\scheme{F}{(\exists \tau . \mathcal{C})}{F}$ from \autoref{type-recovery-example}.
The function {\verb|close_last|} can be invoked by providing any actual-in type $\alpha$,
such that $\alpha \subtype F.\mathsf{in}_{\mathsf{stack}0}$; in particular, $\alpha$ can have
{\em more} fields than those required by $\mathcal{C}$.  We simply select a
more capable type for the existentially-quantified type variable $\tau$ in $\mathcal{C}$.
In effect, we have
used {\em specialization} of polymorphic types to model {\em non-structural}
subtyping idioms, while {\em subtyping} is used only to model {\em structural} subtyping idioms.  This restricts our introduction of non-structural subtypes to points where a
type scheme is instantiated, such as at a call site.

\subsection {Semantics: the Poset of Sketches}
\label{sketch-sec}
\begin{figure}
\[\begin{tikzpicture}[scale=0.7]
\node[shape=circle,draw] (root)   at (-1,1.25) {\scriptsize$\top$};
\node[shape=circle,draw] (in)     at (1,2) {\scriptsize$\top$};
\node[shape=circle,draw] (out)    at (1,0.5) {\scriptsize$\alpha$};
\node[shape=circle,draw] (loaded) at (3,2) {\scriptsize$\top$};
\node[shape=circle,draw] (atZero) at (5,3) {\scriptsize$\top$};
\node[shape=circle,draw] (atFour) at (5,1) {\scriptsize$\beta$};
\node (recurse) at (6.5,3) {};
\draw (root) -- (in) -- (loaded) -- (atZero) -- (recurse);
\draw                   (loaded) -- (atFour);
\draw (root) -- (out);
\draw[dashed] (2.4,2) -- (3.5,3.8) -- (8,3.8) -- (8,0.2) -- (3.5,0.2) -- cycle;
\draw (6.3,3) -- (6.5,2.3) -- (7.8,2.3) -- (7.8,3.7) -- (6.5,3.7) -- cycle;
\node at (7.5,0.6) {{$A$}};
\node at (7,3) {{$A$}};
\node[rotate=25] at (-0.1,2) {\scriptsize${\mathsf{in}_{\mathsf{stack}0}}$};
\node[rotate=-25] at (-0.1,0.5) {\scriptsize{$\mathsf{out}_{\mathsf{eax}}$}};
\node at (1.9, 2.2) {\scriptsize{$\mathsf{load}$}};
\node[rotate=30]  at (3.9, 2.8) {\scriptsize{$\sigma\mathsf{32@0}$}};
\node[rotate=-30] at (3.9, 1.2) {\scriptsize{$\sigma\mathsf{32@4}$}};
\node at (5.9,3.2) {\scriptsize{$\mathsf{load}$}};
\node at (0,3.5) {\scriptsize{$\alpha = \mathsf{int} \vee ^\#\!\mathsf{SuccessZ}$}};
\node at (0.35,3) {\scriptsize{$\beta = \mathsf{int} \wedge ^\#\!\mathsf{FileDescriptor}$}};
\end{tikzpicture}\]
\caption{A sketch instantiating the type scheme in \autoref{type-recovery-example}.}
\label{sketch-sketch}
\end{figure}

The simple type system defined by the deduction rules of \autoref{deduction-rules}
defines the {\em syntax} of legal derivations in our type system.  The constraint
solver of \autoref{pds-section} is designed to find a simple representation
for all conclusions that can be derived from a set of type constraints.  Yet there
is no notion of what a type {\em is} inherent to the deduction rules of
\autoref{deduction-rules}.  We have defined the rules of the game, but not
the equipment with which it should be played.

We found that introducing C-like entities at the level of constraints or types resulted in
too much loss of precision when working with the challenging examples described in
\autoref{challenge-sec}. Consequently we developed the notion of a {\em sketch},
a kind of regular tree labeled with elements of an auxiliary lattice $\Lambda$.
Sketches are related to the recursive types studied by \citet{cardelli} and \citet{kozen},
but do not depend on {\it a priori} knowledge of the ranked alphabet of type constructors.

\begin{definition}
\label{sketch-def}
A {\em sketch} is a (possibly infinite) tree $T$ with edges labeled by elements of $\Sigma$ and nodes marked with elements of a lattice $\Lambda$, such that $T$ only has finitely many subtrees up to
labeled isomorphism. By collapsing isomorphic subtrees, we can represent sketches as deterministic
finite state automata with
each state labeled by an element of $\Lambda$. The set of sketches admits a lattice structure, with
operations described by \aautoref{sketch-lattice}{Figure 18 in \citep{arxiv}}.

The lattice of sketches serves as the model in which we interpret type constraints. The
interpretation of the constraint $\term{\alpha.u}$ is ``the sketch $S_\alpha$ admits a path
from the root with label sequence $u$'', and $\alpha.u \subtype \beta.v$ is interpreted as
``the sketch obtained from $S_\alpha$ by traversing the label sequence $u$ is a subsketch
(in the lattice order) of the sketch obtained from $S_\beta$ by traversing the sequence $v$.''
\end{definition}

The main utility of sketches is that they are nearly a free tree model
\citep{pottier2005essence} of the constraint language.  
Any constraint set $\mathcal{C}$ is satisfiable over the lattice of sketches, as long as
$\mathcal{C}$ cannot prove an impossible subtype relation in the auxiliary lattice $\Lambda$.
In particular, we can always solve the fragment of $\mathcal{C}$ that does not reference
constants in $\Lambda$.  Stated operationally, we can always recover the tree structure of
sketches that potentially solve $\mathcal{C}$. This observation is formalized by the
following theorem:
\begin{theorem}
\label{recognize-var}
Suppose that $\mathcal{C}$ is a constraint set over the variables $\{\tau_i\}_{i \in I}$.
Then there exist sketches $\{S_i\}_{i \in I}$, such that $w \in S_i$ if and only if
$\mathcal{C} \proves \term{\tau_i.w}$.
\end{theorem}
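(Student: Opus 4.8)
The plan is to reduce the theorem to a statement about one regular language per variable. For each $i$, set $L_i := \{\, w \in \Sigma^* \mid \mathcal{C} \proves \term{\tau_i.w}\,\}$. I will argue that (a) $L_i$ is prefix-closed, (b) every label occurring in a word of $L_i$ already occurs in $\mathcal{C}$, so $L_i$ lives over a finite sub-alphabet, and (c) $L_i$ is regular. Granting these, $S_i$ is obtained by taking the minimal deterministic automaton for $L_i$, unrolling it into a tree, collapsing isomorphic subtrees, and marking every node with $\top$; the lattice markings play no role in the statement, which only pins down the path set $\{w : w \in S_i\}$. Claim (a) is immediate from $\textsc{T-Prefix}$ (instantiate $\alpha := \tau_i.w$: if $\mathcal{C}\proves\term{\tau_i.w\ell}$ then $\mathcal{C}\proves\term{\tau_i.w}$), and (b) follows by a routine induction on derivations, since in every rule of \autoref{deduction-rules} each label appearing in the conclusion already appears in some premise (the intermediate variable $\beta$ of $\textsc{S-Trans}$ contributes none). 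So the real content is (c).

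For (c) I would go through the pushdown encoding of the entailment closure sketched in \autoref{deduction-rule-sec}. First normalize: by the normal-form theorem for derivations (Theorem B.1 of \citep{arxiv}), every proof of $\mathcal{C}\proves\term{\tau_i.w}$ may be taken in a stereotyped shape. Then build a pushdown system $\mathcal{P}_{\mathcal{C}}$ whose control states are the finitely many base type variables and type constants of $\mathcal{C}$ paired with one bit recording the accumulated variance $\langle\cdot\rangle$, and whose stack holds an orientation of the suffix $w$. Each constraint $X \subtype Y$ of $\mathcal{C}$ together with the field and inheritance rules becomes a family of push/pop transitions — a covariant label extends the stack, a contravariant label extends it while flipping the variance bit, $\textsc{S-Refl}$ and $\textsc{S-Trans}$ are just reachability, and the finitely many eligible instances of $\textsc{S-Pointer}$ add finitely many $\varepsilon$-moves. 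A two-part induction (on run length one way, on normal-form derivations the other) then shows that $\mathcal{C}\proves\term{\tau_i.w}$ holds exactly when a distinguished accepting configuration over control state $\tau_i$ with stack $w$ is reachable from the initial configurations read off the $\term{\cdot}$ and $X\subtype Y$ atoms of $\mathcal{C}$. Regularity of $L_i$ then follows from the classical fact that $\mathit{post}^*$ of a regular set of configurations in a pushdown system is again effectively regular; slicing by ``control state $= \tau_i$'' yields a finite automaton for $L_i$, and determinizing and collapsing gives $S_i$.

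I expect the encoding step to be the main obstacle, and within it the two rules that move information against the grain of the subtype direction: $\textsc{T-InheritR}$, which pushes a capability from a supertype down onto a subtype, and $\textsc{S-Field}_\ominus$, which reverses a subtype relation under a contravariant label. These are precisely what could turn a would-be-finite ``state'' into unbounded context. The device for taming them is to keep the variance bit in the control state and to orient the stack so that such reversals become ordinary single-symbol pushes and pops; the delicate point is verifying that normal-form derivations apply these rules in an order compatible with last-in-first-out stack access — which is exactly the guarantee the normal-form theorem is designed to provide. Once the simulation is pinned down in both directions, everything else — determinization, collapsing isomorphic subtrees (whose finiteness is the regularity just established), assigning node markings, and reading off the trees $S_i$ — is bookkeeping.
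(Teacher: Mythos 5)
Your reduction to the regularity of $L_i = \{\, w \mid \mathcal{C} \proves \term{\tau_i.w} \,\}$ is a reasonable skeleton, and points (a) and (b) are fine. But the route you take for (c) has a concrete gap, and it sits exactly where you flag the argument as delicate. The rule $\textsc{S-Pointer}$ is not ``finitely many eligible instances'' contributing ``finitely many $\varepsilon$-moves'': it is a rule schema $\alpha.\mathsf{store} \subtype \alpha.\mathsf{load}$ ranging over \emph{all} derived type variables $\alpha$, and in the presence of recursive constraints (e.g.\ $\tau.\mathsf{load}.\sigma\mathsf{32@0} \subtype \tau$) it must fire at configurations whose stacks are unboundedly long, not merely at the finitely many control states. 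Encoding it as a finite set of $\varepsilon$-transitions between control states therefore misses derivations, and the classical ``$\mathit{post}^*$ of a regular configuration set is regular'' theorem does not directly apply, because the transition relation is no longer finitely presented. This is precisely why the paper introduces \emph{unconstrained} pushdown systems (with infinitely many rules) and tames them by lazily instantiating $\textsc{S-Pointer}$ inside a modified Caucal saturation; your appeal to the normal-form theorem to make ``reversals LIFO-compatible'' does not by itself supply that missing construction.

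Separately, even once repaired, you are using far heavier machinery than this statement needs: your plan essentially re-derives \autoref{mod-sat} (regularity of the full entailment relation, built via the transducer $Q$) and then projects. Because the present theorem concerns only the capability language --- which words $w$ exist, not which subtype judgements hold --- the paper instead \emph{symmetrizes} $\subtype$: form a finite labeled graph on the derived type variables occurring in $\mathcal{C}$ together with their prefixes, quotient by the Steensgaard-style congruence generated by $n(\alpha) \sim n(\beta)$ for each axiom $\alpha \subtype \beta$, propagated along out-edges with equal labels (identifying $\mathsf{.load}$ with $\mathsf{.store}$ for propagation purposes, which absorbs $\textsc{S-Pointer}$), and read $L_i$ off as the set of path labels from the class of $\tau_i$ in the finite quotient. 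Regularity is then immediate, in almost linear time, with no pushdown systems at all.
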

\begin{proof}
The idea is to symmetrize $\sqsubseteq$ using an algorithm that is similar in
spirit to Steensgaard's
method of almost-linear-time pointer analysis \citep{steensgaard}.
Begin by forming a graph with one
node $n(\alpha)$ for each
derived type variable appearing in $\mathcal{C}$, along with each of its prefixes.  Add a labeled
edge $n(\alpha) \stackrel{\ell}{\to} n(\alpha.\ell)$  for each derived type variable $\alpha.\ell$ to
form a graph $G$.  Now quotient $G$ by the equivalence relation $\sim$ defined by $n(\alpha) \sim
n(\beta)$ if $\alpha \subtype \beta \in \mathcal{C}$, and $n(\alpha') \sim n(\beta')$ whenever there are edges
$n(\alpha) \stackrel{\ell}{\to} n(\alpha')$ and $n(\beta) \stackrel{\ell'}{\to} n(\beta')$ in $G$
with $n(\alpha) \sim n(\beta)$ where either $\ell = \ell'$ or $\ell = \mathsf{.load}$ and $\ell' = 
\mathsf{.store}$.

By construction, there exists a path through $G/\!\!\sim$ with label sequence $u$
starting at the equivalence class of $\tau_i$ if and only if
$\mathcal{C} \proves \term{\tau_i.u}$; the (regular) set of all
such paths yields the tree structure of $S_i$.
\end{proof}

Working out the lattice elements that should label $S_i$ is a trickier problem; the basic idea is to use
the same automaton $Q$ constructed during constraint simplification (\aautoref{mod-sat}{Theorem 5.1 in \citep{arxiv}}) to answer queries
about which type constants are upper and lower bounds on a given derived type variable.
The full algorithm is listed in \aautoref{shadowing}{\S D.4 of \citep{arxiv}}.

\smallskip
In Retypd, we use a large auxiliary lattice $\Lambda$ containing hundreds of elements that includes
a collection of standard C type names, common typedefs for popular APIs, and user-specified
semantic classes such as $^\#\!\mathsf{FileDescriptor}$ in \autoref{type-recovery-example}.
This lattice helps model ad-hoc subtyping
and preserve high-level semantic type names, as discussed in \autoref{ad-hoc}.

\cbstart
\smallskip
\noindent
{\bf Note.}
Sketches are just one of many possible models for the deduction rules that could be proposed.
A general approach is to fix a poset $(\mathcal{T}, <:)$ of types, interpret
$\subtype$ as $<:$, and interpret co- and contra-variant field labels as
monotone (resp. antimonotone) functions $\mathcal{T} \to \mathcal{T}$.

The separation of syntax from semantics allows for a simple way to parameterize the
type-inference engine by a model of types. By choosing a model $(\mathcal{T}, \equiv)$
with a symmetric relation $\equiv\; \subseteq \mathcal{T} \times \mathcal{T}$, a unification-based
type system similar to SecondWrite \citep{2ndwrite} is generated.
On the other hand, by forming a lattice of type intervals and interval inclusion,
we would obtain a type system similar to TIE \citep{tie} that outputs upper and
lower bounds on each type variable.
\cbend

\section {Analysis Framework}

\subsection{IR Reconstruction}
Retypd is built on top of GrammaTech's machine-code-analysis tool CodeSurfer for Binaries.
CodeSurfer carries out common program analyses on binaries for multiple CPU architectures,
including x86, x86-64, and ARM.  CodeSurfer is used to recover a high-level IR from the
raw machine code; type constraints are generated directly from this IR, and resolved types
are applied back to the IR and become visible to the GUI and later analysis phases.

CodeSurfer achieves platform independence through TSL \citep{lim12}, a language for defining a 
processor's concrete semantics in terms of concrete numeric types and mapping types that
model flag, register, and memory banks.  Interpreters for a given abstract domain are automatically
created from the concrete semantics simply
by specifying the abstract domain $\mathcal{A}$ and an interpretation of the concrete numeric
and mapping types.
Retypd uses CodeSurfer's recovered IR to determine the number and location of inputs and
outputs to each procedure, as well as the program's call graph and per-procedure control-flow
graphs.  An abstract interpreter then generates sets of type constraints from
the concrete TSL instruction semantics.  A detailed account of the abstract semantics
for constraint generation appears in \aautoref{sec-interp}{Appendix A of \citep{arxiv}}.

\subsection{Approach to Type Resolution}

After the initial IR is recovered, type inference proceeds in two stages: first,
type-constraint sets are generated in
a bottom-up fashion over the strongly-connected components of the callgraph.
Pre-computed type schemes for externally linked functions may be inserted at
this stage.
Each constraint set is simplified by eliminating type variables that do not
belong to the SCC's interface; the simplification algorithm is outlined in
\autoref{simplify}.
Once type schemes are available, the 
callgraph is traversed bottom-up, assigning sketches to type variables as
outlined in \autoref{sketch-sec}.
During this stage, type schemes are
specialized based on the calling contexts of each function.
\aautoref{the-algorithms}{Appendix F of \citep{arxiv}} lists the full algorithms for constraint simplification
(\aautoref{type-infer-alg}{\S F.1}) and solving (\aautoref{type-infer-alg2}{\S F.2}).

\subsection{Translation to \texttt{C} Types}
\label{policies}
The final phase of type resolution converts the inferred sketches to C types for presentation
to the user.
Since C types and sketches are not directly comparable, this resolution phase necessarily involves the application of heuristic conversion policies.
Restricting the heuristic policies to a single post-inference phase provides us with the
flexibility to generate high-quality, human-readable C types while maintaining soundness and generality
during type reconstruction.

\begin{example}
A simple example involves the generation of {\verb|const|} annotations on pointers. We decided
on a policy that only introduced {\verb|const|} annotations on function parameters, by
annotating the parameter at location $L$ when the constraint set $\mathcal{C}$
for procedure $p$ satisfies $\mathcal{C} \proves \term{p.\mathsf{in}_L.\mathsf{load}}$ and
$\mathcal{C} \notproves \term{p.\mathsf{in}_L.\mathsf{store}}$.
Retypd appears to be the first machine-code type-inference system to infer {\verb|const|}
annotations; a comparison of our recovered annotations to the original source code appears in
\autoref{const-recovery-stats}.
\label{const-recovery-info}
\end{example}

\begin{example}
A more complex policy is used to decide between union types and generic types when incompatible scalar constraints must be resolved. Retypd merges comparable scalar constraints to form antichains in $\Lambda$; the elements of these antichains are then used for the resulting C \verb|union| type.
\end{example}

\begin{example}
The initial type-simplification stage results in types that are as general as possible.
Often, this means that types are found to be more general than is strictly helpful to
a (human) observer. A policy is applied that specializes
type schemes to the most {\em specific} scheme that is compatible with all statically-discovered uses.
For example, a C++ object may include a getter function with a highly polymorphic
type scheme, since it could operate equally well on any structure with a field of the correct type at the
correct offset.  But we expect that in every calling context, the getter will be called on a specific 
object type (or perhaps its derived types).  We can specialize the getter's type by choosing the least polymorphic specialization that is compatible with the observed uses.  By specializing the function signature before presenting a final C type to the user, we trade some generality for types that are more likely to match the original source.
\end{example}

\section {The Simplification Algorithm}

In this section, we sketch an outline of the simplification algorithm at the core of
the constraint solver.  The complete algorithm appears in \aautoref{construction}{Appendix D of \citep{arxiv}}.
\label{simplify}

\subsection{Inferring a Type Scheme}
The goal of the simplification algorithm is to take an inferred type scheme
$\scheme{\overline{\alpha}}{C}{\tau}$ for a procedure and create a smaller constraint set
$\mathcal{C}'$, such that any constraint on $\tau$ implied by $\mathcal{C}$
is also implied by $\mathcal{C}'$.

Let $\mathcal{C}$ denote the constraint set generated by abstract interpretation
of the procedure being analyzed, and let $\overline{\alpha}$ be the set of
free type variables in $\mathcal{C}$.  We could already use
$\scheme{\overline{\alpha}}{\mathcal{C}}{\tau}$ as the
constraint set in the procedure's type scheme, since the
input and output types used in a valid invocation of {\verb|f|}
are tautologically those that satisfy $\mathcal{C}$.
Yet, as a practical matter, we cannot  use the constraint set directly, since this would result in
constraint sets with many useless free variables and a high growth rate over
nested procedures.

Instead, we seek to generate a {\em simplified constraint set} $\mathcal{C}'$,
such that if $c$ is an ``interesting'' constraint
and $\mathcal{C} \proves c$ then $\mathcal{C}' \proves c$ as well.
But what makes a constraint interesting?

\begin{definition}
For a type variable $\tau$, a constraint
is called {\em interesting} if it has
one of the following forms:
\begin{itemize}[noitemsep]
\item A capability constraint of the form $\term{\tau.u}$
\item A recursive subtype constraint of the form $\tau.u \subtype \tau.v$
\item A subtype constraint of the form $\tau.u \subtype \overline{\kappa}$ or
$\overline{\kappa} \subtype \tau.u$ where $\overline{\kappa}$ is a
type constant.
\end{itemize}
We will call a constraint set $\mathcal{C}'$ a {\em simplification} of $\mathcal{C}$ if 
$\mathcal{C}' \proves c$ for every interesting constraint $c$, such that $\mathcal{C} \proves c$. Since both $\mathcal{C}$ and $\mathcal{C}'$ entail the same set of constraints on $\tau$,
it is valid to replace $\mathcal{C}$ with $\mathcal{C}'$ in any valid type scheme for $\tau$.
\label{simplification-def}
\end{definition}

Simplification heuristics for set-constraint systems were studied by
\citet{fahndrich1996making}; our simplification algorithm encompasses
all of these heuristics.

\subsection{Unconstrained Pushdown Systems}
\label{pds-section}

The constraint-simplification algorithm works on a constraint set $\mathcal{C}$ by building a
pushdown system $\mathcal{P}_\mathcal{C}$ whose transition sequences represent valid derivations of subtyping
judgements. We briefly review pushdown systems and some necessary generalizations here.

\begin{definition}
An {\em unconstrained pushdown system} is a triple $\mathcal{P} = (\mathcal{V}, \Sigma, \Delta)$ where
$\mathcal{V}$ is the set of {\em control locations}, $\Sigma$ is the set of {\em stack symbols},
and $\Delta \subseteq (\mathcal{V} \times \Sigma^*)^2$ is a (possibly infinite) set of {\em transition rules}.
We will denote a transition rule by $\pdsrule{X}{u}{Y}{v}$ where $X,Y \in \mathcal{V}$ and
$u,v \in \Sigma^*$. We define the set of {\em configurations} to be $\mathcal{V} \times \Sigma^*$.
In a configuration $(p,w)$, $p$ is called the {\em control state} and $w$ the {\em stack state}.
\end{definition}

Note that we require neither the set of stack symbols, nor the set of transition rules,
to be finite.  This freedom is required to model the derivation $\textsc{S-Pointer}$
of \autoref{deduction-rules}, which
corresponds to an infinite set of transition rules.

\begin{definition}
An unconstrained pushdown system $\mathcal{P}$ determines a {\em transition relation} $\rewrite$ on
the set of configurations:
$(X,w) \rewrite (Y,w')$ if there is a suffix $s$ and a rule $\pdsrule{X}{u}{Y}{v}$,
such that $w = us$ and $w' = vs$. The transitive closure of $\rewrite$ is denoted $\rewrites$.
\end{definition}

With this definition, we can state the primary theorem behind our simplification algorithm.
\begin{theorem}
\label{mod-sat}
Let $\mathcal{C}$ be a constraint set and $\mathcal{V}$ a set of base type variables. Define a
subset $S_\mathcal{C}$ of $(\mathcal{V} \cup \Sigma)^* \times (\mathcal{V} \cup \Sigma)^*$
by $(Xu, Yv) \in S_\mathcal{C}$ if and only if $\mathcal{C} \proves X.u \subtype Y.v$.
Then $S_\mathcal{C}$ is a regular set, and an automaton $Q$ to recognize $S_\mathcal{C}$ can
be constructed in $O(|\mathcal{C}|^3)$ time.
\end{theorem}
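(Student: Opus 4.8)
The plan is to realize the entailment relation $\mathcal{C}\proves X.u\subtype Y.v$ as reachability in a pushdown system, and then appeal to the classical result that the set of configurations reachable from a regular set of configurations in a pushdown system is again regular (the $\mathsf{post}^*$ / saturation construction of B\"uchi, Caucal, and Finkel--Willems--Wolper). The subtle point is that the deduction rules of \autoref{deduction-rules} do not literally look like pushdown transitions, so the real work is designing the encoding and then verifying that it is both sound and complete with respect to $\proves$. First I would invoke the normal-form theorem mentioned in the text (Theorem B.1 of \citep{arxiv}): every derivation of a subtyping judgement can be put in a normal form in which the structural rules are applied in a stereotyped order --- roughly, a sequence of $\textsc{T-Prefix}$/$\textsc{S-Field}$ steps that ``pop'' a suffix of $u$, then an application of a base constraint of $\mathcal{C}$ (or $\textsc{S-Pointer}$), then a sequence of $\textsc{S-Field}$/$\textsc{T-Inherit}$ steps that ``push'' a suffix of $v$, with $\textsc{S-Trans}$ only used to concatenate such segments. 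This normal form is exactly what lets a derivation be read as a run of a pushdown automaton: the control state tracks the ``current'' base type variable, and the stack holds the pending field word.

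Concretely, I would build $\mathcal{P}_\mathcal{C}$ with control locations $\mathcal{V}$ (the base type variables occurring in $\mathcal{C}$, together with the type constants) and stack alphabet $\Sigma$, augmented with the two endmarkers $\stStart$ and $\stEnd$ already introduced in the macros. For each base constraint $\alpha\subtype\beta$ in $\mathcal{C}$ I add a rule $\pdsrule{\alpha}{\varepsilon}{\beta}{\varepsilon}$; for $\textsc{S-Pointer}$, which is an infinite family, I add the (infinite but finitely presented) set of rules $\pdsrule{\alpha}{\mathsf{store}}{\alpha}{\mathsf{load}}$ for every $\alpha$ --- this is precisely why \autoref{pds-section} was careful to allow infinitely many transition rules. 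The variance bookkeeping from \autoref{variance-def} is handled by splitting the stack symbols (or the control states) according to the sign monoid $\langle\cdot\rangle$, so that a contravariant field reverses the direction of the simulated subtyping step, matching $\textsc{S-Field}_\ominus$ and $\textsc{T-InheritR}$. Then I would prove the key correspondence: $\mathcal{C}\proves X.u\subtype Y.v$ if and only if $(X, \stStart u) \rewrites (Y, \stEnd v)$ (modulo the exact placement of endmarkers), with the forward direction by induction on a normal-form derivation and the backward direction by induction on the length of the transition sequence, each transition being justified by one of the deduction rules.

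Once the correspondence is in hand, the regularity of $S_\mathcal{C}$ follows: the set of source configurations $\{(X,\stStart u) : X\in\mathcal{V},\, u\in\Sigma^*\}$ is trivially regular, so its $\mathsf{post}^*$ is recognized by a finite automaton $Q$ computable by the saturation algorithm. The cubic bound is the standard complexity of the saturation construction: if $n=|\mathcal{C}|$ bounds the number of control states and transition rules (the infinite $\textsc{S-Pointer}$ family contributes only $O(n)$ rule \emph{schemas}, one per base variable, so it does not break the bound), saturation adds at most $O(n^2)$ edges to the automaton and each edge may trigger $O(n)$ new rule applications, giving $O(n^3)$ overall. The main obstacle, and the step I would spend the most care on, is the soundness/completeness equivalence between $\proves$ and $\rewrites$: getting the endmarker and variance conventions exactly right so that every normal-form derivation maps to a legal run and vice versa, and in particular checking that the $\textsc{S-Pointer}$ consistency rule and the two $\textsc{T-Inherit}$ rules --- which move capabilities ``sideways'' rather than strictly down a derivation --- are faithfully captured by the pushdown rules without introducing spurious reachable configurations. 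Everything after that is an appeal to the known pushdown-reachability machinery.
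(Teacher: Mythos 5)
Your proposal follows essentially the same route as the paper: encode each constraint of $\mathcal{C}$ as a pushdown transition rule, use the normal-form theorem to establish the bijection between derivations and transition sequences, fold variance into the control states, treat $\textsc{S-Pointer}$ as an infinite but finitely presented rule family handled lazily during saturation, and invoke the B\"uchi/Caucal saturation machinery for regularity and the cubic bound. The only differences are cosmetic (you place the endmarkers on the stack rather than using $\stStart$/$\stEnd$ as control states), and you correctly identify the same two novelties---contravariant labels and the infinite $\textsc{S-Pointer}$ family---that the paper singles out as requiring extensions to the standard construction.
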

\begin{proof}The basic idea is to treat each $X.u \subtype Y.v \in \mathcal{C}$
as a transition rule $\pdsrule{X}{u}{Y}{v}$ in the pushdown system $\mathcal{P}$.  In addition,
we add control states $\stStart, \stEnd$ with transitions $\pdsrule{\stStart}{X}{X}{\varepsilon}$
and $\pdsrule{X}{\varepsilon}{\stEnd}{X}$ for each $X \in \mathcal{V}$.
For the moment,
assume that (1) all labels are covariant, and (2) the rule $\textsc{S-Pointer}$ is ignored.
By construction, $(\stStart, Xu) \rewrites (\stEnd, Yv)$ in $\mathcal{P}$ if and only if $\mathcal{C} \proves X.u \subtype Y.v$.  A theorem of B\"uchi \cite{richard1964regular} ensures that  for any two control states $A$ and $B$ in a standard (not unconstrained) pushdown system, the set of all pairs $(u,v)$
with $(A, u) \rewrites (B, v)$ is a regular language; \citet{caucal1992regular} gives a saturation
algorithm that constructs an automaton to recognize this language.

In the full proof, we add two novelties: first, we support contravariant stack symbols by
encoding variance data into the control states and transition rules. The second novelty
involves the rule $\textsc{S-Pointer}$; this rule is problematic since the natural
encoding would result in infinitely many transition rules. We extend Caucal's construction to lazily instantiate all necessary applications of
$\textsc{S-Pointer}$ during saturation.  For details, see \aautoref{construction}{Appendix D of \citep{arxiv}}.
\end{proof}

Since $\mathcal{C}$ will usually entail an infinite number of constraints,
this theorem is particularly useful: it tells us that the full set of
constraints entailed by $\mathcal{C}$ has a finite encoding by an automaton $Q$.
Further manipulations on the constraint closure, such as efficient minimization, can be carried out on $Q$. By restricting the transitions to and from $\stStart$ and $\stEnd$, the same algorithm
is used to eliminate type variables, producing the desired constraint simplifications.

\subsection{Overall Complexity of Inference}
The saturation algorithm used to perform constraint-set simplification and type-scheme construction
is, in the worst case, cubic in the number of subtype constraints to simplify.  Since some well-known pointer analysis methods also have cubic complexity (such as \citet{andersen1994program}), it is reasonable to wonder if Retypd's ``points-to free'' analysis really offers a benefit over a type-inference system built on top of points-to analysis data.

To understand where Retypd's efficiencies are found, first consider the $n$ in $O(n^3)$.  Retypd's core saturation algorithm is cubic in the number of subtype constraints; due to the simplicity of machine-code instructions, there is roughly one subtype constraint generated per instruction.  Furthermore, Retypd applies constraint simplification on each procedure in isolation to eliminate the procedure-local type variables, resulting in constraint sets that only relate procedure formal-ins, formal-outs, globals, and type constants. In practice, these simplified constraint sets are small.

Since each procedure's constraint set is simplified independently, the $n^3$ factor is controlled by the largest procedure size, not the overall size of the binary.  By contrast, source-code points-to analysis such as Andersen's are generally cubic in the overall number of pointer variables, with exponential duplication of variables depending on the call-string depth used for context sensitivity.  The situation is even more difficult for machine-code points-to analyses such as VSA, since there is no syntactic difference between a scalar and a pointer in machine code. In effect, every program variable must be treated as a potential pointer.

On our benchmark suite of real-world programs, we found that execution time for Retypd scales slightly below $O(N^{1.1})$, where $N$ is the number of program instructions.  The following back-of-the-envelope calculation can heuristically explain much of the disparity between the $O(N^3)$ theoretical complexity and the $O(N^{1.1})$ measured complexity.  On our benchmark suite, the maximum procedure size $n$ grew roughly like $n \approx N^{2/5}$. We could then expect that a per-procedure analysis would perform worst when the program is partitioned into $N^{3/5}$ procedures of size $N^{2/5}$.
On such a program, a per-procedure $O(n^k)$ analysis may be expected to behave more like an $O(N^{3/5} \cdot (N^{2/5})^k) = O(N^{({3 + 2k})/{5}})$ analysis overall.  In particular, a per-procedure cubic analysis like Retypd could be expected to scale like a global $O(N^{1.8})$ analysis.  The remaining differences in observed versus theoretical execution time can be explained by the facts that real-world constraint graphs do not tend to exercise the simplification algorithm's worst-case behavior, and that the distribution of procedure sizes is heavily weighted towards small procedures.

\section {Evaluation}
\label{sec-evaluation}

\begin{figure}
  \hrule
\begin{lstlisting}[frame=none]
INT_PTR Proto_EnumAccounts(WPARAM wParam,
                           LPARAM lParam)
{
  *( int* )wParam = accounts.getCount();
  *( PROTOACCOUNT*** )lParam =
      accounts.getArray();
  return 0;
}
\end{lstlisting}
\caption{Ground-truth types declared in the original source do not necessarily reflect program semantics. Example from \texttt{miranda32}.}
\label{miranda}
\end{figure}

\pgfkeys{/pgf/number format/set thousands separator={}}

\newcommand{\xcsvkey}{Avg ptr accuracy}
\newcommand{\xdescr}{Accuracy}
\newcommand{\ycsvkey}{Avg conservativeness}
\newcommand{\ydescr}{Conservativeness}
\newcommand{\zcsvkey}{Avg interval size}
\newcommand{\zdescr}{Mean interval size}

\newcommand{\csvfile}[1]{data#1.csv}

\pgfdeclareplotmark{fivestar}
{  \pgfpathmoveto{\pgfqpointpolar{90}{\pgfplotmarksize}}
  \pgfpathlineto{\pgfqpointpolar{126}{0.4\pgfplotmarksize}}
  \pgfpathlineto{\pgfqpointpolar{162}{\pgfplotmarksize}}
  \pgfpathlineto{\pgfqpointpolar{198}{0.4\pgfplotmarksize}}
  \pgfpathlineto{\pgfqpointpolar{234}{\pgfplotmarksize}}
  \pgfpathlineto{\pgfqpointpolar{270}{0.4\pgfplotmarksize}}
  \pgfpathlineto{\pgfqpointpolar{306}{\pgfplotmarksize}}
  \pgfpathlineto{\pgfqpointpolar{342}{0.4\pgfplotmarksize}}
  \pgfpathlineto{\pgfqpointpolar{378}{\pgfplotmarksize}}
  \pgfpathlineto{\pgfqpointpolar{414}{0.4\pgfplotmarksize}}
  \pgfpathclose
  \pgfusepathqfillstroke
}

\subsection{Implementation}
Retypd is implemented as a module within CodeSurfer for Binaries.
By leveraging the multi-platform disassembly capabilities of CodeSurfer, it
can operate on x86, x86-64, and ARM code. We performed the 
evaluation using minimal analysis settings, disabling value-set analysis (VSA) but
computing affine relations between the stack and frame pointers.
Enabling additional CodeSurfer phases such as VSA can greatly improve the 
reconstructed IR, at the expense of increased analysis time.

Existing type-inference algorithms such as TIE \citep{tie} and SecondWrite \citep{2ndwrite}
require some modified form of VSA to resolve points-to data.  Our approach shows that
high-quality types can be recovered in the absence of points-to information,
allowing type inference to proceed even when computing points-to data is too unreliable
or expensive.

\begin{filecontents*}{compare-c.csv}
Name, Avg dist, Avg ptr accuracy, Avg interval size, Avg conservativeness
TIE, 1.5774, , 2, 94.3179
REWARDS-c*, 1.53, , , 98
TIE*, 1.15, , , 100
Retypd, 0.51, 82.46, 1.19, 97.7
\end{filecontents*}
	
\begin{filecontents*}{compare-s.csv}
Name, Avg dist, Avg ptr accuracy, Avg interval size, Avg conservativeness
SecondWrite, 1.7, 73, 1.7, 96
Retypd, 0.43, 91.11, 1.12, 93.54
\end{filecontents*}

\begin{filecontents*}{compare-o.csv}
Name, Avg dist, Avg ptr accuracy, Avg interval size, Avg conservativeness
Retypd, 0.533, 88.39, 1.22, 94.86
\end{filecontents*}

\begin{filecontents*}{data-timing.csv}
size,time,log10size,log10time
543913.999999999,1451.66000000000,5.73553023742245,3.16186491015569
13846.0000000000,27.4700000000000,4.14132432727542,1.43885865942056
77441.9999999999,85.5299999999999,4.88897656043861,1.93211847202912
37216.0000000000,46.1500000000000,4.57072969304835,1.66417170536193
11255.0000000000,10.3300000000000,4.05134549933654,1.01410032151962
112956.000000000,150.200000000000,5.05290930473565,2.17666993266815
25421.0000000000,28.8200000000000,4.40519263063733,1.45969397647797
260688.000000000,555.490000000000,5.41612104016821,2.74467624512403
281308.999999999,469.639999999999,5.44918362689613,2.67176507937981
3029.00000000000,5.45000000000000,3.48129927333286,0.736396502276642
202681.999999999,188.560000000000,5.30681518111635,2.27544956951627
42385.9999999999,62.2400000000000,4.62722243379175,1.79406958398163
7453.99999999999,11.1300000000000,3.87238938841782,1.04649516433471
13273.0000000000,32.8600000000000,4.12296909439375,1.51666755909904
1850.00000000000,1.68000000000000,3.26717172840301,0.225309281725863
100030.000000000,106.240000000000,5.00013026880523,2.02628806202394
6859.99999999999,4.66000000000000,3.83632411570675,0.668385916690000
70999.9999999999,115.550000000000,4.85125834871907,2.06276994981513
22045.0000000000,19.0400000000000,4.34331010316234,1.27966694404846
1940.00000000000,1.62000000000000,3.28780172993023,0.209515014542631
841960.999999999,2457.61000000000,5.92529197525610,3.39051296549659
2803.00000000000,3.81000000000000,3.44762309776029,0.580924975675619
42609.9999999999,68.7200000000000,4.62951153420045,1.83708315082319
189856.000000000,484.330000000000,5.27842432665736,2.68514137061775
39985.9999999999,36.1200000000000,4.60190796165255,1.55774774164147
2716.00000000000,2.06000000000000,3.43392976560846,0.313867220369153
14014.0000000000,19.0400000000000,4.14656211315756,1.27966694404846
20058.0000000000,16.2700000000000,4.30228762697621,1.21138755293686
8641.99999999998,6.98000000000000,3.93661426197521,0.843855422623161
202644.000000000,347.610000000000,5.30673374942749,2.54109226168442
750511.999999998,2165.67000000000,5.87535764060610,3.33559228049907
189874.000000000,510.080000000000,5.27846549959961,2.70763829538000
137283.000000000,227.000000000000,5.13761676110202,2.35602585719312
28418.0000000000,34.4800000000000,4.45359350992052,1.53756725715267
10353.2574257426,16.0505940594059,4.01507701254197,1.20549111100553
82973.4999999999,113.137500000000,4.91893940994944,2.05360657790271
96805.5000000000,134.377500000000,4.98590003243024,2.12832655708186
36221.0000000000,48.6083333333333,4.55896043627309,1.68671073038239
78035.2500000000,128.491250000000,4.89229082605344,2.10887355407897
\end{filecontents*}

\begin{filecontents*}{data-mem.csv}
Name,Size,Memory
tinycad,543914.0,2662.4
zlib1,13846.0,86.5
freeglut,77442.0,390.1
libbz2,37216.0,239.3
libquantum,11255.0,87.0
h264ref,112956.0,635.9
sjeng,25421.0,1062.0
perlbench,260688.0,1433.6
quake3,281309.0,1411.5
mcf,3029.0,44.6
gobmk,202682.0,967.6
pngtest,42386.0,212.8
upnpc-static,7454.0,66.9
bzip2,13273.0,137.1
fractals,1850.0,41.5
miranda32,100030.0,568.1
libidn,6860.0,55.8
hmmer,71000.0,293.9
distributer,22045.0,118.3
lorenz,1940.0,42.6
shareaza,841961.0,4300.8
lbm,2803.0,44.6
sphinx_livepretend,42610.0,315.1
vsyasm,189856.0,769.0
glut32,39986.0,255.5
shapes,2716.0,45.6
date,14014.0,101.4
ogg,20058.0,101.3
tutorial00,8642.0,61.9
python21,202644.0,966.7
gcc,750512.0,3379.2
yasm,189874.0,868.9
xmail,137283.0,847.8
milc,28418.0,156.3
coreutils,10353.2574257426,96.5871287128713
sphinx2,82973.5000000000,361.375000000000
putty,96805.5000000000,523.250000000000
vpx_d,36221.0000000000,132.533333333333
vpx_e,78035.2500000000,439.237500000000
\end{filecontents*}

\subsection{Evaluation Setup}

\begin{figure}
\centering
\begin{tabular}{@{}llr@{}}
\toprule
Benchmark &  Description & Instructions \\
\midrule
\multicolumn{2}{l}{\em CodeSurfer benchmarks} & \\[0.75ex]
{libidn} & Domain name translator & 7K \\
{Tutorial00} & Direct3D tutorial & 9K\\
{zlib} & Compression library & 14K\\
{ogg} & Multimedia library & 20K\\
{distributor} & UltraVNC repeater & 22K \\
{libbz2} & BZIP library, as a DLL & 37K \\
{glut} & The {\tt glut32.dll} library & 40K \\
{pngtest} & A test of libpng & 42K \\
{freeglut} & The {\tt freeglut.dll} library & 77K \\
{miranda} & IRC client & 100K \\
{XMail} & Email server & 137K \\
{yasm} & Modular assembler & 190K \\
{python21} & Python 2.1 & 202K \\
{quake3} & Quake 3 & 281K \\
{TinyCad} & Computed-aided design & 544K \\
{Shareaza} & Peer-to-peer file sharing & 842K \\
\midrule
\multicolumn{2}{l}{\em SPEC2006 benchmarks} & \\[0.75ex]
{470.lbm} & Lattice Boltzmann Method & 3K \\
{429.mcf} & Vehicle scheduling & 3K\\
{462.libquantum} & Quantum computation & 11K \\
{401.bzip2} & Compression & 13K \\
{458.sjeng} & Chess AI & 25K \\
{433.milc} & Quantum field theory & 28K\\
{482.sphinx3} & Speech recognition & 43K \\
{456.hmmer} & Protein sequence analysis & 71K \\
{464.h264ref} & Video compression & 113K \\
{445.gobmk} & GNU Go AI & 203K \\
{400.perlbench} & Perl core & 261K \\
{403.gcc} & C/C++/Fortran compiler & 751K \\
\end{tabular}
\caption{Benchmarks used for evaluation. All binaries were compiled from source using optimized release configurations. The SPEC2006 benchmarks were chosen to match the benchmarks used to evaluate SecondWrite \citep{2ndwrite}.}
\label{test-table}
\end{figure}

Our benchmark suite consists of 160 32-bit x86 binaries for both Linux and Windows, compiled with a
variety of  {\verb|gcc|} and Microsoft Visual C/C++ versions.  The benchmark suite includes a mix of
executables, static libraries, and DLLs.  The suite includes the same \verb|coreutils| and SPEC2006
benchmarks used to evaluate REWARDS, TIE, and SecondWrite \citep{rewards,tie,2ndwrite}; additional benchmarks came from a standard suite of real-world programs used for precision and performance testing of CodeSurfer for Binaries.  All binaries were built with optimizations enabled and debug information disabled.

Ground truth is provided by separate copies of the binaries that have been built with the same settings, but with debug information included (DWARF on Linux, PDB on Windows). We used IdaPro \citep{idapro} to read the debug information, which allowed us to use the same scripts for collecting ground-truth types from both DWARF and PDB data.

All benchmarks were evaluated on a 2.6 GHz Intel Xeon E5-2670 CPU, running on a single logical core.
RAM utilization by CodeSurfer and Retypd combined was capped at 10GB.

Our benchmark suite includes the individual binaries in \autoref{test-table} as well
as the collections of related binaries shown in \autoref{cluster-table}.  We found that 
programs from a single collection tended to share a large amount of common code, leading
to highly correlated benchmark results. For example, even though
the {\verb|coreutils|} benchmarks include many tools with very disparate purposes, all of the
tools make use of a large, common set of statically linked utility routines. Over 80\% of the
{\verb|.text|} section in {\verb|tail|} consists of such routines; for {\verb|yes|}, the number is over 99\%. The common code and specific idioms appearing in {\verb|coreutils|} make it a particularly low-variance benchmark suite.

In order to avoid over-representing these program collections in our results, we
treated these collections as clusters in the data set.  For each cluster, we computed the
average of each metric over the cluster, then inserted the average as a
single data point to the final data set.  Because Retypd performs well on many clusters,
this averaging procedure tends to reduce our overall precision and conservativeness measurements.
Still, we believe that it gives a less biased depiction of the algorithm's expected real-world behavior than does an average over all benchmarks.

\renewcommand{\xcsvkey}{Name}
\renewcommand{\xdescr}{Name}
\begin{figure*}
\begin{minipage}[t]{0.45\linewidth}
\[
\begin{tikzpicture}[scale=0.9]

\begin{groupplot}[
        group style={
              group name=eval plots,
              group size=3 by 1,
              horizontal sep=0pt,
        },
        /pgf/bar width=0.7em,
        footnotesize,
        tickpos=left,
	visualization depends on={value \thisrow{Name} \as \lab},
	every node near coord/.append style={font={\small}},
	grid=major,
        major grid style={gray},
        xmajorgrids=false,
        xminorgrids=false,
        height=0.8\textwidth,
        ytick align=outside,
        xtick align=outside,
        x=0.1\textwidth,
        x tick label style={rotate=45,anchor=east},
        xtick=data,
        ytick={0,0.5,1,1.5,2},
        enlarge x limits={abs={1.5em}},
	ymin=0,
	ymax=2,
        ybar=0pt,
    ]

\nextgroupplot[
        title=coreutils,
        symbolic x coords={REWARDS-c*,TIE*,TIE,Retypd},
	]

\renewcommand{\ycsvkey}{Avg dist}
\renewcommand{\ydescr}{Distance}

	\addplot[fill=blue!75!green] table [x={\xcsvkey},y={\ycsvkey},col sep=comma]{compare-c.csv};

\renewcommand{\ycsvkey}{Avg interval size}
\renewcommand{\ydescr}{Interval size}

        \addplot[fill=orange!75!white]		 table [x={\xcsvkey},y={\ycsvkey},z={\zcsvkey},col sep=comma]{compare-c.csv};

\nextgroupplot[
        title=SPEC2006,
        symbolic x coords={SecondWrite,Retypd},
        yticklabels={,,},
        ymajorticks=false,
        enlarge x limits={abs={1.5em}},
	]

\renewcommand{\ycsvkey}{Avg dist}
\renewcommand{\ydescr}{Distance}

	\addplot[fill=blue!75!green] table [x={\xcsvkey},y={\ycsvkey},col sep=comma]{compare-s.csv};

\renewcommand{\ycsvkey}{Avg interval size}
\renewcommand{\ydescr}{Interval size}

        \addplot[fill=orange!75!white]		 table [x={\xcsvkey},y={\ycsvkey},z={\zcsvkey},col sep=comma]{compare-s.csv};

\nextgroupplot[
        title=All,
        symbolic x coords={Retypd},
        yticklabel pos=right,
        ytick pos=right,
        x=0.02\textwidth,
	]

\renewcommand{\ycsvkey}{Avg dist}
\renewcommand{\ydescr}{Distance}

	\addplot[fill=blue!75!green] table [x={\xcsvkey},y={\ycsvkey},col sep=comma]{compare-o.csv};

\renewcommand{\ycsvkey}{Avg interval size}
\renewcommand{\ydescr}{Interval size}

        \addplot[fill=orange!75!white]		 table [x={\xcsvkey},y={\ycsvkey},z={\zcsvkey},col sep=comma]{compare-o.csv};

  \end{groupplot}

\node at (eval plots c2r1.north west) [inner sep=0pt,anchor=south, yshift=4ex] {
   \begin{tikzpicture}
        \draw[fill=blue!75!green] (0cm,-0.1cm) rectangle (0.5cm,0.1cm);
   \end{tikzpicture} {\scriptsize Distance to source type}
   \quad
   \begin{tikzpicture}
        \draw[fill=orange!75!white] (0cm,-0.1cm) rectangle (0.5cm,0.1cm);
   \end{tikzpicture} {\scriptsize Interval size}
};

\node[font=\scriptsize] at (6,-1.3) {$*$ Dynamic};

\end{tikzpicture}
\]
\caption{Distance to ground-truth types and size of the interval between inferred upper and lower bounds.
Smaller distances represent more accurate types; smaller interval sizes represent increased
confidence.}
\label{dist-interval}
\end{minipage}
\hspace{3em}
\begin{minipage}[t]{0.45\linewidth}
\renewcommand{\xcsvkey}{Name}
\renewcommand{\xdescr}{Name}
\[
\begin{tikzpicture}[scale=0.9]
\begin{groupplot}[
        group style={
              group name=eval plots,
              group size=3 by 1,
              horizontal sep=0pt,
        },
        /pgf/bar width=0.7em,
        footnotesize,
        tickpos=left,
	visualization depends on={value \thisrow{Name} \as \lab},
	every node near coord/.append style={font={\small}},
	grid=major,
        major grid style={gray},
        xmajorgrids=false,
        xminorgrids=false,
        height=0.8\textwidth,
        ytick align=outside,
        xtick align=outside,
        x=0.1\textwidth,
        x tick label style={rotate=45,anchor=east},
        xtick=data,
        ytick={70,80,90,100},
        yticklabel={\pgfmathparse{\tick}\pgfmathprintnumber{\pgfmathresult}\%},
        axis y discontinuity=crunch,
        enlarge x limits={abs={1.5em}},
	ymin=60,
	ymax=100,
        ybar=0pt,
        legend columns=-1,
        legend style={draw=none, /tikz/every even column/.append style={column sep=5pt}},
        legend image code/.code={             \draw[#1] (0cm,-0.1cm) rectangle (0.5cm,0.1cm);
             },
    ]

\nextgroupplot[
        title=coreutils,
        symbolic x coords={REWARDS-c*,TIE*,TIE,Retypd},
	]

\renewcommand{\ycsvkey}{Avg conservativeness}
\renewcommand{\ydescr}{Conservativeness}

	\addplot[fill=yellow] table [x={\xcsvkey},y={\ycsvkey},col sep=comma]{compare-c.csv};

\renewcommand{\ycsvkey}{Avg ptr accuracy}
\renewcommand{\ydescr}{Pointer accuracy}

        \addplot[fill=red!75!blue]		 table [x={\xcsvkey},y={\ycsvkey},z={\zcsvkey},col sep=comma]{compare-c.csv};

\nextgroupplot[
        title=SPEC2006,
        symbolic x coords={SecondWrite,Retypd},
        yticklabels={,,},
        ymajorticks=false,
        enlarge x limits={abs={1.5em}},
	]

\renewcommand{\ycsvkey}{Avg conservativeness}
\renewcommand{\ydescr}{Conservativeness}

	\addplot[fill=yellow] table [x={\xcsvkey},y={\ycsvkey},col sep=comma]{compare-s.csv};

\renewcommand{\ycsvkey}{Avg ptr accuracy}
\renewcommand{\ydescr}{Pointer accuracy}

        \addplot[fill=red!75!blue]		 table [x={\xcsvkey},y={\ycsvkey},z={\zcsvkey},col sep=comma]{compare-s.csv};

\nextgroupplot[
        title=All,
        symbolic x coords={Retypd},
        yticklabel={\pgfmathparse{\tick}\pgfmathprintnumber{\pgfmathresult}\%},
        yticklabel pos=right,
        ytick pos=right,
        x=0.02\textwidth,
        legend to name=grpleg,             
	]

\renewcommand{\ycsvkey}{Avg conservativeness}
\renewcommand{\ydescr}{Conservativeness}

	\addplot[fill=yellow] table [x={\xcsvkey},y={\ycsvkey},col sep=comma]{compare-o.csv};
        \addlegendentry{Conservativeness}
        
\renewcommand{\ycsvkey}{Avg ptr accuracy}
\renewcommand{\ydescr}{Pointer accuracy}

        \addplot[fill=red!75!blue]		 table [x={\xcsvkey},y={\ycsvkey},z={\zcsvkey},col sep=comma]{compare-o.csv};
        \addlegendentry{Pointer accuracy}
        
  \end{groupplot}

\node at (eval plots c2r1.north west) [inner sep=0pt,anchor=south, yshift=4ex] {
   \begin{tikzpicture}
        \draw[fill=yellow] (0cm,-0.1cm) rectangle (0.5cm,0.1cm);
   \end{tikzpicture} {\scriptsize Conservativeness}
   \quad
   \begin{tikzpicture}
        \draw[fill=red!75!blue] (0cm,-0.1cm) rectangle (0.5cm,0.1cm);
   \end{tikzpicture} {\scriptsize Pointer accuracy}
};

\draw [ultra thick, white, decoration={snake, amplitude=1pt, segment length=7pt}, decorate] (0.2,0.4) -- (2.7,0.4);
\draw [ultra thick, white, decoration={snake, amplitude=1pt, segment length=7pt}, decorate] (3.1,0.4) -- (4.4,0.4);
\draw [ultra thick, white, decoration={snake, amplitude=1pt, segment length=7pt}, decorate] (4.8,0.4) -- (5.8,0.4);

\node[font=\scriptsize] at (6,-1.3) {$*$ Dynamic};

\end{tikzpicture}
\]
\caption{Conservativeness and pointer accuracy metric. Perfect type reconstruction would be 100\%
conservative and match on 100\% of pointer levels. Note that the $y$ axis begins at 70\%.}
\label{conserv-ptr-acc}

\end{minipage}
\end{figure*}

\subsection{Sources of Imprecision}

Although Retypd is built around a sound core of constraint simplification and solving,
there are several ways that imprecision can occur. As described in \autoref{dis-fail},
disassembly failures can lead to unsound constraint generation. Second, the heuristics
for converting from sketches to C types are lossy by necessity. Finally, we treat the source types as ground truth, leading to ``failures'' whenever Retypd recovers an accurate type that
does not match the original program---a common situation with type-unsafe source code.

A representative example of this last source of imprecision appears in \autoref{miranda}. This source code belongs to the {\verb|miranda32|} IRC client, which uses a plugin-based architecture; most of {\verb|miranda32|}'s functionality is implemented by ``service functions'' with the fixed signature {\verb|int ServiceProc(WPARAM,LPARAM)|}. The types {\verb|WPARAM|} and {\verb|LPARAM|} are used in certain Windows APIs for generic 16- and 32-bit values.  The two parameters are immediately cast to other types in the body of the service functions, as in \autoref{miranda}.

\subsection{{\tt const} Correctness}
\label{const-recovery-stats}

As a side-effect of separately modeling $\mathsf{.load}$ and $\mathsf{.store}$ capabilities,
Retypd is easily able to recover information about how pointer parameters are used for
input and/or output.  We take this into account when converting
sketches to C types; if a function's sketch includes $.\mathsf{in}_L .\mathsf{load}$ but not
$.\mathsf{in}_L . \mathsf{store}$ then we annotate the parameter at $L$ with {\verb|const|},
as in \autoref{sketch-sketch} and \autoref{type-recovery-example}.
Retypd appears to be the first machine-code type-inference system to infer {\verb|const|}
annotations directly.

On our benchmark suite, we found that 98\% of parameter {\verb|const|} annotations in the original source
code were recovered by Retypd.  Furthermore, Retypd inferred {\verb|const|} annotations on many
other parameters; unfortunately, since most C and C++ code does not use {\verb|const|} in every
possible situation, we do not have a straightforward way to detect how many of Retypd's additional
{\verb|const|} annotations are correct.

Manual inspection of the missed {\verb|const|} annotations shows that most instances are due to
imprecision when analyzing one or two common statically linked library functions.  This imprecision then propagates
outward to callers, leading to decreased {\verb|const|} correctness overall.  Still, we believe the 98\% recovery rate shows that Retypd offers a useful approach to {\verb|const|} inference.

\subsection{Comparisons to Other Tools}
We gathered results over several metrics that have been used to evaluate SecondWrite,
TIE, and REWARDS.  These metrics were defined by \citet{tie} and are briefly reviewed here.

TIE infers upper and lower bounds on each type variable, with the bounds belonging to a
lattice of C-like types.  The lattice is naturally stratified into levels, with the distance between
two comparable types roughly being the difference between their levels in the lattice, with a
maximum distance of 4.  A recursive
formula for computing distances between pointer and structural types is also used.  TIE also determines a policy that selects between the upper and lower bounds on a type variable for the final displayed type.

TIE considers three metrics based on this lattice: the conservativeness rate, the interval size, and
the distance.
A type interval is {\em conservative} if the interval bounds overapproximate the declared type of a variable.
The {\em interval size} is the lattice distance from the upper to the lower bound on a type variable.
The {\em distance} measures the lattice distance from the final displayed type to the ground-truth type.
REWARDS and SecondWrite both use unification-based algorithms, and have been evaluated using the same
TIE metrics. The evaluation of REWARDS using TIE metrics appears in \citet{tie}.

\noindent
{\bf Distance and interval size:}
Retypd shows substantial improvements over other approaches in the distance and interval-size metrics,
indicating that it generates more accurate types with less uncertainty.  The mean distance to
the ground-truth type was 0.54 for Retypd, compared to 1.15 for dynamic TIE, 1.53 for REWARDS, 1.58 for static TIE,
and 1.70 for SecondWrite.  The mean
interval size shrunk to 1.2 with Retypd, compared to 1.7 for SecondWrite and 2.0 for TIE.

\noindent
{\bf Multi-level pointer accuracy:}
\citet{2ndwrite} also introduced a multi-level pointer-accuracy rate that attempts to
quantify how many ``levels'' of pointers were correctly inferred.
\cbstart
On SecondWrite's benchmark
suite, Retypd attained a mean multi-level pointer accuracy of 91\%, compared with SecondWrite's reported 73\%.
Across all benchmarks, Retypd averages 88\% pointer accuracy.
\cbend

\noindent
{\bf Conservativeness:}
The best type system would have a high conservativeness rate (few unsound decisions) coupled with
a low interval size (tightly specified results) and low distance (inferred types are close to ground-truth types).
In each of these metrics, Retypd performs about as well or better than existing approaches.  Retypd's mean conservativeness
rate is 95\%, compared to 94\% for TIE.  But note that TIE was evaluated only on
{\verb|coreutils|};
on that cluster, Retypd's conservativeness was 98\%.  
SecondWrite's overall conservativeness is 96\%, measured on a subset of the SPEC2006 benchmarks;
Retypd attained a slightly lower 94\% on this subset.

It is interesting to note that Retypd's conservativeness rate on {\verb|coreutils|} is comparable
to that of REWARDS, even though REWARDS' use of dynamic execution traces suggests it would be more
conservative than a static analysis by virtue of only generating feasible type constraints.

  \begin{figure*}
\centering
\begin{tabular}{@{}lllrccccc@{}}
\toprule
Cluster & Count & Description & Instructions & Distance & Interval & Conserv. & Ptr. Acc. & Const \\
\midrule
freeglut-demos & 3 & freeglut samples & 2K & \fbox{0.66} & \fbox{1.49} & 97\% & \fbox{83\%} & 100\% \\
coreutils & 107 & GNU coreutils 8.23 & 10K & 0.51 & 1.19 & 98\% & \fbox{82\%} & \fbox{96\%} \\
vpx-d & 8 & VP$x$ decoders & 36K & \fbox{0.63} & \fbox{1.68} & 98\% & 92\% & 100\% \\
vpx-e & 6 & VP$x$ encoders & 78K & \fbox{0.63} & \fbox{1.53} & 96\% & 90\% & 100\% \\
sphinx2 & 4 & Speech recognition & 83K & 0.42 & 1.09 & \fbox{94\%} & 91\% & 99\% \\
putty & 4 & SSH utilities & 97K & 0.51 & 1.05 & \fbox{94\%} & \fbox{86\%} & 99\% \\
\midrule
\multicolumn{3}{r}{Retypd, as reported} & & 0.54 & 1.20 & 95\% & 88\% & 98\% \\
\multicolumn{3}{r}{Retypd, without clustering} & & 0.53 & 1.22 & 97\% & 84\% & 97\% \\
\end{tabular}
\caption{Clusters in the benchmark suite.  For each metric, the average over the cluster is given. If a cluster average is worse than Retypd's overall average for a certain metric, a box is drawn around the entry.}
\label{cluster-table}
  \end{figure*}

\subsection{Performance}
\begin{figure*}
\renewcommand{\xcsvkey}{size}
\renewcommand{\xdescr}{Program size (number of CFG nodes)}
\renewcommand{\ycsvkey}{time}
\renewcommand{\ydescr}{Type-inference time (seconds)}
\begin{minipage}[b]{0.45\linewidth}\[
\begin{tikzpicture}[scale=0.9]

  \begin{axis}[
        xmin=1000,
        xmax=1000000,
        ymin=0,
        ymax=3000,
        xmode=log,
        ymode=log,
        scaled ticks=true,
        extra y ticks={3000},
        xticklabels={,,},
        extra x ticks={1000,10000,100000,1000000},
        extra x tick labels={1K, 10K, 100K, 1M},
        log ticks with fixed point,
        major grid style={dotted},
	xlabel={\xdescr},
	ylabel={\ydescr},
	]
	
        \draw[color=red,very thick] (1000,1.427) -- (1000000,2812);
        \addplot[scatter,only marks,nodes near coords={},fill={blue!25!white},mark options={draw=blue,thick,mark size=3pt}] table [x={\xcsvkey},y={\ycsvkey},col sep=comma]{\csvfile{-timing}};

        \end{axis}
\end{tikzpicture}\]
\caption{Type-inference time on benchmarks. The line indicates the best-fit exponential
$t = 0.000725 \cdot N^{1.098}$, demonstrating slightly superlinear real-world scaling
behavior. The coefficient of determination is $R^2 = 0.977$.}
\label{performance-fig}
\end{minipage}
\renewcommand{\xcsvkey}{Size}
\renewcommand{\xdescr}{Program size (number of CFG nodes)}
\renewcommand{\ycsvkey}{Memory}
\renewcommand{\ydescr}{Type-inference memory usage}
\hspace{2em}
\begin{minipage}[b]{0.45\linewidth}
\[
\begin{tikzpicture}[scale=0.9]

  \begin{axis}[
        xmin=1000,
        xmax=1000000,
        ymin=0,
        ymax=4800,
        xmode=log,
        ymode=log,
        scaled ticks=true,
        xticklabels={,,},
        yticklabels={,,},
        extra x ticks={1000,10000,100000,1000000},
        extra x tick labels={1K, 10K, 100K, 1M},
        extra y ticks={50,100,500,1000,2000,4000},
        extra y tick labels={50MB, 100MB, 500MB, 1GB, 2GB, 4GB},
        log ticks with fixed point,
        major grid style={dotted},
	xlabel={\xdescr},
	ylabel={\ydescr},
	]
	
        \draw[color=red,very thick] (1000,12.8) -- (1000000,4437);

        \addplot[scatter,only marks,nodes near coords={},fill={blue!25!white},mark options={draw=blue,thick,mark size=3pt}] table [x={\xcsvkey},y={\ycsvkey},col sep=comma]{\csvfile{-mem}};

        \end{axis}
\end{tikzpicture}\]
\caption{Type-inference memory usage on benchmarks. The line indicates the best-fit exponential $m = 0.037 \cdot N^{0.846}$, with coefficient of determination $R^2 = 0.959$.}
\label{performance-mem-fig}
\end{minipage}
\end{figure*}

Although the core simplification algorithm of Retypd has cubic worst-case complexity,
it only needs to be applied on a per-procedure basis. This suggests that the real-world scaling behavior will depend on the distribution of procedure sizes, not on the whole-program size.

In practice, \autoref{performance-fig} suggests that Retypd gives nearly linear performance over the benchmark suite, which ranges
in size from 2K to 840K instructions.  To measure Retypd's performance, we used numerical
regression to find the best-fit model $T = \alpha N^\beta$ relating execution time $T$ to
program size $N$.  This results in the relation $T = 0.000725 \cdot N^{1.098}$ with coefficient
of determination $R^2 = 0.977$, suggesting that nearly 98\% of the variation in performance data can
be explained by this model.  In other words, on real-world programs Retypd demonstrates nearly linear scaling of execution time. The cubic worst-case per-procedure behavior of the constraint solver does not translate to cubic behavior overall.
Similarly, we found that the sub-linear model $m = 0.037 \cdot N^{0.846}$ explains 96\% of the memory usage in Retypd.

\noindent{\bf Note.}
The regressions above were performed by numerically fitting exponential models in $(N,T)$ and $(N,m)$ space, rather than analytically fitting linear models in $\log$-$\log$ space.  Our models then minimize the error in the predicted values of $T$ and $m$, rather than minimizing errors in $\log T$ or $\log m$. Linear regression in $\log$-$\log$ space results in the less-predicative models $T = 0.0003 \cdot N^{1.14}$ ($R^2 = 0.92$) and $m = 0.08 \cdot N^{0.77}$ ($R^2 = 0.88$).

The constraint-simplification workload of Retypd would be straightforward to parallelize
over the directed acyclic graph of strongly-connected components in the callgraph, further
reducing the scaling constant. If a reduction in memory usage is required, Retypd could swap
constraint sets to disk; the current implementation keeps all working sets in RAM.

\section {Related Work}

{\bf Machine-code type recovery:}
Hex-Ray's reverse engineering tool IdaPro \citep{idapro} is an early example of type
reconstruction via static analysis.  The exact algorithm is proprietary, but it appears
that IdaPro propagates types through unification from library functions of known signature,
halting the propagation when a type conflict appears.  IdaPro's reconstructed IR is 
relatively sparse, so the type propagation fails to produce useful information in many
common cases, falling back to the default {\verb|int|} type.  However, the analysis is very fast.

SecondWrite \citep{2ndwrite} is an interesting approach to static IR reconstruction
with a particular emphasis on scalability.  The authors combine a best-effort VSA variant for
points-to analysis with a unification-based type-inference engine.  Accurate types
in SecondWrite depend on high-quality points-to data; the authors note that this can
cause type accuracy to suffer on larger programs.
In contrast, Retypd is not dependent on points-to data for type recovery and makes
use of subtyping rather than unification for increased precision.

TIE \citep{tie} is a static type-reconstruction tool used as part of Carnegie Mellon
University's binary-analysis platform (BAP).  TIE was the first machine-code type-inference system to track subtype constraints and explicitly maintain upper
and lower bounds on each type variable.  As an abstraction of the
C type system, TIE's type lattice is relatively simple; missing features, such as
recursive types, were later identified by the authors as an important target for
future research \citep{schwartz}.

HOWARD \citep{howard} and REWARDS \citep{rewards} both take a dynamic approach, generating type
constraints from execution traces.  Through a comparison with HOWARD, the creators of TIE showed
that static type analysis can produce higher-precision types than dynamic type analysis, 
though a small penalty must be paid in conservativeness of constraint-set generation.
TIE also showed that type systems designed for static analysis can be easily modified to work on
dynamic traces; we expect the same is true for Retypd, though we have not yet
performed these experiments.

Most previous work on machine-code type recovery, including TIE and SecondWrite, either
disallows recursive types or only supports recursive types
by combining type-inference results with a points-to oracle.
For example, to infer that $x$ has a
the type {\verb|struct S { struct S *, ...}*|} in a unification-based approach
like SecondWrite,
first we must have resolved that $x$ points to some memory region $M$, that $M$ admits a 4-byte abstract
location $\alpha$ at offset 0, and that the type of $\alpha$ should be unified with the type of $x$.
If pointer analysis
has failed to compute an explicit memory region pointed to by $x$, it will not be
possible to determine the type of $x$ correctly.
The complex interplay between type inference, points-to analysis, and abstract-location
delineation leads to a relatively fragile method for inferring recursive types.
In contrast, our type system can infer recursive types
even when points-to facts are completely absent.

\citet{robbins2013theory} developed
an SMT solver equipped with a theory of rational trees and applied it to
type reconstruction. Although this allows for recursive types, the lack of
subtyping and the performance of the SMT solver make it difficult to
scale this approach to real-world binaries. Except for test cases on the
order of 500 instructions, precision of the recovered types was not assessed.

\smallskip
\noindent
{\bf Related type systems:}
The type system used by Retypd is related to the recursively constrained types 
(rc types) of \citet*{eifrig}.
Retypd generalizes the rc type system by building up all types using flexible
records;
even the function-type constructor $\to$, taken as fundamental in the rc type system, is decomposed into a record with $\mathsf{in}$ and $\mathsf{out}$ fields.
This allows Retypd to operate without the knowledge
of a fixed signature from which type constructors are drawn, which is essential for analysis of
stripped machine code. 

The use of CFL reachability to perform polymorphic subtyping first appeared in \citet{rehof2001type},
extending previous work relating simpler type systems
to graph reachability \citep{agesen94, palsberg1995type}.  Retypd continues by adding
type-safe handling of pointers and a simplification algorithm that allows us to compactly
represent the type scheme for each function.

CFL reachability has also been used to
extend the type system of Java \citep{greenfieldboyce2007type} and C++ \citep{foster2006flow} with support
for additional type qualifiers.  Our reconstructed {\verb|const|} annotations can be seen
 as an instance of this idea, although our qualifier inference is not separated from
type inference.

To the best of our knowledge, no prior work has applied 
polymorphic type systems with subtyping to machine code.
 
\section {Future Work}

One interesting avenue for future research could come from the application of dependent
and higher-rank type systems to machine-code type inference, although we
rapidly approach the frontier where type inference is undecidable.  A natural example of
dependent types appearing in machine code is {{\verb|malloc|}}, which could be typed as 
${\verb|malloc|} : (n : {\verb|size_t|}) \to \top_n$
where $\top_n$ denotes the common supertype of all $n$-byte types.
The key feature is that the {\em value} of a parameter determines the {\em type} of the result.

Higher-rank 
types are needed to properly model functions that accept pointers to polymorphic functions
as parameters.  Such functions are not entirely uncommon; for example, any function that
is parameterized by a custom polymorphic allocator will have rank $\geq 2$.

Retypd was implemented as an inference phase that runs after CodeSurfer's main analysis loop.
We expect that by moving Retypd into CodeSurfer's analysis loop, there will be an opportunity
for interesting interactions between IR generation and type reconstruction.
 
\section {Conclusion}
By examining a diverse corpus of optimized binaries, we have identified a number of 
common idioms that are stumbling blocks for machine-code type inference.  For each
of these idioms, we identified a type-system feature that could enable the difficult code
to be properly typed.  We gathered these features into a type system and implemented 
the inference algorithm in the tool Retypd.  Despite removing the requirement for
points-to data, Retypd is able to accurately and conservatively
type a wide variety of real-world binaries.
We assert that Retypd demonstrates the utility of high-level type systems for
reverse engineering and binary analysis.
 
\section*{Acknowledgments}
The authors would like to thank Vineeth Kashyap and the anonymous reviewers for their many useful comments on this manuscript, and John Phillips, David Ciarletta, and Tim Clark for their help with test automation.

\newpage
\medskip

\appendix

\section {Constraint Generation}
\label{sec-interp}

Type constraint generation is performed by a parameterized abstract interpretation
$\textsc{Type}_\textsc{A}$; the parameter $\textsc{A}$ is itself an abstract interpreter
that is used to transmit additional analysis information such as reaching definitions,
propagated constants, and value-sets (when available).

Let $\mathcal{V}$ denote the set of type variables and $\mathcal{C}$ the set of type constraints.
Then the primitive TSL value- and map- types for $\textsc{Type}_\textsc{A}$ are given by
\begin{align*}
\mathsf{BASETYPE}_{\textsc{Type}_\textsc{A}} &= \mathsf{BASETYPE}_\textsc{A} \times 2^\mathcal{V} \times 2^\mathcal{C}\\
\mathsf{MAP[}\alpha, \beta\mathsf{]}_{\textsc{Type}_\textsc{A}} &=
\mathsf{MAP[}\alpha, \beta\mathsf{]}_\textsc{A} \times 2^\mathcal{C}
\end{align*}

Since type constraint generation is a syntactic, flow-insensitive process, we can regain flow sensitivity
by pairing with an abstract semantics that carries a summary of flow-sensitive information.  Parameterizing
the type abstract interpretation by $\textsc{A}$ allows us to factor out the particular
way in which program variables should be abstracted to types (e.g. SSA form, reaching definitions, and so on).

\subsection{Register Loads and Stores}
The basic reinterpretations proceed by pairing with the abstract interpreter $\textsc{A}$. For example,

\begin{code}[mathescape]
regUpdate(s, reg, v) =
  let (v', t, c) = v
      (s', m)    = s
      s'' = regUpdate(s', reg, v')
      (u, c') = A(reg, s'')
  in
      ( s'', m $\cup$ c $\cup$ { t$\subtype$u } )
\end{code}

\noindent
where $\texttt{A(reg, s)}$ produces a type variable from the register $\texttt{reg}$ and the $\texttt{A}$-abstracted register map $\texttt{s''}$.

Register loads are handled similarly:

\begin{code}[mathescape]
regAccess(reg, s) =
  let (s', c) = s
      (t, c') = A(reg, s')
  in
      ( regAccess(reg, s'), t, c $\cup$ c' )
\end{code}

\begin{example}
    Suppose that $\textsc{A}$ represents the concrete semantics for x86 and
    $\texttt{A(reg, $\cdot$)}$ yields a type variable $(\mathsf{reg}, \{\})$ and no additional
    constraints.  Then the x86 expression
    {\verb|mov ebx, eax|} is represented by the TSL expression
    {\verb|regUpdate(S, EBX(), regAccess(EAX(), S))|}, where $\texttt{S}$ is the initial 
    state $(S_\text{conc}, \mathcal{C})$. After abstract interpretation, $\mathcal{C}$ will
    become $\mathcal{C} \cup \{ \mathsf{eax} \subtype \mathsf{ebx} \}$.
\end{example}

By changing the parametric interpreter $\textsc{A}$, the generated type constraints may be made
more precise.

\begin{example}
   We continue with the example of {\verb|mov ebx, eax|} above.
   Suppose that $\textsc{A}$ represents an abstract semantics that is aware of register reaching
   definitions, and define $\texttt{A(reg, s)}$ by

\begin{code}[mathescape]
   A(reg, s) =
    case reaching-defs(reg, s) of
      { p } $\rightarrow$ ($\mathsf{reg}_p$, {})
      defs $\rightarrow$
           let t = fresh
               c = { $\mathsf{reg}_p \subtype \mathsf{t}$ | p $\in$ defs }
           in (t, c)
\end{code}

\noindent
   where {\verb|reaching-defs|} yields the set of definitions of $\texttt{reg}$ that are
   visible from state $\texttt{s}$. Then $\textsc{Type}_\textsc{A}$ at program point $q$
   will update the constraint set $\mathcal{C}$ to
      \[\mathcal{C} \cup \{ \mathsf{eax}_p \subtype \mathsf{ebx}_q \}\]
   if $p$ is the lone reaching definition of {\verb|EAX|}. If there are multiple reaching
   definitions $P$, then the constraint set will become 
      \[\mathcal{C} \cup \{ \mathsf{eax}_p \subtype t ~|~ p \in P \} \cup \{ t \subtype \mathsf{ebx}_q \}\]
\end{example}

\subsection{Addition and Subtraction}
\label{add-and-sub}
It is useful to track translations of a value through additions or subtraction of a constant. To that end,
we overload the {\verb|add(x,y)|} and {\verb|sub(x,y)|} operations in the cases where
{\verb|x|} or {\verb|y|}
have statically-determined constant values.  For example, if {\verb|INT32(n)|} is a concrete numeric
value then

\begin{code}
add(v, INT32(n)) =
  let (v', t, c) = v in
    ( add(v', INT32(n)), t.+n, c )
\end{code}

In the case where neither operand is a statically-determined constant, we generate a fresh type variable
representing the result and a 3-place constraint on the type variables:

\begin{code}[mathescape]
add(x, y) =
  let (x', t$_1$, c$_1$) = x
      (y', t$_2$, c$_2$) = y
      t = $\text{fresh}$
  in
    ( add(x', y'),
      t,
      c$_1$ $\cup$ c$_2$ $\cup$ { $\textsc{Add}$(t$_1$, t$_2$, t) } )
\end{code}

\noindent
Similar interpretations are used for {\verb|sub(x,y)|}.

\subsection{Memory Loads and Stores}
Memory accesses are treated similarly to register accesses, except for the use of dereference accesses
and the handling of points-to sets.  For any abstract $\texttt{A}$-value $\texttt{a}$ and
$\texttt{A}$-state $\texttt{s}$, let {\verb|A(a,s)|} denote a set of type variables representing
the address $\texttt{A}$ in the context $\texttt{s}$. Furthermore, define $\texttt{PtsTo}_\textsc{A}(a,s)$
to be a set of type variables representing the values pointed to by $\texttt{a}$ in the context $\texttt{s}$.

The semantics of the $\texttt{N}$-bit load and store functions  $\texttt{memAccess$_N$}$ and
$\texttt{memUpdate$_N$}$ are given by

\begin{code}[mathescape]
memAccess$_{\texttt{N}}$(s, a) =
  let (s$_0$, c$_s$)     = s
      (a$_0$, t, c$_t$) = a
      c$_{pt}$ = { x$\subtype$t.load.$\sigma$N@0
           | x $\in$ PtsTo(a$_0$,s$_0$) }
  in
      ( memAccess$_{\texttt{N}}$(s$_0$, a$_0$),
        t.load.$\sigma$N@0,
        c$_s$ $\cup$ c$_t$ $\cup$ c$_{pt}$ )

memUpdate$_{\texttt{N}}$(s, a, v) =
  let (s$_0$, c$_s$)     = s
      (a$_0$, t, c$_t$)  = a
      (v$_0$, v, c$_v$) = v
      c$_{pt}$ = { t.store.$\sigma$N@0$\subtype$x
           | x $\in$ PtsTo(a$_0$,s$_0$) }
  in
      ( memUpdate$_{\texttt{N}}$(s$_0$, a$_0$, v$_0$),
        c$_s$ $\cup$ c$_t$ $\cup$ c$_v$ $\cup$ c$_{pt}$
          $\cup$ { v$\subtype$t.store.$\sigma$N@0 } )
\end{code}

We achieved acceptable results by using a bare minimum points-to analysis that only tracks
constant pointers to the local activation record or the data section. The use of the
$\mathsf{.load}$ / $\mathsf{.store}$ accessors allows us to track multi-level pointer 
information without the need for explicit points-to data.  The minimal approach tracks just enough
points-to information to resolve references to local and global variables.

\subsection{Procedure Invocation}
Earlier analysis phases are responsible for delineating procedures and gathering
data about each procedure's formal-in and formal-out variables, including 
information about how parameters are stored on the stack or in registers. This
data is transformed into a collection of {\em locators} associated to each
function.  Each locator is bound to a type variable representing the formal; the
locator is responsible for finding an appropriate set of type variables 
representing the actual at a callsite, or the corresponding local within the
procedure itself.

\begin{example} 
Consider this simple program that invokes a 32-bit identity function.

\begin{code}
 p:    push ebx  ; writes to local ext4
 q:    call id
        ...
id:    ; begin procedure id()
 r:    mov eax, [esp+arg0]
       ret
\end{code}

The procedure {\verb|id|} will have two locators:
\begin{itemize}
  \item A locator $L_i$ for the single parameter, bound to a type variable 
        $\mathsf{id}_{i}$.
  \item A locator $L_o$ for the single return value, bound to a type variable
        $\mathsf{id}_{o}$.
\end{itemize}
At the procedure call site, the locator $L_i$ will return the type variable
$\mathsf{ext4}_p$ representing the stack location {\verb|ext4|} tagged by
its reaching definition. Likewise, $L_o$ will return the type variable
$\mathsf{eax}_q$ to indicate that the actual-out is held in the version of
{\verb|eax|} that is defined at point $q$.  The locator results are combined
with the locator's type variables, resulting in the constraint set
\[\{ \mathsf{ext4}_p \subtype \mathsf{id}_i,\quad \mathsf{id}_o \subtype \mathsf{eax}_q \}\]

Within procedure {\verb|id|}, the locator $L_i$ returns the type variable
$\mathsf{arg0}_{\verb|id|}$ and $L_o$ returns $\mathsf{eax}_r$, resulting in the constraint set
\[\{ \mathsf{id}_i \subtype \mathsf{arg0}_{\verb|id|}, \quad \mathsf{eax}_r \subtype \mathsf{id}_o \}\]
 \end{example}

A procedure may also be associated with a set of type constraints between the locator type variables, 
called the {\em procedure summary};
these type constraints may be inserted at function calls to model the known behavior of a function.
For example, invocation of {\verb|fopen|} will result in the constraints
\[\{ \mathsf{fopen}_{i_0} \subtype \underline{\mathsf{char*}},\;
\mathsf{fopen}_{i_1} \subtype \underline{\mathsf{char*}},\;
\underline{\mathsf{FILE*}} \subtype \mathsf{fopen}_o \} \]

To support polymorphic function invocation, we instantiate fresh versions of the
 locator type variables that are
tagged with the current callsite; this prevents type variables from multiple invocations of the same
procedure from being linked.
\begin{example}[cont'd]
When using callsite tagging, the callsite constraints generated by the locators would be
\[\{\mathsf{ext4}_p \subtype \mathsf{id}_i^q, \quad \mathsf{id}_o^q \subtype \mathsf{eax}_q\}\]
\end{example}
The callsite tagging must also be applied to any procedure summary. For example, a call to
{\verb|malloc|} will result in the constraints 
\[\{ \mathsf{malloc}^p_i \subtype \underline{\mathsf{size\_t}},
\quad \underline{\mathsf{void*}} \subtype \mathsf{malloc}^p_o\}\]
If {\verb|malloc|} is used twice within a single procedure, we see an effect
like {\verb|let|}-polymorphism: each use will be typed independently.

\subsection{Other Operations}

\subsubsection{Floating-point}
Floating point types are produced by calls to known library functions and through an 
abstract interpretation of reads to and writes from the floating point register bank.
We do not track register-to-register moves between floating point registers, though it would
be straightforward to add this ability. In theory, this causes us to lose precision when attempting to
distinguish between typedefs of floating point values; in practice, such typedefs appear to be
extremely rare.

\subsubsection{Bit Manipulation}
We assume that the operands and results of most bit-manipulation operations are
integral, with some special exceptions:
\begin{itemize}
\item Common idioms like {\verb|xor reg,reg|} and {\verb|or reg,-1|} are used to
initialize registers to certain constants. On x86 these instructions can be encoded
with 8-bit immediates, saving space relative to the equivalent versions 
{\verb|mov reg,0|} and {\verb|mov reg,-1|}.  We do not assume that the results of these
operations are of integral type.
\item We discard any constraints generated while computing a value that is only used to
update a flag status. In particular, on x86 the operation {\verb|test reg1,reg2|}
is implemented like a bitwise-$\textsc{and}$ that discards its result, only retaining the effect
on the flags.
\item For specific operations such as $y := x \textsc{ and } \texttt{0xfffffffc}$
and $y := x \textsc { or } 1$, we act as if they were equivalent to $y := x$. This is
because these specific operations are often used for {\em bit-stealing}; for example,
 requiring pointers to be aligned on 4-byte boundaries frees the lower two bits of a
pointer for other purposes such as marking for garbage collection.
\end{itemize}
 
\subsection {Additive Constraints}
\label{additive-constraints}

The special constraints $\textsc{Add}$ and $\textsc{Sub}$ are used to conditionally
propagate information about which type variables represent pointers and which
represent integers when the variables are related through addition or subtraction.
The deduction rules for additive constraints are summarized in \autoref{add-sub-deduce}.
 We obtained good results
by inspecting the unification graph used for computing $\mathcal{L}(S_i)$; the
graph can be used to quickly determine whether
a variable has pointer- or integer-like capabilities.  In practice, the 
constraint set also should be updated with new subtype constraints as the
additive constraints are applied, and a fully applied constraint can be dropped
from $\mathcal{C}$.  We omit these details for simplicity.

\begin{figure*}
\centering
\newcommand{\isp}{$p$}
\newcommand{\isi}{$i$}
\newcommand{\mkp}{$P$}
\newcommand{\mki}{$I$}

\begin{tabular}{lcccccc|ccccccc}
\multicolumn{1}{c}{} & \multicolumn{6}{c}{$\textsc{Add}$} & \multicolumn{7}{c}{$\textsc{Sub}$} \\
\cline{2-14}\noalign{\smallskip}
$X$ & \isi & \mki & \isp & \mkp & \mki & \isi
    & \isi & \mki & \mkp & \mkp & \isp & \isp & \isp \\
$Y$ & \isi & \mki & \mki & \isi & \isp & \mkp
    & \mki & \isi & \isi & \isp & \mkp & \isi & \mki \\
$Z$ & \mki & \isi & \mkp & \isp & \mkp & \isp
    & \mki & \isi & \isp & \mki & \isi & \mkp & \isp \\ 

\end{tabular}
\caption{Inference rules for $\textsc{Add}(X,Y;Z)$ and $\textsc{Sub}(X,Y;Z)$.  Lower case
letters denote known integer or pointer types.  Upper case letters denote inferred types.
For example, the first column says that if $X$ and $Y$ are integral types in
an $\textsc{Add}(X,Y;Z)$ constraint, then $Z$ is integral as well.}
\label{add-sub-deduce}
\end{figure*}

\section{Normal Forms of Proofs}
\label{normal-form-appendix}

A finite constraint set with recursive constraints can have an infinite entailment closure. In order
to manipulate entailment closures efficiently, we need finite (and small) representations of
infinite constraint sets.  The first step towards achieving a finite representation of entailment
is to find a normal form for every derivation.  In \autoref{construction}, finite models of
entailment closure will be constructed that manipulate representations of these normal forms.

\begin{lemma}
For any statement $P$ provable from $\mathcal{C}$, there exists a derivation of $\mathcal{C} \proves P$
that does not use rules $\textsc{S-Refl}$ or $\textsc{T-Inherit}$.
\end{lemma}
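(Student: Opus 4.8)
The plan is to induct on the size of a derivation $D$ of $\mathcal{C} \proves P$ and rewrite $D$, by a sequence of local commuting conversions, into an equivalent derivation using neither $\textsc{S-Refl}$ nor $\textsc{T-InheritR}$. (I read ``$\textsc{T-Inherit}$'' as $\textsc{T-InheritR}$, the ``upward'' rule; it is the redundant one. Its companion $\textsc{T-InheritL}$, which grants a supertype a capability held by one of its subtypes, is \emph{not} eliminable, because re-deriving its conclusion through $\textsc{S-Field}_{\oplus/\ominus}$ would already require that conclusion as the $\term{\cdot}$-premise.) The induction is by structural recursion on the last rule of $D$; the interesting cases are the two rules being removed.

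\textbf{Removing $\textsc{T-InheritR}$.} Suppose the last step infers $\term{\alpha.\ell}$ from $\alpha \subtype \beta$ (subderivation $D_1$) and $\term{\beta.\ell}$ (subderivation $D_2$). If $\alpha$ and $\beta$ are the same derived variable the step is a no-op and (the normalized form of) $D_2$ already proves $P$. Otherwise normalize $D_1, D_2$ by the induction hypothesis and feed them into $\textsc{S-Field}_\oplus$ or $\textsc{S-Field}_\ominus$ according to $\langle\ell\rangle$ --- legal precisely because $\textsc{S-Field}$ wants $\term{\beta.\ell}$ with $\beta$ the larger variable --- obtaining $\alpha.\ell \subtype \beta.\ell$ in the covariant case or $\beta.\ell \subtype \alpha.\ell$ in the contravariant case; then $\textsc{T-Left}$ (resp.\ $\textsc{T-Right}$) extracts $\term{\alpha.\ell}$. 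This introduces no new $\textsc{S-Refl}$.

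\textbf{Removing $\textsc{S-Refl}$.} Any $\textsc{S-Refl}$ step proves a reflexive judgement $\alpha \subtype \alpha$ from $\term{\alpha}$. Unless it is the root of $D$, its conclusion is consumed by exactly one later rule, and in each case the step can be excised: if the consumer is $\textsc{S-Trans}$ (either slot) the other premise already proves the $\textsc{S-Trans}$ conclusion; if it is a $\textsc{T-Inherit}$ rule that step is degenerate and its other premise suffices; if it is $\textsc{T-Left}$ or $\textsc{T-Right}$ the premise $\term{\alpha}$ of the $\textsc{S-Refl}$ already supplies the conclusion; if it is $\textsc{S-Field}_{\oplus/\ominus}$ the conclusion is again reflexive, namely $\alpha.\ell \subtype \alpha.\ell$, so we chase the chain of such steps forward to its first non-$\textsc{S-Field}$ consumer and excise the whole chain (each $\term{\alpha.u}$ it consumes is a leaf of $D$ used nowhere else, and the chain's final judgement is disposed of exactly as above; a chain ending in $\textsc{T-Left}/\textsc{T-Right}$ only re-proves a $\term{\cdot}$ already present as one of its inputs). $\textsc{S-Pointer}$ and $\textsc{T-Prefix}$ take no subtyping premise, so they never consume a reflexive judgement. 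The only residue is when the $\textsc{S-Refl}$ step \emph{is} the root, i.e.\ $P$ is itself a reflexive judgement $\tau \subtype \tau$.

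\textbf{Main obstacle.} That residue is the crux: a reflexive goal $\tau \subtype \tau$ cannot in general be obtained without $\textsc{S-Refl}$, since $\textsc{S-Field}$ only reduces it to reflexivity on a shorter prefix and ultimately to $\alpha_0 \subtype \alpha_0$ on a base variable, where $\textsc{S-Refl}$ is the unique applicable rule. I would dispose of it by treating reflexive judgements as a trivial stratum handled outside the core calculus: either $\tau \subtype \tau$ (or a reflexive prefix of it) already occurs in $\mathcal{C}$, so an $\textsc{S-Refl}$-free derivation exists, or it never arises as a standalone goal in the uses the lemma is put to --- in the pushdown encoding of \autoref{normal-form-appendix} such a fact is realized by the idle run $\stStart \to X \to \stEnd$ rather than by a proof in the deduction system of \autoref{deduction-rules}. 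Making that boundary precise, together with checking that the $\textsc{S-Field}$ chains above terminate (using $\langle \varepsilon \rangle = \oplus$ and the sign monoid), is the only step requiring real care; everything else is finite bookkeeping over the few rules that can consume a subtyping or capability judgement.
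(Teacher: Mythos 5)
Your core replacements coincide exactly with the paper's own proof, which is only two sentences long: it declares the redundancy of $\textsc{S-Refl}$ ``immediate'' and disposes of $\textsc{T-Inherit}$ by replacing each use with $\textsc{S-Field}_\oplus$ followed by $\textsc{T-Left}$, or $\textsc{S-Field}_\ominus$ followed by $\textsc{T-Right}$ --- precisely the rewriting you give for $\textsc{T-InheritR}$. What distinguishes your writeup is that carrying the argument out carefully surfaces two points the paper's one-liner elides. First, you are right that the stated recipe only covers $\textsc{T-InheritR}$: since the $\term{\beta.\ell}$ premise of both $\textsc{S-Field}$ rules sits on the \emph{supertype} side of $\alpha \subtype \beta$, re-deriving the conclusion of $\textsc{T-InheritL}$ (a capability flowing from subtype to supertype) through $\textsc{S-Field}$ is circular; indeed for $\mathcal{C} = \{\alpha \subtype \beta,\ \term{\alpha.\ell}\}$ the judgement $\term{\beta.\ell}$ is provable only via $\textsc{T-InheritL}$. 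The paper's two listed replacements are exactly the two variance cases of $\textsc{T-InheritR}$ and nothing more, yet the downstream normal-form theorem asks for \emph{all} of $\textsc{T-Inherit}$ to be gone --- so your narrowing of the lemma to $\textsc{T-InheritR}$ is a defensible reading, but it does leave a mismatch with how the lemma is later used. Second, your observation that a top-level reflexive goal $\tau \subtype \tau$ genuinely requires $\textsc{S-Refl}$ (unless it is an axiom or lies on a subtyping cycle) is correct, and the paper's ``immediate'' silently assumes such goals never matter; your case analysis of the possible consumers of a reflexive judgement is sound and is the content that ``immediate'' was standing in for. In short: same approach, executed more honestly; the residue you flag is a defect of the lemma's literal statement rather than of your argument, though as written your proof establishes a slightly weaker claim than the one the paper goes on to rely upon.
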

\begin{proof}
The redundancy of $\textsc{S-Refl}$ is immediate. As for $\textsc{T-Inherit}$, any use of that rule in a
proof can be replaced by $\textsc{S-Field}_\oplus$ followed by $\textsc{T-Left}$, or $\textsc{S-Field}_\ominus$
followed by $\textsc{T-Right}$.
\end{proof}

In the following, we make the simplifying assumption that $\mathcal{C}$ is closed under $\textsc{T-Left}$,
$\textsc{T-Right}$, and $\textsc{T-Prefix}$. Concretely, we are requiring that
\begin{enumerate}
\item if $\mathcal{C}$ contains a subtype constraint $\alpha \subtype \beta$ as an axiom, it also contains
the axioms $\term{\alpha}$ and $\term{\beta}$.
\item if $\mathcal{C}$ contains a term declaration $\term{\alpha}$, it also contains term
declarations for all prefixes of $\alpha$.
\end{enumerate}

\begin{lemma}
If $\mathcal{C}$ is closed under $\textsc{T-Left}$, $\textsc{T-Right}$, and $\textsc{T-Prefix}$
then any statement provable from $\mathcal{C}$ can be proven without use of $\textsc{T-Prefix}$.
\end{lemma}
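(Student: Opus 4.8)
The plan is to argue by structural induction on derivations, showing that each application of \textsc{T-Prefix} can be removed once $\mathcal{C}$ is closed under \textsc{T-Left}, \textsc{T-Right}, and \textsc{T-Prefix}. By the previous lemma we may assume the derivation we are handed uses neither \textsc{S-Refl} nor \textsc{T-Inherit}; every transformation below only deletes rule applications or appends instances of \textsc{T-Left}, \textsc{T-Right}, \textsc{S-Trans}, \textsc{S-Field}$_\oplus$, \textsc{S-Field}$_\ominus$, or \textsc{S-Pointer}, so we never leave that fragment (and if one preferred to keep \textsc{T-Inherit}, its two forms are dispatched exactly like the \textsc{S-Field} cases below). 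The heart of the argument I would isolate as a helper lemma: \emph{if $\mathcal{C} \proves \term{X.\ell}$ by a \textsc{T-Prefix}-free derivation, then $\mathcal{C} \proves \term{X}$ by a \textsc{T-Prefix}-free derivation.}

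I would prove the helper lemma by strong induction on derivation size, case-splitting on the last rule of the derivation of $\term{X.\ell}$, which under our restrictions is an axiom, \textsc{T-Left}, or \textsc{T-Right}. If $\term{X.\ell}\in\mathcal{C}$, then $\term{X}\in\mathcal{C}$ by closure under \textsc{T-Prefix}. If the last rule is \textsc{T-Left} with premise $X.\ell \subtype \beta$, I case-split again on the last rule of \emph{that} subtyping derivation: if it is an axiom, closure under \textsc{T-Left} then \textsc{T-Prefix} puts $\term{X}$ in $\mathcal{C}$; if it is \textsc{S-Trans}, re-derive $\term{X.\ell}$ by \textsc{T-Left} from the strictly smaller left premise and recurse; if it is \textsc{S-Field}$_\oplus$, the left premise is exactly $X \subtype \beta'$, so one \textsc{T-Left} step yields $\term{X}$; if it is \textsc{S-Field}$_\ominus$, the relevant premise is $\alpha' \subtype X$, so one \textsc{T-Right} step yields $\term{X}$; and if it is \textsc{S-Pointer}, then $\ell = \mathsf{store}$ and there is a strictly smaller sub-derivation of $\term{X.\mathsf{store}} = \term{X.\ell}$ on which we recurse. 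The \textsc{T-Right} case is symmetric, swapping the roles of \textsc{S-Field}$_\oplus$ and \textsc{S-Field}$_\ominus$ and producing $\ell = \mathsf{load}$ in the \textsc{S-Pointer} subcase. In every branch we either exhibit $\term{X}$ as an axiom, obtain it by a single \textsc{T-Left}/\textsc{T-Right} step from a premise already in hand, or recurse on a strictly smaller \textsc{T-Prefix}-free derivation of $\term{X.\ell}$, so the induction is well founded.

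The lemma itself then follows by a second strong induction on derivation size. For a subtype conclusion $X \subtype Y$, the last rule is an axiom, \textsc{S-Trans}, \textsc{S-Field}$_\oplus$, \textsc{S-Field}$_\ominus$, or \textsc{S-Pointer}; recurse on the strictly smaller premises and reassemble with the same rule. For a term conclusion $\term{X}$ whose last rule is an axiom, \textsc{T-Left}, or \textsc{T-Right}, recurse into the subtyping premise and reapply the rule. The only remaining case is $\term{X}$ derived by \textsc{T-Prefix} from $\term{X.\ell}$: apply the induction hypothesis to the strictly smaller premise derivation to get a \textsc{T-Prefix}-free derivation of $\term{X.\ell}$, and then feed it to the helper lemma to obtain a \textsc{T-Prefix}-free derivation of $\term{X}$.

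I expect the main obstacle to be the \textsc{S-Pointer} and \textsc{S-Trans} subcases of the helper lemma: these are the branches where the ``structural'' move does not immediately return the term judgment we want and instead forces a recursive call, so some care is needed to fix the induction measure (derivation size) and to verify that the sub-derivation we recurse on really is smaller than the current one. By contrast the \textsc{S-Field} subcases are immediate, since the variance rules bundle precisely the premise ($X \subtype \beta'$ or $\alpha' \subtype X$) from which $\term{X}$ is recovered in a single step.
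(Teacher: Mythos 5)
Your proof is correct and follows essentially the same route as the paper's: both eliminate \textsc{T-Prefix} by case analysis on how $\term{X.\ell}$ was derived (axiom, \textsc{T-Left}/\textsc{T-Right}) and then on how the underlying subtype judgement arose (axiom, \textsc{S-Field}, \textsc{S-Pointer}, \textsc{S-Trans}), with the same termination argument via strictly smaller derivations. Your packaging of the core step as an explicit helper lemma with a stated induction measure is a cleaner presentation of the same argument, not a different one.
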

\begin{proof}
By the previous lemma, we may assume that $\textsc{S-Refl}$ and $\textsc{T-Inherit}$ are not used in
the proof.
We will prove the lemma by transforming all subproofs that end in a use of $\textsc{T-Prefix}$ to remove
that use.  To that end, assume we have a proof ending with the derivation of $\term{\alpha}$ from
$\term{\alpha.\ell}$ using $\textsc{T-Prefix}$.  We then enumerate the ways that $\term{\alpha.\ell}$
may have been proven.
\begin{itemize}
\item If $\term{\alpha.\ell} \in \mathcal{C}$ then $\term{\alpha} \in \mathcal{C}$ already, so the
entire proof tree leading to $\term{\alpha}$ may be replaced with an axiom from $\mathcal{C}$.
\item The only other way a $\term{\alpha.\ell}$ could be introduced is through $\textsc{T-Left}$ or
$\textsc{T-Right}$. For simplicity, let us only consider the $\textsc{T-Left}$ case; $\textsc{T-Right}$ is
similar.  At this point, we have a derivation tree that looks like
\[
\inferrule
  {\alpha.\ell \subtype \varphi}
  {\inferrule {\term{\alpha.\ell}} {\term{\alpha}}}
\]
How was $\alpha.\ell \subtype \varphi$ introduced? If it were an axiom of $\mathcal{C}$ then our assumed
closure property would imply already that $\term{\alpha} \in \mathcal{C}$. The other cases are:
\begin{itemize}
\item $\varphi = \beta.\ell$ and $\alpha.\ell \subtype \varphi$ was introduced through one of the
$\textsc{S-Field}$ axioms. In either case, $\alpha$ is the left- or right-hand side of one of the
antecedents to $\textsc{S-Field}$, so the whole subderivation including the $\textsc{T-Prefix}$ axiom
can be replaced with a single use of $\textsc{T-Left}$ or $\textsc{T-Right}$.
\item $\ell = \mathsf{.store}$ and $\varphi = \alpha.\mathsf{load}$, with $\alpha.\ell \subtype \varphi$
introduced via $\textsc{S-Pointer}$. In this case, $\term{\alpha.\ell}$ is already an antecedent to
$\textsc{S-Pointer}$, so we may elide the subproof up to $\textsc{T-Prefix}$ to inductively reduce
the problem to a simpler proof tree.
\item Finally, we must have $\alpha.\ell \subtype \varphi$ due to $\textsc{S-Trans}$. But in this case,
the left antecedent is of the form $\alpha.\ell \subtype \beta$; so once again, we may elide a subproof
to get a simpler proof tree.
\end{itemize}
Each of these cases either removes an instance of $\textsc{T-Prefix}$ or results in a strictly smaller
proof tree. It follows that iterated application of these simplification rules results in a proof of
$\term{\alpha}$ with no remaining instances of $\textsc{T-Prefix}$.
\end{itemize}
\end{proof}
\begin{corollary}
If $\mathcal{C} \proves \term{\alpha}$, then either $\term{\alpha} \in \mathcal{C}$, or
$\mathcal{C} \proves \alpha \subtype \beta, \beta \subtype \alpha$ for some $\beta$. 
\end{corollary}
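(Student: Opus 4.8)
The plan is to reduce to a normal-form derivation and then read off its final rule. First, since $\mathcal{C}$ is closed under $\textsc{T-Left}$, $\textsc{T-Right}$, and $\textsc{T-Prefix}$, the two preceding lemmas let me replace any derivation of $\mathcal{C}\proves\term{\alpha}$ by one that uses none of $\textsc{S-Refl}$, $\textsc{T-Inherit}$, or $\textsc{T-Prefix}$; fix such a ``clean'' derivation $\Pi$. Inspecting \autoref{deduction-rules}, the only rules whose conclusion is an existence assertion $\term{\cdot}$ are $\textsc{T-Left}$, $\textsc{T-Right}$, $\textsc{T-InheritL}$, $\textsc{T-InheritR}$, and $\textsc{T-Prefix}$ — and of course $\term{\alpha}$ may itself be an axiom of $\mathcal{C}$. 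Since $\Pi$ uses neither $\textsc{T-Inherit}$ nor $\textsc{T-Prefix}$, its last step is either ``$\term{\alpha}\in\mathcal{C}$'', in which case the first alternative holds, or an instance of $\textsc{T-Left}$ or $\textsc{T-Right}$, in which case $\Pi$ contains a clean sub-derivation of a subtype constraint having $\alpha$ on its left-hand (resp.\ right-hand) side; call the other side $\beta$.

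It then remains to turn ``$\mathcal{C}\proves\alpha\subtype\beta$'' (or ``$\mathcal{C}\proves\beta\subtype\alpha$'') into the claimed two-sided conclusion, or else into $\term{\alpha}\in\mathcal{C}$. I would do this by a well-founded induction on the lexicographic pair consisting of the word length of $\alpha$ and the size of $\Pi$, branching on the last rule of the subtype sub-derivation. If that rule is an axiom, the constraint $\alpha\subtype\beta$ (or $\beta\subtype\alpha$) lies in $\mathcal{C}$, so the closure hypothesis on axioms forces $\term{\alpha}\in\mathcal{C}$. The rule cannot be $\textsc{S-Refl}$, which $\Pi$ avoids. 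If it is $\textsc{S-Trans}$, deleting one transitivity step exposes a strictly smaller clean derivation of $\term{\alpha}$ (still via $\textsc{T-Left}$ or $\textsc{T-Right}$) at the same word length, and the inductive hypothesis applies. The remaining cases — $\textsc{S-Field}_\oplus$, $\textsc{S-Field}_\ominus$, and $\textsc{S-Pointer}$ — all force $\alpha$ to have the form $\alpha'.\ell$; here I would use the closure hypothesis on prefixes together with $\textsc{T-Prefix}$ to obtain $\mathcal{C}\proves\term{\alpha'}$, invoke the inductive hypothesis at the strictly shorter variable $\alpha'$, and push the resulting relation through the label $\ell$ by re-applying $\textsc{S-Field}$ with the appropriate variance (and, in the pointer case, $\textsc{S-Pointer}$) in both directions, using $\mathcal{C}\proves\term{\alpha'.\ell}$ to discharge the existence side-conditions (deriving the auxiliary $\term{\zeta.\ell}$ premises from $\textsc{T-InheritR}$).

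The main obstacle is precisely this last family of cases. Because the field rules are one-directional, deriving $\alpha\subtype\beta$ from a base relation $\alpha'\subtype\beta'$ is immediate, but recovering the reverse relation $\beta\subtype\alpha$ requires the reverse base relation; consequently the induction cannot be run on a one-sided statement and must carry the full two-sided invariant, which is exactly what makes the descent to the prefix $\alpha'$ (and hence the lexicographic measure) necessary. Everything else — the axiom and $\textsc{S-Trans}$ cases, and the bookkeeping of which rules can conclude a $\term{\cdot}$ judgement — is routine once the two normal-form lemmas above are in hand; if only the weaker disjunctive reading (that $\mathcal{C}$ proves $\alpha\subtype\beta$ \emph{or} $\beta\subtype\alpha$ for some $\beta$) is intended, then the first paragraph already suffices and the inductive argument collapses.
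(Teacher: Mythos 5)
Your first paragraph is exactly the paper's argument: the corollary is stated without proof because it falls out immediately from the two preceding lemmas --- once $\textsc{S-Refl}$, $\textsc{T-Inherit}$, and $\textsc{T-Prefix}$ are eliminated, the only remaining rules that conclude a $\textsc{Var}$ judgement are $\textsc{T-Left}$ and $\textsc{T-Right}$, so $\term{\alpha}$ is either an axiom or is witnessed by a derivable subtype constraint with $\alpha$ on the left (resp.\ right). The comma in the corollary's statement is to be read disjunctively, i.e.\ as listing the two possible positions of $\alpha$, which is precisely what $\textsc{T-Left}$ versus $\textsc{T-Right}$ gives you; so your first paragraph already suffices and matches the intended proof.

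The bulk of your effort goes into the conjunctive reading, and that part is both unnecessary and not quite sound. It is unnecessary because if both $\alpha \subtype \beta$ and $\beta \subtype \alpha$ were really demanded, the statement would be discharged trivially: $\mathcal{C} \proves \term{\alpha}$ gives $\mathcal{C} \proves \alpha \subtype \alpha$ by $\textsc{S-Refl}$ (the corollary constrains what $\mathcal{C}$ \emph{proves}, not what a normal-form derivation may use), so $\beta = \alpha$ always works and the corollary would carry no content. It is not quite sound because your lexicographic induction has a gap in the $\textsc{S-Field}$/$\textsc{S-Pointer}$ cases: when the inductive hypothesis at the prefix $\alpha'$ returns the first disjunct $\term{\alpha'} \in \mathcal{C}$, you cannot conclude $\term{\alpha'.\ell} \in \mathcal{C}$ --- the closure hypothesis makes $\mathcal{C}$ closed under \emph{prefixes} of declared variables, not under extensions --- and you have no two-sided relation at $\alpha'$ to push through $\ell$, so that branch of the induction does not close (except by falling back on $\textsc{S-Refl}$, which again collapses the whole argument). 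I would keep your first paragraph and drop the rest.
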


To simplify the statement of our normal-form theorem, recall that we defined the
variadic rule $\textsc{S-Trans'}$ to remove the degrees of freedom from 
regrouping repeated application of $\textsc{S-Trans}$:
\[
\inferrule
  {\alpha_1 \subtype \alpha_2,\; \alpha_2 \subtype \alpha_3,\; \dots\; \alpha_{n-1} \subtype \alpha_n}
  {\alpha_1 \subtype \alpha_n}
  \; \textsc{(S-Trans')}
\]

\begin{theorem}[Normal form of proof trees]
\label{normal-form-thm}
Let $\mathcal{C}$ be a constraint set that is closed under $\textsc{T-Left}$, $\textsc{T-Right}$,
and $\textsc{T-Prefix}$.  Then any statement provable from $\mathcal{C}$ has a derivation such that
\begin{itemize}
\item There are no uses of the rules $\textsc{T-Prefix}$, $\textsc{T-Inherit}$, or $\textsc{S-Refl}$.
\item For every instance of $\textsc{S-Trans'}$ and every pair of adjacent antecedents, at least one
is {\em not} the result of a $\textsc{S-Field}$ application.
\end{itemize}
\end{theorem}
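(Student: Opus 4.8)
The first bullet is immediate from the two preceding lemmas: the lemma eliminating $\textsc{S-Refl}$ and $\textsc{T-Inherit}$ produces a derivation free of those rules, and — since $\mathcal{C}$ is closed under $\textsc{T-Left}$, $\textsc{T-Right}$, and $\textsc{T-Prefix}$ — the next lemma then removes every use of $\textsc{T-Prefix}$. So I would start from a derivation $D$ of the given statement that already satisfies the first bullet and only has to rewrite $D$ into one that additionally satisfies the second bullet. Since the variadic rule $\textsc{S-Trans'}$ absorbs any nested regroupings of $\textsc{S-Trans}$, I also assume $D$ is normalized so that every maximal chain of transitivity steps is presented as a single $\textsc{S-Trans'}$ instance; this costs nothing and in particular does not change the number of $\textsc{S-Field}$ applications in $D$.

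The plan for the second bullet is a minimization argument. Among all derivations of the statement that satisfy the first bullet, pick one — call it $D$ — with the fewest applications of $\textsc{S-Field}_\oplus$ and $\textsc{S-Field}_\ominus$ combined; I claim $D$ satisfies the second bullet. Suppose not: then some $\textsc{S-Trans'}$ in $D$ deriving $\alpha_1 \subtype \alpha_n$ from $\alpha_1 \subtype \alpha_2, \dots, \alpha_{n-1}\subtype\alpha_n$ has a pair of adjacent antecedents $\alpha_i \subtype \alpha_{i+1}$ and $\alpha_{i+1} \subtype \alpha_{i+2}$, each concluded by an $\textsc{S-Field}$ application (note $\textsc{S-Pointer}$ is a distinct rule, so a two-$\textsc{S-Field}$ adjacency is the only configuration forbidden by the statement). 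An $\textsc{S-Field}$ application always produces two derived type variables sharing the same final label, and $\alpha_{i+1}$ has a unique final label; hence both applications use one and the same label $\ell$, whose variance is therefore fixed, and we may write $\alpha_i = p.\ell$, $\alpha_{i+1} = q.\ell$, $\alpha_{i+2} = r.\ell$.

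Now split on $\langle \ell \rangle$. If $\langle \ell \rangle = \oplus$, the premises of the two $\textsc{S-Field}_\oplus$ applications are subderivations of $p \subtype q$ and $q \subtype r$ already present in $D$; combine them by $\textsc{S-Trans'}$ into a single $\textsc{S-Trans'}$ concluding $p \subtype r$ (flattening if either premise was itself an $\textsc{S-Trans'}$), then apply $\textsc{S-Field}_\oplus$ once to get $p.\ell \subtype r.\ell$, i.e. $\alpha_i \subtype \alpha_{i+2}$, and splice this single antecedent in place of the two antecedents $\alpha_i \subtype \alpha_{i+1}$, $\alpha_{i+1} \subtype \alpha_{i+2}$ of the outer $\textsc{S-Trans'}$ (if $n = 3$ the outer $\textsc{S-Trans'}$ disappears entirely, leaving just the new $\textsc{S-Field}_\oplus$ step). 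The case $\langle \ell \rangle = \ominus$ is symmetric: the $\textsc{S-Field}_\ominus$ premises are $q \subtype p$ and $r \subtype q$, which combine to $r \subtype p$ and then yield $p.\ell \subtype r.\ell$ by $\textsc{S-Field}_\ominus$. Either way the new derivation concludes the same statement, still satisfies the first bullet (the rewrite introduces only $\textsc{S-Field}$ and $\textsc{S-Trans'}$ steps, never $\textsc{T-Prefix}$, $\textsc{T-Inherit}$, or $\textsc{S-Refl}$), and uses exactly one fewer $\textsc{S-Field}$ application — contradicting the minimality of $D$.

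The main obstacle I anticipate is the bookkeeping around $\textsc{S-Trans'}$: I must use the variadic formulation carefully so that combining two premises and splicing the result back in does not reintroduce ambiguous regroupings, and I should make explicit that although the splice can create \emph{new} adjacent $\textsc{S-Field}$ pairs elsewhere in the proof, the total count of $\textsc{S-Field}$ applications is a strict monovariant, so the minimization argument (equivalently, iterating the rewrite to a fixed point) terminates. A secondary point worth spelling out is the claim that the two offending $\textsc{S-Field}$ applications cannot be one $\textsc{S-Field}_\oplus$ and one $\textsc{S-Field}_\ominus$; this is forced, since both act on the shared final label $\ell$ and $\langle\cdot\rangle$ is a function.
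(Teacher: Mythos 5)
Your proposal is correct and matches the paper's argument: the first bullet is discharged by the two preceding lemmas exactly as you say, and your local rewrite (combine the premises $p \subtype q$, $q \subtype r$ by transitivity, then apply a single $\textsc{S-Field}$, with the order reversed in the contravariant case) is precisely the paper's transformation for eliminating adjacent $\textsc{S-Field}$ antecedents; your minimization on the total $\textsc{S-Field}$ count just makes explicit the termination that the paper leaves implicit. The only bookkeeping the paper does that you omit is carrying along the $\textsc{Var}$ antecedent required by the combined $\textsc{S-Field}$ application, but this is harmless since the needed $\term{\gamma.\ell}$ (resp.\ $\term{\alpha.\ell}$) subderivation is already present among the antecedents of one of the two original applications.
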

\begin{proof}
The previous lemmas already handled the first point. As to the second, suppose that we had two
adjacent antecedents that were the result of $\textsc{S-Field}_\oplus$:
\[
\inferrule
  {\cdots \inferrule {\inferrule{P} {\alpha \subtype \beta}\quad \inferrule {Q}{\term{\beta.\ell}}}{\alpha.\ell \subtype \beta.\ell} \quad \inferrule {\inferrule{R}{\beta \subtype \gamma}\quad \inferrule{S}{\term{\gamma.\ell}}}{\beta.\ell \subtype \gamma.\ell} \cdots}
  { T }
\]
First, note that the neighboring applications of $\textsc{S-Field}_\oplus$ must both use the same field label
$\ell$, or else $\textsc{S-Trans}$ would not be applicable. But now observe that we may move a $\textsc{S-Field}$
application upwards, combining the two $\textsc{S-Field}$ applications into one:
\[
\inferrule
  { \cdots
    \inferrule
      {\inferrule
        {\inferrule{P}{\alpha \subtype \beta} \quad \inferrule{R}{\beta \subtype \gamma}}
        {\alpha \subtype \gamma}
       \quad {\inferrule{S}{\term{\gamma.\ell}}}}
      {\alpha.\ell \subtype \gamma.\ell}
    \cdots}
  { T }
\]
For completeness, we also compute the simplifying transformation for adjacent uses of $\textsc{S-Field}_\ominus$:
the derivation
\[
\inferrule
  {\cdots \inferrule {\inferrule{P} {\beta \subtype \alpha}\quad \inferrule {Q}{\term{\alpha.\ell}}}{\alpha.\ell \subtype \beta.\ell} \quad \inferrule {\inferrule{R}{\gamma \subtype \beta}\quad \inferrule{S}{\term{\beta.\ell}}}{\beta.\ell \subtype \gamma.\ell} \cdots}
  { T }
\]
can be simplified to
\[
\inferrule
  { \cdots
    \inferrule
      {\inferrule{Q}{\term{\alpha.\ell}}
       \quad
       \inferrule
        {\inferrule{R}{\gamma \subtype \beta} \quad \inferrule{P}{\beta \subtype \alpha}}
        {\gamma \subtype \alpha}}
      {\alpha.\ell \subtype \gamma.\ell}
    \cdots}
  { T }
\]
We also can eliminate the need for the subderivations $Q$ and $S$ except for the very first and very last
antecedents of $\textsc{S-Trans'}$ by pulling the term out of the neighboring subtype relation; schematically,
we can always replace
\[
\inferrule
    {\cdots
       \inferrule{ \inferrule {P} {\alpha \subtype \beta}
                   \quad
                   \inferrule {S} {\term {\beta.\ell}} } 
                 { \alpha.\ell \subtype \beta.\ell }
       \quad
       \inferrule {U} {\beta.\ell \subtype \gamma}
     \cdots}
    {T}
\]
by
\[
\inferrule
    {\cdots
       \inferrule{ \inferrule {P} {\alpha \subtype \beta}
                   \quad
                   \inferrule { \inferrule{U}{\beta.\ell \subtype \gamma} } {\term {\beta.\ell}} } 
                 { \alpha.\ell \subtype \beta.\ell }
       \quad
       \inferrule {U} {\beta.\ell \subtype \gamma}
     \cdots}
    {T}
\]
This transformation even works if there are several $\textsc{S-Field}$ applications in a row, as in
\[
\inferrule
    {\cdots
       \inferrule { \inferrule{ \inferrule {P} {\alpha \subtype \beta}
                                  \quad
                                \inferrule {Q} {\term {\beta.\ell}}
                              } 
                              { \alpha.\ell \subtype \beta.\ell }
                   \quad { \inferrule {R} {\term {\beta.\ell.\ell'}} }
                  }
                  { \alpha.\ell.\ell' \subtype \beta.\ell.\ell' }
       \quad
       \inferrule {U} {\beta.\ell.\ell' \subtype \gamma}
     \cdots}
    {T}
\]
which can be simplified to
\[
\inferrule
    {\cdots
       \inferrule { \inferrule{ \inferrule {P} {\alpha \subtype \beta}
                                  \quad
                                \inferrule { \inferrule
                                              { \inferrule {U} {\beta.\ell.\ell' \subtype \gamma} } 
                                              {\term{\beta.\ell.\ell'}} } {\term {\beta.\ell}}
                              } 
                              { \alpha.\ell \subtype \beta.\ell }
                   \quad { \inferrule { \inferrule {U} {\beta.\ell.\ell' \subtype \gamma} }
                                      {\term {\beta.\ell.\ell'}} }
                  }
                  { \alpha.\ell.\ell' \subtype \beta.\ell.\ell' }
       \quad
       \inferrule {U} {\beta.\ell.\ell' \subtype \gamma}
     \cdots}
    {T}
\]
Taken together, this demonstrates that the $\term{\beta.\ell}$ antecedents to $\textsc{S-Field}$ are
automatically satisfied if the consequent is eventually used in an $\textsc{S-Trans}$ application
where $\beta.\ell$ is a ``middle'' variable. This remains true even if there are several $\textsc{S-Field}$
applications in sequence.
\end{proof}

Since the $\textsc{S-Field}$ $\textsc{Var}$ antecedents are automatically satisfiable, we can use a simplified
schematic depiction of proof trees that omits the intermediate $\textsc{Var}$ subtrees.  In this simplified
depiction, the previous proof tree fragment would be written as
\[
\inferrule
    {\cdots
      \inferrule {
        \inferrule {
          \inferrule { P }
                     { \alpha \subtype \beta } }
        {\alpha.\ell \subtype \beta.\ell} }
      {\alpha.\ell.\ell' \subtype \beta.\ell.\ell'}
      \quad
      \inferrule {U} {\beta.\ell.\ell' \subtype \gamma}
      \cdots}
    {T}
\]

Any such fragment can be automatically converted to a full proof by re-generating $\textsc{Var}$
subderivations from the results of the input derivations $P$ and/or $U$.

\subsection{Algebraic Representation}
\label{alg-rep}
Finally, consider the form of the simplified proof tree that elides the $\textsc{Var}$ subderivations.
Each leaf of the tree is a subtype constraint $c_i \in \mathcal{C}$, and there is a unique way to
fill in necessary $\textsc{Var}$ antecedents and $\textsc{S-Field}$ applications to glue the
constraints $\{\mathcal{c}_i\}$ into a proof tree in normal form. In other words, the normal
form proof tree is completely determined by the sequence of leaf constraints $\{c_i\}$ and
the sequence of $\textsc{S-Field}$ applications applied to the final tree.  In effect, we have
a term algebra with constants $C_i$ representing the constraints $c_i \in \mathcal{C}$, an
associative binary operator $\odot$ that combines two compatible constraints via $\textsc{S-Trans}$,
inserting appropriate $\textsc{S-Field}$ applications, and a unary operator $S_\ell$ for each
$\ell \in \Sigma$ that represents an application of $\textsc{S-Field}$. The proof tree manipulations
provide additional relations on this term algebra, such as
\[S_\ell(R_1) \odot S_\ell(R_2) = \begin{cases}
S\ell(R_1 \odot R_2) & \text{when } \langle \ell \rangle = \oplus \\
S\ell(R_2 \odot R_1) & \text{when } \langle \ell \rangle = \ominus \\
\end{cases}\]
The normalization rules described in this section demonstrate that every proof of a subtype
constraint entailed by $\mathcal{C}$ can be represented by a term of the form
\[S_{\ell_1}\left(S_{\ell_2}\left(\cdots S_{\ell_n}(R_1 \odot \cdots \odot R_k)\cdots \right)\right)\]

\section{The $\mathsf{StackOp}$ Weight Domain}

In this section, we develop a weight domain $\mathsf{StackOp}_\Sigma$ that will be useful for
modeling constraint sets by pushdown systems. This weight domain is generated by symbolic 
constants representing actions on a stack. 

\begin{definition}
The weight domain $\mathsf{StackOp}_\Sigma$ is the idempotent $*$-semiring generated by the symbols
$\stPop{x}$, $\stPush{x}$ for all $x \in \Sigma$, subject only to the relation
\[\stPush{x} \otimes \stPop{y} = \delta(x,y) =  \begin{cases} \one & \text{if}~ x = y \\ \zero & \text{if} ~ x \neq y \end{cases}\]
To simplify the presentation, when $u = u_1 \cdots u_n$ we will sometimes write
\[\stPop{u} = \stPop{u_1} \otimes \cdots \otimes \stPop{u_n}\]
and
\[\stPush{u} = \stPush{u_n} \otimes \cdots \otimes \stPush{u_1}\]
\end{definition}

\begin{definition}
A monomial in $\mathsf{StackOp}_\Sigma$ is called {\em reduced} if its length cannot be shortened through applications of the $\stPush{x} \otimes \stPop{y} = \delta(x,y)$ rule. Every reduced monomial has the form
\[\stPop{u_1} \cdots \stPop{u_n} \otimes \stPush{v_m} \cdots \stPush{v_1}\]
for some $u_i, v_j \in \Sigma$. 
\end{definition}

Elements of $\mathsf{StackOp}_\Sigma$ can be understood to denote possibly-failing functions operating on
a stack of symbols from $\Sigma$:
\begin{align*}
\denotes{\zero} &= \lambda s ~.~ \mathsf{fail} \\
\denotes{\one} &= \lambda s ~.~ s \\
\denotes{\stPush{x}} &= \lambda s ~.~ \mathsf{cons}(x,s)\\
\denotes{\stPop{x}} &= \lambda s ~.~ \text{case} ~ s ~ \text{of} \\
& \qquad \mathsf{nil} \to \mathsf{fail} \\
& \qquad \mathsf{cons}(x',s') \to \text{if }x = x'\text{ then }s' \\
& \hspace{.262\textwidth} \text{else }\mathsf{fail}
\end{align*}
Under this interpretation, the semiring operations are interpreted as 
\begin{align*}
\denotes{X \oplus Y} &= \text{nondeterministic choice of } \denotes{X} ~ \text{or} ~ \denotes{Y} \\
\denotes{X \otimes Y} &= \denotes{Y} \circ \denotes{X} \\
\denotes{X^*} &= \text{nondeterministic iteration of } \denotes{X}
\end{align*}

Note that the action of a pushdown rule $R = \pdsrule{P}{u}{Q}{v}$ on a stack configuration 
$c = (X,w)$ is given by the application
\[\denotes{\stPop{P} \otimes \stPop{u} \otimes \stPush{v} \otimes \stPush{Q}} \left( \mathsf{cons}(X,w) \right)\]
The result will either be $\mathsf{cons}(X',w')$ where $(X',w')$ is the configuration obtained by applying rule $R$ to configuration $c$, or $\mathsf{fail}$ if the rule cannot be applied to $c$.

\begin{lemma}
There is a one-to-one correspondence between elements of $\mathsf{StackOp}_\Sigma$ and regular sets
of pushdown system rules over $\Sigma$.
\end{lemma}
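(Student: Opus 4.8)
The plan is to read the bijection off the pop/push normal form of monomials. The first step is the elementary observation that a reduced monomial $\stPop{u}\otimes\stPush{v}$, with $u,v\in\Sigma^*$, is exactly the data of a single pushdown rule over $\Sigma$ --- the stack-rewriting rule replacing a top-of-stack word $u$ by $v$, as is already implicit in the interpretation $\denotes{\cdot}$ of monomials on configurations. So it suffices to show that the elements of $\mathsf{StackOp}_\Sigma$ correspond bijectively to the \emph{regular} sets of reduced monomials, the correspondence sending an element to the set of its reduced-monomial summands.

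For the direction ``every element is a regular join of reduced monomials'', I would proceed as follows. By the defining relation $\stPush{x}\otimes\stPop{y}=\delta(x,y)$, read as the string-rewriting system $\stPush{x}\,\stPop{x}\to\varepsilon$ together with $\stPush{x}\,\stPop{y}\to\zero$ for $x\neq y$ on words over $A=\{\stPop{x},\stPush{x}\mid x\in\Sigma\}$, every monomial in the generators reduces either to $\zero$ or to a unique reduced monomial. An arbitrary element is built from such monomials by $\oplus$ and $(\cdot)^*$; using distributivity and idempotence of $\oplus$ it equals a join $\bigoplus_i r_i$ of reduced monomials, and --- reading the rational-expression structure of the element as a finite automaton over $A$ and normalizing --- the index set is a regular language of reduced words, i.e. a regular set of pushdown rules. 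The substantive part here is that normalization preserves regularity: this is the Benois-style closure property that the set of descendants of a regular language under a finite, \emph{special}, confluent string-rewriting system is again regular and effectively computable, applied to the subsystem $\stPush{x}\,\stPop{x}\to\varepsilon$ and then intersected with the (regular) language of reduced words. I expect to carry this out by the same automaton-saturation technique used in \autoref{mod-sat}: from a finite automaton for the element, add an $\varepsilon$-transition whenever two states are joined by a path reading $\stPush{x}\,\stPop{x}$, iterate to a fixpoint, and intersect.

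Injectivity I would get from confluence. The two rewrite rules have no overlapping left-hand sides (the push- and pop-alphabets are disjoint, and each left-hand side has length two) and the system is terminating, so by Newman's lemma it is confluent; hence the reduced words (together with $\zero$) form a transversal of the $\equiv$-classes on $A^*$, and normalization commutes with $\oplus$ and $\otimes$. Therefore the assignment ``element $\mapsto$ its set of reduced-monomial summands'' is well defined, i.e. that set is a complete invariant. Combined with the previous paragraph, this assignment and its inverse $R\mapsto\bigoplus_{r\in R}r$ are mutually inverse bijections between $\mathsf{StackOp}_\Sigma$ and the regular sets of reduced monomials, hence between $\mathsf{StackOp}_\Sigma$ and the regular sets of pushdown rules over $\Sigma$.

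The main obstacle is the regularity-preservation step: keeping the normal-form language regular when starting from an arbitrary rational expression over $A$, and in particular the Kleene-star case, where the normal form of $X^*$ is the closure of the reduced monomials of $X$ under the (polycyclic) product of reduced monomials. This is precisely where the saturation machinery already developed in the paper does the work; the remaining ingredients --- confluence, the bijection between reduced monomials and rules, and the term-by-term distributivity used to pass from expressions to joins --- are routine.
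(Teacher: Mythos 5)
The paper states this lemma without proof, so there is no in-paper argument to compare yours against; on its own merits, your proposal is correct and supplies the missing argument. The three ingredients you isolate are exactly the right ones: (i) a reduced monomial $\stPop{u}\otimes\stPush{v}$ is precisely the data of a single (control-state-free) pushdown rule, so the statement reduces to a bijection with regular sets of reduced words; (ii) the rewriting system $\stPush{x}\,\stPop{y}\to\delta(x,y)$ is length-reducing and has no critical pairs (every left-hand side is a push followed by a pop, and pushes and pops are disjoint symbol classes), so Newman's lemma gives unique normal forms and hence a well-defined, congruence-invariant ``set of reduced summands'' map, which yields injectivity; and (iii) the Benois-style closure property --- saturating an automaton for the rational expression with $\one$-edges across $\stPush{x}\,\stPop{x}$ paths and intersecting with the reduced words --- shows that the normal-form set of an arbitrary element, including under Kleene star, is regular; this is the same saturation machinery the paper itself develops in its construction of $\mathcal{A}^{\mathsf{sat}}_{\mathcal{C}}$, so your proof is consistent in spirit with the authors' toolkit. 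The only point worth flagging is that treating an element of $\mathsf{StackOp}_\Sigma$ as (a congruence class of) a regular language over the pop/push alphabet presupposes that the free idempotent $*$-semiring on a finite alphabet is the semiring of regular languages, i.e.\ that the $*$-operation satisfies the Kleene-algebra axioms; the paper's definition leaves those axioms unspecified, so this is a gap in the statement of the lemma rather than in your argument, but you should state the assumption explicitly.
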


\begin{definition}
The variance operator $\langle \cdot \rangle$ can be extended to $\mathsf{StackOp}_\Sigma$ by defining \[\langle \stPop{x} \rangle = \langle \stPush{x} \rangle = \langle x \rangle\]
\end{definition}
 
\section{Constructing the Transducer for a Constraint Set}
\label{construction}

\subsection{Building the Pushdown System}

Let $\mathcal{V}$ be a set of type variables and $\Sigma$ a set of field labels, equipped with
a variance operator $\langle \cdot \rangle$.  Furthermore, suppose that we have fixed a set
$\mathcal{C}$ of constraints over $\mathcal{V}$.  Finally, suppose that we can partition $\mathcal{V}$
into a set of {\em interesting} and {\em uninteresting} variables:
\[\mathcal{V} = \mathcal{V}_i \amalg \mathcal{V}_u\]

\begin{definition}
Suppose that $X, Y \in \mathcal{V}_i$ are interesting type variables, and there is a proof
$\mathcal C \proves X.u \subtype Y.v$.  We will call the proof {\em elementary} if the normal
form of its proof tree only involves uninteresting variables on internal leaves. We write 
\[\mathcal{C} \proves^{\mathcal{V}_i}_\text{elem} X.u \subtype Y.v\]
when $\mathcal{C}$ has an elementary proof of $X.u \subtype Y.v$.
\end{definition}

Our goal is to construct a finite state transducer that recognizes the relation
$\subtype^{\mathcal{V}_i}_\text{elem} \subseteq \mathcal{V}'_i \times \mathcal{V}'_i$ between
derived type variables defined by
\[\subtype^{\mathcal{V}_i}_\text{elem} = \left\{ (X.u, Y.v) ~|~ \mathcal{C}
\proves^{\mathcal{V}_i}_\text{elem} X.u \subtype Y.v\right\}\]
We proceed by constructing an unconstrained pushdown system $\mathcal{P}_\mathcal{C}$ whose derivations
model proofs in $\mathcal{C}$; a modification of Caucal's saturation algorithm \cite{caucal1992regular} is then used
to build the transducer representing $\Deriv{P_C}$.

\begin{definition}
Define the left- and right-hand tagging rules $\mathsf{lhs}$, $\mathsf{rhs}$ by
\[\mathsf{lhs}(x) = \begin{cases} x_{\mathsf{L}} & \text{if} ~ x \in \mathcal{V}_i \\
x &  \text{if} ~ x \in \mathcal{V}_o \end{cases}\]
\[\mathsf{rhs}(x) = \begin{cases} x_{\mathsf{R}} & \text{if} ~ x \in \mathcal{V}_i \\
x & \text{if} ~ x \in \mathcal{V}_o \end{cases}\]
\end{definition}

\begin{definition}
The unconstrained pushdown system $\mathcal{P}_\mathcal{C}$ associated to a constraint set $\mathcal{C}$
is given by the triple $(\widetilde{\mathcal{V}}, \widetilde{\Sigma}, \Delta)$ where
\begin{align*}
\widetilde{\mathcal{V}} &= \left(\mathsf{lhs}(\mathcal{V}_i) \amalg \mathsf{rhs}(\mathcal{V}_i) \amalg \mathcal{V}_o\right) \times \{ \oplus, \ominus \} \\
& \cup \{ \stStart, \stEnd \}
\end{align*}
We will write an element $(v, \odot) \in \widetilde{\mathcal{V}}$ as $v^\odot$.

The stack alphabet $\widetilde{\Sigma}$ is essentially the same as $\Sigma$, with a few extra tokens added
to represent interesting variables:
\[\widetilde{\Sigma} = \Sigma \cup \{ v^\oplus ~|~ v \in \mathcal{V}_i \} \cup \{ v^\ominus ~|~ v \in \mathcal{V}_i \}\]
To define the transition rules, we first introduce the helper functions:
\begin{align*}
\mathsf{rule}^\oplus(p.u \subtype q.v) &= \pdsrule{\mathsf{lhs}(p)^{\langle u \rangle}}{u}{\mathsf{rhs}(q)^{\langle v \rangle}}{v} \\
\mathsf{rule}^\ominus(p.u \subtype q.v) &= \pdsrule{\mathsf{lhs}(q)^{\ominus \cdot \langle v \rangle}}{v}{\mathsf{rhs}(p)^{\ominus \cdot \langle u \rangle}}{u} \\
\mathsf{rules}(c) &= \{ \mathsf{rule}^\oplus(c),~ \mathsf{rule}^\ominus(c)\}
\end{align*}
The transition rules $\Delta$ are partitioned into four parts 
$\Delta = \Delta_\mathcal{C} \amalg \Delta_\mathsf{ptr} \amalg \Delta_\mathsf{start} \amalg \Delta_\mathsf{end}$ where
\begin{align*}
\Delta_\mathcal{C} &= \bigcup_{c \in \mathcal{C}} \mathsf{rules}(c) \\
\Delta_\mathsf{ptr} &= \bigcup_{v \in \mathcal{V}'} \mathsf{rules}(v.\mathsf{store} \subtype v.\mathsf{load}) \\
\Delta_\mathsf{start} &= \left\{\pdsrule{\stStart}{v^\oplus}{v^\oplus_\mathsf{L}}{\varepsilon}~|~v \in \mathcal{V}_i \right\}\\
 &\cup \left\{\pdsrule{\stStart}{v^\ominus}{v^\ominus_\mathsf{L}}{\varepsilon}~|~v \in \mathcal{V}_i \right\}\\
\Delta_\mathsf{end} &= \left\{\pdsrule{v^\oplus_\mathsf{R}}{\varepsilon}{\stEnd}{v^\oplus}~|~v \in \mathcal{V}_i \right\}\\
 &\cup \left\{\pdsrule{v^\ominus_\mathsf{R}}{\varepsilon}{\stEnd}{v^\ominus}~|~v \in \mathcal{V}_i \right\}
\end{align*}
\end{definition}

\begin{note}
The $\{ \oplus, \ominus \}$ superscripts on the control states are used to track the current variance of
the stack state, allowing us to distinguish between uses of an axiom in co- and contra-variant position.
The tagging operations $\mathsf{lhs}$, $\mathsf{rhs}$ are used to prevent derivations from making use
of variables from $\mathcal{V}_i$, preventing $\mathcal{P}_\mathcal{C}$ from admitting  derivations that
represent non-elementary proofs.
\end{note}

\begin{note}
Note that although $\Delta_\mathcal{C}$ is finite,  $\Delta_\mathsf{ptr}$ contains rules for {\em every
derived type variable} and is therefore infinite. We carefully adjust for this in the saturation
rules below so that rules from $\Delta_\mathsf{ptr}$ are only considered lazily as $\Deriv{P_C}$ is
constructed.
\end{note}

\begin{lemma}
For any pair $(X^a u, Y^b v)$ in $\Deriv{P_C}$, we also have
$(Y^{\ominus \cdot b} v, X^{\ominus \cdot a} u) \in \Deriv{P_C}$.
\end{lemma}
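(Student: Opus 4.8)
The plan is to prove this by an involution argument: I will exhibit a variance-reversing involution $\iota$ on the configurations of $\mathcal{P}_\mathcal{C}$ under which every transition rule is carried to the \emph{reverse} of another rule in $\Delta$, and then propagate $\iota$ through the reflexive--transitive closure $\rewrites$ by induction on derivation length.

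First I would define $\iota$ on $\widetilde{\mathcal{V}} \cup \widetilde{\Sigma}$: it flips every variance superscript ($\oplus \leftrightarrow \ominus$), interchanges the tags $x_\mathsf{L} \leftrightarrow x_\mathsf{R}$ for $x \in \mathcal{V}_i$ (fixing the $\mathcal{V}_o$-states), swaps $\stStart \leftrightarrow \stEnd$, fixes every ordinary label $\ell \in \Sigma$, and flips the variance-tagged stack symbols $v^\oplus \leftrightarrow v^\ominus$; extend it letter-wise to stack words, so that it commutes with concatenation and is an involution on configurations. The crux is the claim that $\Delta$ is closed under $r \mapsto \iota(\overleftarrow{r})$, where $\overleftarrow{\cdot}$ reverses a rule, $\pdsrule{P}{a}{Q}{b} \mapsto \pdsrule{Q}{b}{P}{a}$. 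For $\Delta_\mathcal{C}$ this is a direct computation from the definitions of $\mathsf{rule}^\oplus$, $\mathsf{rule}^\ominus$ together with $\ominus \cdot \ominus = \oplus$ and the fact that $\mathsf{lhs}$ and $\mathsf{rhs}$ agree on $\mathcal{V}_o$: one checks $\iota(\overleftarrow{\mathsf{rule}^\oplus(c)}) = \mathsf{rule}^\ominus(c)$ and $\iota(\overleftarrow{\mathsf{rule}^\ominus(c)}) = \mathsf{rule}^\oplus(c)$, so each $\mathsf{rules}(c)$ --- hence $\Delta_\mathcal{C}$ --- is stable. The identical computation applies to $\Delta_\mathsf{ptr}$, every one of whose (infinitely many) rules is $\mathsf{rule}^\oplus$ or $\mathsf{rule}^\ominus$ of some $v.\mathsf{store} \subtype v.\mathsf{load}$. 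For $\Delta_\mathsf{start}$ and $\Delta_\mathsf{end}$, $\iota$ carries the reverse of each start-rule onto an end-rule and vice versa --- this is where the $\stStart \leftrightarrow \stEnd$ and $v^\oplus \leftrightarrow v^\ominus$ clauses of $\iota$ do their work.

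With the closure property in hand, the base case is immediate: if $(P, as) \rewrite (Q, bs)$ via $r$, then $\iota(\overleftarrow{r}) \in \Delta$ witnesses $(\iota(Q), \iota(b)\iota(s)) \rewrite (\iota(P), \iota(a)\iota(s))$, i.e.\ $\iota$ of the target steps to $\iota$ of the source. A straightforward induction on the number of steps in $c \rewrites c'$ (empty derivation: $\iota(c) \rewrites \iota(c)$ trivially; composition: from $c \rewrites c'' \rewrite c'$, the inductive hypothesis and the base case give $\iota(c') \rewrite \iota(c'') \rewrites \iota(c)$) yields $c \rewrites c' \Longrightarrow \iota(c') \rewrites \iota(c)$. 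Specializing to a pair $(X^a u, Y^b v) \in \Deriv{P_C}$, i.e.\ $(X^a, u) \rewrites (Y^b, v)$, we obtain $(\iota(Y^b), v) \rewrites (\iota(X^a), u)$; since $\iota$ changes the variance superscript from $b$ to $\ominus \cdot b$ and from $a$ to $\ominus \cdot a$ (the accompanying $\mathsf{lhs}/\mathsf{rhs}$ swap being suppressed in the statement's notation, as it is determined by which side of the pair a state occupies), this is exactly $(Y^{\ominus \cdot b} v, X^{\ominus \cdot a} u) \in \Deriv{P_C}$.

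The main obstacle is the rule-set closure check for $\Delta_\mathcal{C}$ and $\Delta_\mathsf{ptr}$: one must simultaneously track the variance arithmetic (superscripts acquiring factors $\langle u \rangle$, $\langle v \rangle$, $\ominus$), the $\mathsf{lhs}/\mathsf{rhs}$ tagging, and the interchange of $p$ and $q$ between $\mathsf{rule}^\oplus$ and $\mathsf{rule}^\ominus$, and verify that all of this lines up after reversing a rule and applying $\iota$. The infinitude of $\Delta_\mathsf{ptr}$ is not a genuine difficulty here --- the operation $r \mapsto \iota(\overleftarrow{r})$ acts uniformly on the generating family $\{\, v.\mathsf{store} \subtype v.\mathsf{load} \,\}_v$, and the lemma asserts only membership in $\Deriv{P_C}$, not an effective construction.
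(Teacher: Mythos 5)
Your proof is correct and is essentially the paper's argument made explicit: the paper disposes of this lemma in one line (``immediate due to the symmetries in the construction of $\Delta$''), and your involution $\iota$ together with the closure of $\Delta$ under $r \mapsto \iota(\overleftarrow{r})$ is precisely the symmetry being invoked, arising from the paired $\mathsf{rule}^\oplus/\mathsf{rule}^\ominus$ constructors and the mirrored $\Delta_\mathsf{start}/\Delta_\mathsf{end}$ rules.
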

\begin{proof}
Immediate due to the symmetries in the construction of $\Delta$.
\end{proof}

\begin{lemma}
For any pair $(X^a u, Y^b v)$ in $\Deriv{P_C}$, the relation
\[a \cdot \langle u \rangle = b \cdot \langle v \rangle\]
must hold.
\label{deriv-signs}
\end{lemma}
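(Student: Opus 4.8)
The plan is to establish $a\cdot\langle u\rangle = b\cdot\langle v\rangle$ as an invariant of the transition relation of $\mathcal{P}_\mathcal{C}$, by induction on the length of the derivation witnessing $(X^a u, Y^b v)\in\Deriv{P_C}$. The length-zero case is immediate since then $X^a u = Y^b v$. For the inductive step the real content is a one-step statement: whenever $(P^a, w)\rewrite(Q^b, w')$ via a single rule whose two control states are both variance-tagged, then $a\cdot\langle w\rangle = b\cdot\langle w'\rangle$. Granting this, a derivation $(X^a,u)\rewrite(Z^c,w')\rewrites(Y^b,v)$ gives $a\cdot\langle u\rangle = c\cdot\langle w'\rangle$ by the one-step claim and $c\cdot\langle w'\rangle = b\cdot\langle v\rangle$ by the induction hypothesis, and the two chain. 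One preliminary makes the one-step claim sufficient: $\stStart$ is the target of no rule and $\stEnd$ is the source of no rule, so a derivation whose endpoints both carry variance-tagged control states never passes through $\stStart$ or $\stEnd$, hence uses only rules of $\Delta_\mathcal{C}\cup\Delta_\mathsf{ptr}$, all of which relate tagged control states.

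I would then carry out the one-step check by cases on the rule. Each rule of $\Delta_\mathcal{C}\cup\Delta_\mathsf{ptr}$ is $\mathsf{rule}^\oplus(c)$ or $\mathsf{rule}^\ominus(c)$ for some constraint $c$ of the form $\xi.p\subtype\eta.q$ with $p,q\in\Sigma^*$ (with $c\in\mathcal{C}$, or $c$ a pointer axiom $\alpha.\mathsf{store}\subtype\alpha.\mathsf{load}$). When a rule $\pdsrule{P^a}{x}{Q^b}{y}$ fires on a stack $w=xs$, producing $w'=ys$, we have $\langle w\rangle=\langle x\rangle\langle s\rangle$ and $\langle w'\rangle=\langle y\rangle\langle s\rangle$, so it is enough to verify $a\cdot\langle x\rangle = b\cdot\langle y\rangle$ for the rule in isolation. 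For $\mathsf{rule}^\oplus$ the rule is $\pdsrule{\mathsf{lhs}(\xi)^{\langle p\rangle}}{p}{\mathsf{rhs}(\eta)^{\langle q\rangle}}{q}$, and both sides equal $\oplus$ because $\langle p\rangle\cdot\langle p\rangle=\langle q\rangle\cdot\langle q\rangle=\oplus$ in the sign monoid; for $\mathsf{rule}^\ominus$ the rule is $\pdsrule{\mathsf{lhs}(\eta)^{\ominus\cdot\langle q\rangle}}{q}{\mathsf{rhs}(\xi)^{\ominus\cdot\langle p\rangle}}{p}$, and both sides equal $\ominus$ for the same reason. This is exactly the design principle behind attaching variance superscripts to control states: up to the global sign of the rule, the superscript always equals the variance of the string being popped or pushed, so it cancels against $\langle\cdot\rangle$ of the matched prefix, leaving the quantity (control-state variance)$\cdot$(stack variance) unchanged.

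I do not expect a genuine obstacle: the argument is the observation that $\langle w\rangle\cdot\langle w\rangle=\oplus$ for any $w$, combined with the symmetric construction of $\mathsf{rule}^\oplus$ and $\mathsf{rule}^\ominus$, plus the small bookkeeping that $\stStart$ and $\stEnd$ cannot occur in the interior of a tagged-to-tagged run. The only step that needs to be written with a little care is phrasing the induction so that the trailing suffix $s$ of the stack is handled once and for all, after which the lemma is just the statement that the endpoints of a run of $\mathcal{P}_\mathcal{C}$ agree on this invariant.
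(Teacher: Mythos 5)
Your proof is correct and takes essentially the same route as the paper's: the paper likewise argues that the quantity (control-state sign)$\cdot$(stack variance) is preserved by every rule of $\Delta_\mathcal{C}\cup\Delta_\mathsf{ptr}$ by construction of $\mathsf{rules}$, with the sign transferred from the stack token at the $\Delta_\mathsf{start}$ step and back at the $\Delta_\mathsf{end}$ step. You merely spell out the one-step case analysis (via $\langle p\rangle\cdot\langle p\rangle=\oplus$) that the paper leaves implicit.
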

\begin{proof}
This is an inductive consequence of how $\Delta$ is defined: the relation holds for every
rule in $\Delta_\mathcal{C} \cup \Delta_\mathsf{ptr}$ due to the use of the $\mathsf{rules}$ function,
and the sign in the exponent is propagated from the top of the stack during an application
of a rule from $\Delta_\mathsf{start}$ and back to the stack during an application of a rule 
from $\Delta_\mathsf{end}$. Since every derivation will begin with a rule from $\Delta_\mathsf{start}$,
proceed by applying rules from $\Delta_\mathcal{C} \cup \Delta_\mathsf{ptr}$, and conclude with
a rule from $\Delta_\mathsf{end}$, this completes the proof.
\end{proof}

\begin{definition}
Let $\Deriv{P_C}' \subseteq \mathcal{V}'_i \times \mathcal{V}'_i$ be the relation on derived type
variables induced by $\Deriv{P_C}$ as follows:
\[\Deriv{P_C}' = \left\{ (X.u, Y.v) ~|~ (X^{\langle u \rangle}u, Y^{\langle v \rangle} v)\in \Deriv{P_C}\right\}\]
\end{definition}

\begin{lemma}
If $(p,q) \in \Deriv{P_C}'$ and $\sigma \in \Sigma$ has $\langle \sigma \rangle = \oplus$, then
$(p \sigma, q \sigma) \in \Deriv{P_C}'$ as well.  If $\langle \sigma \rangle = \ominus$, then $(q \sigma, p \sigma) \in \Deriv{P_C}'$ instead.
\label{deriv-suffix}
\end{lemma}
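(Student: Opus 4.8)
The plan is to reduce this to a single elementary property of pushdown systems, followed by some bookkeeping of variance superscripts using the two lemmas immediately above. Write $p = X.u$ and $q = Y.v$ with $X,Y \in \mathcal{V}_i$ and $u,v \in \Sigma^*$. First I would unfold the definition of $\Deriv{P_C}'$: the hypothesis $(p,q) \in \Deriv{P_C}'$ says precisely that $(X^{\langle u \rangle}u,\, Y^{\langle v \rangle}v) \in \Deriv{P_C}$, i.e.\ there is a transition sequence $(\stStart, X^{\langle u \rangle}u) \rewrites (\stEnd, Y^{\langle v \rangle}v)$ in $\mathcal{P}_\mathcal{C}$.

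The key step is the observation that the transition relation $\rewrites$ of any unconstrained pushdown system is stable under appending a word at the \emph{bottom} of the stack: if $(A,w) \rewrite (B,w')$ via a rule $\pdsrule{A}{x}{B}{y}$ then $w = x z$ and $w' = y z$ for some suffix $z$, so $w\sigma = x(z\sigma)$ and the same rule gives $(A,w\sigma) \rewrite (B,w'\sigma)$; iterating along the run yields $(\stStart, X^{\langle u \rangle}u\sigma) \rewrites (\stEnd, Y^{\langle v \rangle}v\sigma)$, i.e.\ $(X^{\langle u \rangle}u\sigma,\, Y^{\langle v \rangle}v\sigma) \in \Deriv{P_C}$. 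For concreteness I would note that the run decomposes as the unique $\Delta_\mathsf{start}$ rule (which pops only the top symbol $X^{\langle u \rangle}$), then rules of $\Delta_\mathcal{C} \cup \Delta_\mathsf{ptr}$ (which rewrite only $\Sigma$-symbols near the top), then a $\Delta_\mathsf{end}$ rule (which pushes only $Y^{\langle v \rangle}$); none of these steps can see or be blocked by a $\sigma$ parked at the bottom.

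Finally I would re-fold the definition of $\Deriv{P_C}'$, splitting on $\langle\sigma\rangle$. If $\langle\sigma\rangle = \oplus$ then $\langle u\sigma\rangle = \langle u \rangle$ and $\langle v\sigma\rangle = \langle v \rangle$, so $(X^{\langle u \rangle}u\sigma,\, Y^{\langle v \rangle}v\sigma)$ is literally $(X^{\langle u\sigma \rangle}u\sigma,\, Y^{\langle v\sigma \rangle}v\sigma) \in \Deriv{P_C}$, which unwinds to $(p\sigma, q\sigma) \in \Deriv{P_C}'$. If $\langle\sigma\rangle = \ominus$ then $\langle u \rangle = \ominus\cdot\langle u\sigma \rangle$ and $\langle v \rangle = \ominus\cdot\langle v\sigma \rangle$, so what we obtained is $(X^{\ominus\cdot\langle u\sigma \rangle}u\sigma,\, Y^{\ominus\cdot\langle v\sigma \rangle}v\sigma) \in \Deriv{P_C}$; here I would apply the symmetry lemma stated just above (instantiated with $a = \ominus\cdot\langle u\sigma\rangle$, $b = \ominus\cdot\langle v\sigma\rangle$) to flip this to $(Y^{\langle v\sigma \rangle}v\sigma,\, X^{\langle u\sigma \rangle}u\sigma) \in \Deriv{P_C}$, i.e.\ $(q\sigma, p\sigma) \in \Deriv{P_C}'$, as the lemma demands.

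The suffix-stability step is routine, so the one place I expect to need care is the contravariant case: appending a contravariant label flips every variance superscript in sight, and the whole argument hinges on invoking the symmetry lemma with exactly the right instantiation of $a$ and $b$. \autoref{deriv-signs} serves as the consistency check that the superscripts produced this way are the admissible ones.
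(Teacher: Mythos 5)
Your proposal is correct and matches the paper's intent: the paper dispatches this lemma with the single line ``Immediate, from symmetry considerations,'' and your argument---suffix-stability of the prefix-rewriting relation $\rewrites$ under appending a symbol at the bottom of the stack, followed by the variance bookkeeping $\langle u\sigma\rangle = \langle u\rangle\cdot\langle\sigma\rangle$ and an appeal to the preceding symmetry lemma in the contravariant case---is exactly the elaboration that line is standing in for. No gaps; the care you flag about instantiating $a = \ominus\cdot\langle u\sigma\rangle$, $b = \ominus\cdot\langle v\sigma\rangle$ is handled correctly.
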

\begin{proof}
Immediate, from symmetry considerations.
\end{proof}

\begin{lemma}
$\Deriv{P_C}$ can be partitioned into $\Deriv{P_C}^\oplus \amalg \Deriv{P_C}^\ominus$ where
\begin{align*}
\Deriv{P_C}^\oplus &= \left\{ (X^{\langle u \rangle} u, Y^{\langle v \rangle} v) ~|~ (X.u,Y.v) \in \Deriv{P_C}' \right\} \\
\Deriv{P_C}^\ominus &= \left\{ (Y^{\ominus \cdot \langle v \rangle} v, X^{\ominus \cdot \langle u \rangle} u) ~|~ (X.u,Y.v) \in \Deriv{P_C}' \right\} \\
\end{align*}
In particular, $\Deriv{P_C}$ can be entirely reconstructed from the simpler set $\Deriv{P_C}'$.
\label{deriv-parts}
\end{lemma}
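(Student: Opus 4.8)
The plan is to establish the claimed partition by proving the two inclusions $\Deriv{P_C}^\oplus \cup \Deriv{P_C}^\ominus \subseteq \Deriv{P_C}$ and $\Deriv{P_C} \subseteq \Deriv{P_C}^\oplus \cup \Deriv{P_C}^\ominus$, and then checking that the two sets on the right are disjoint; the reconstruction claim in the last sentence is then automatic, since both $\Deriv{P_C}^\oplus$ and $\Deriv{P_C}^\ominus$ are written purely in terms of $\Deriv{P_C}'$. Everything reduces to arithmetic in the two-element sign monoid together with the two immediately preceding lemmas, so I would first record the monoid identities I will reuse: $c \cdot \langle w \rangle = \oplus$ iff $c = \langle w\rangle$, $c \cdot \langle w\rangle = \ominus$ iff $c = \ominus\cdot\langle w\rangle$, and $\ominus\cdot\ominus\cdot\langle w\rangle = \langle w\rangle$.

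For the first inclusion, $\Deriv{P_C}^\oplus \subseteq \Deriv{P_C}$ is immediate from the definition of $\Deriv{P_C}'$, which says precisely that $(X^{\langle u\rangle}u, Y^{\langle v\rangle}v) \in \Deriv{P_C}$ whenever $(X.u,Y.v)\in\Deriv{P_C}'$. For $\Deriv{P_C}^\ominus \subseteq \Deriv{P_C}$ I would take $(X.u,Y.v)\in\Deriv{P_C}'$, so $(X^{\langle u\rangle}u, Y^{\langle v\rangle}v)\in\Deriv{P_C}$, and apply the preceding symmetry lemma (with $a=\langle u\rangle$, $b=\langle v\rangle$) to obtain $(Y^{\ominus\cdot\langle v\rangle}v, X^{\ominus\cdot\langle u\rangle}u)\in\Deriv{P_C}$, which is exactly the generic element of $\Deriv{P_C}^\ominus$.

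For the reverse inclusion, given $(X^a u, Y^b v)\in\Deriv{P_C}$, Lemma~\ref{deriv-signs} supplies a common sign $s := a\cdot\langle u\rangle = b\cdot\langle v\rangle$. If $s=\oplus$ then $a=\langle u\rangle$ and $b=\langle v\rangle$, so $(X.u,Y.v)\in\Deriv{P_C}'$ and the pair lies in $\Deriv{P_C}^\oplus$. If $s=\ominus$ then $a=\ominus\cdot\langle u\rangle$ and $b=\ominus\cdot\langle v\rangle$; applying the symmetry lemma and simplifying $\ominus\cdot b = \langle v\rangle$, $\ominus\cdot a = \langle u\rangle$ gives $(Y^{\langle v\rangle}v, X^{\langle u\rangle}u)\in\Deriv{P_C}$, i.e. $(Y.v,X.u)\in\Deriv{P_C}'$, so the original pair matches the defining form of an element of $\Deriv{P_C}^\ominus$ (with $X$ and $Y$ in swapped roles). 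Finally, disjointness follows because the map sending $(X^a u, Y^b v)$ to $a\cdot\langle u\rangle$ is a well-defined function on $\Deriv{P_C}$ (equal to $b\cdot\langle v\rangle$ by Lemma~\ref{deriv-signs}) that is constantly $\oplus$ on $\Deriv{P_C}^\oplus$ and constantly $\ominus$ on $\Deriv{P_C}^\ominus$, so the two sets cannot meet. There is no genuine obstacle here; the only thing that needs care is that the sign invariant used for disjointness is intrinsic to the pair rather than to a chosen parametrization, which is exactly what Lemma~\ref{deriv-signs} guarantees.
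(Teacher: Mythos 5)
Your proof is correct, and it is essentially the argument the paper intends: the paper states Lemma~\ref{deriv-parts} without proof, treating it as an immediate consequence of the two preceding lemmas, and your write-up supplies exactly that derivation --- both inclusions and disjointness follow from the symmetry lemma together with the sign invariant $a \cdot \langle u \rangle = b \cdot \langle v \rangle$ of \autoref{deriv-signs}. Your observation that the sign invariant is intrinsic to the configuration pair (so disjointness does not depend on a choice of parametrization) is the one point worth making explicit, and you make it.
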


\begin{theorem}
$\mathcal{C}$ has an elementary proof of the constraint $X.u \subtype Y.v$ if and only if
$\left(X . u, Y . v\right) \in \Deriv{P_C}'$.
\end{theorem}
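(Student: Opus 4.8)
The plan is to prove both implications by setting up a dictionary between transition sequences of $\mathcal{P}_\mathcal{C}$ and normal-form proof trees, using the algebraic representation of \autoref{alg-rep} as the bridge. Recall that \autoref{normal-form-thm} lets us assume the derivation of $X.u \subtype Y.v$ is in normal form, so (after eliding the automatically-satisfiable $\textsc{Var}$ subtrees, which are available because $\mathcal{C}$ is closed under $\textsc{T-Left}$, $\textsc{T-Right}$, $\textsc{T-Prefix}$) it is an algebraic term $S_{\ell_1}(S_{\ell_2}(\cdots S_{\ell_n}(R_1 \odot \cdots \odot R_k)\cdots))$, where each $R_j$ is either an axiom of $\mathcal{C}$ or an instance of $\textsc{S-Pointer}$, $\odot$ is composition via $\textsc{S-Trans}'$, and $S_\ell$ is an application of $\textsc{S-Field}$. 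The claim to be matched against $\Deriv{P_C}'$ is: this term is an \emph{elementary} proof precisely when every internal leaf $R_j$ and every intermediate derived variable uses only variables of $\mathcal{V}_u$.

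For the ``only if'' direction I would induct on the structure of this elided term. A leaf $R_j$ that is an axiom $c = (p.u' \subtype q.v') \in \mathcal{C}$ is translated by $\mathsf{rule}^\oplus(c)$, $\mathsf{rule}^\ominus(c)$ into two rules of $\Delta_\mathcal{C}$; reading one of them from the appropriate variance-tagged control state realizes the single step $(p^{\langle u'\rangle}u',\,q^{\langle v'\rangle}v') \in \Deriv{P_C}$ when $p,q \in \mathcal{V}_u$ (and the correspondingly $\mathsf{lhs}/\mathsf{rhs}$-tagged version when $p$ or $q$ is an endpoint). A leaf that is an $\textsc{S-Pointer}$ instance $\alpha.\mathsf{store}\subtype\alpha.\mathsf{load}$ is exactly one of the rules of $\Delta_\mathsf{ptr}$. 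Composition $\odot$ becomes concatenation of transition sequences: the right endpoint of the first factor and the left endpoint of the second coincide as a derived variable, so by \autoref{deriv-signs} the variance superscripts line up, the configurations match, and $\rewrites$ is transitive. Each outer $S_{\ell_i}$ becomes: push $\ell_i$ onto the stack and keep it there throughout — a covariant $\ell_i$ preserves orientation and a contravariant $\ell_i$ flips it, which is what \autoref{deriv-suffix} records and what the $\{\oplus,\ominus\}$ superscripts on control states track. Finally, wrapping with the $\Delta_\mathsf{start}$ rule at $X$ and the $\Delta_\mathsf{end}$ rule at $Y$ yields $(X^{\langle u\rangle}u,\,Y^{\langle v\rangle}v) \in \Deriv{P_C}$, i.e. $(X.u,Y.v) \in \Deriv{P_C}'$.

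For the ``if'' direction I would induct on the length of a $\Delta$-derivation witnessing $(X^{\langle u\rangle}u,\,Y^{\langle v\rangle}v) \in \Deriv{P_C}$. By inspection of $\Delta$, any such derivation begins with one $\Delta_\mathsf{start}$ rule, ends with one $\Delta_\mathsf{end}$ rule, and in between uses only rules of $\Delta_\mathcal{C} \cup \Delta_\mathsf{ptr}$, never altering the variance superscript except in lockstep with the stack variance (\autoref{deriv-signs}). Peeling off the start and end rules, splitting the middle run at the points where the stack returns to a bare variable symbol, and reading off each segment gives: a $\Delta_\mathcal{C}$ step contributes a constraint $c_j \in \mathcal{C}$ as a leaf $R_j$; a $\Delta_\mathsf{ptr}$ step contributes an $\textsc{S-Pointer}$ leaf; a maximal run that pushes some $\ell$ without popping it contributes an outer $S_\ell$; and consecutive segments are glued by $\textsc{S-Trans}'$ ($\odot$). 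The key point making the resulting term a genuine proof of $X.u\subtype Y.v$ is that a $\Delta_\mathcal{C}$ step can follow another only when the shared middle variable $q$ satisfies $\mathsf{lhs}(q)=\mathsf{rhs}(q)$, which happens exactly when $q \in \mathcal{V}_u$; so every intermediate control state carries an uninteresting variable, and the reconstructed proof tree has only variables of $\mathcal{V}_u$ on its internal leaves. Hence the proof is elementary.

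The step I expect to be the main obstacle is the variance bookkeeping: making the orientation-flipping behaviour of contravariant labels match up between the normal-form rearrangement $S_\ell(R_1)\odot S_\ell(R_2) = S_\ell(R_2 \odot R_1)$ for $\langle\ell\rangle = \ominus$ of \autoref{normal-form-thm} and the two-pronged $\mathsf{rule}^\oplus$/$\mathsf{rule}^\ominus$ encoding with its $\{\oplus,\ominus\}$-superscripted control states and the $\mathsf{lhs}/\mathsf{rhs}$ re-tagging inside $\mathsf{rule}^\ominus$; here \autoref{deriv-parts} and \autoref{deriv-suffix} do most of the heavy lifting but the induction must invoke them with the right orientation at each step. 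A secondary subtlety is that $\Delta_\mathsf{ptr}$ is infinite and (for \autoref{mod-sat}) is instantiated lazily; for the present statement it suffices that whatever $\textsc{S-Pointer}$ leaf the normal form needs already occurs in $\Delta_\mathsf{ptr}$, and that the induction consumes exactly one $\Delta_\mathsf{ptr}$ rule per such leaf.
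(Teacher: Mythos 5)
Your proposal is correct and takes essentially the same approach as the paper: both arguments set up the correspondence between derivations in $\mathcal{P}_\mathcal{C}$ and normal-form elementary proofs through the algebraic representation of \autoref{alg-rep} ($R_i \leftrightarrow$ axiom leaves, $\odot \leftrightarrow \textsc{S-Trans}$, $S_\ell \leftrightarrow \textsc{S-Field}$), with the $\mathsf{lhs}/\mathsf{rhs}$ tagging forcing intermediate control states to carry only uninteresting variables. The paper packages the two directions as a single reversible normalization of rule sequences (via the auxiliary positive rules $S_\ell(R)$) rather than two separate inductions, but the content is the same.
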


\begin{proof}

We will prove the theorem by constructing a bijection between elementary proof trees and derivations
in $\mathcal{P}_\mathcal{C}$.

Call a configuration $(p^a, u)$ {\em positive} if $a = \langle u \rangle$. By \autoref{deriv-signs},
if $(p^a, u)$ is positive and $(p^a, u) \rewrites (q^b, v)$ then $(q^b, v)$ is positive as well.  We
will also call a transition rule positive when the configurations appearing in the rule are positive.
Note that by construction, every positive transition rule from $\Delta_\mathcal{C} \cup \Delta_\mathsf{ptr}$ is
of the form $\mathsf{rule}^\oplus(c)$ for some axiom $c \in \mathcal{C}$ or load/store constraint
$c = (p.\mathsf{store} \subtype p.\mathsf{load})$.

For any positive transition rule $R = \mathsf{rule}^\oplus(p.u \subtype q.v)$ and $\ell \in \Sigma$,
let $S_\ell(R)$ denote the positive rule
\[S_\ell(R) = \begin{cases}
\pdsrule{p^{\langle u \ell \rangle}}{u \ell}{q^{\langle v \ell \rangle}}{v \ell} & \text{if } \langle \ell \rangle = \oplus \\
\pdsrule{q^{\langle v \ell\rangle}}{v \ell}{p^{\langle u \ell \rangle}}{u \ell} & \text{if } \langle \ell \rangle = \ominus
\end{cases}\]

Note that by the construction of $\Delta$, each $S_\ell(R)$ is redundant, being a restriction of one of the existing
transition rules $\mathsf{rule}^\oplus(p.u \subtype q.v)$ or $\mathsf{rule}^\ominus(p.u \subtype q.v)$ to a more
specific stack configuration.  In particular, adding $S_\ell(R)$ to the set of
rules does not affect $\Deriv{P_C}$.

Suppose we are in the positive state $(p^{\langle w \rangle}, w)$ and are about to apply a transition rule
$R = \pdsrule{p^{\langle u \rangle}}{u}{q^{\langle v \rangle}}{v}$. Then we must have $w = u\ell_1 \ell_2 \cdots \ell_n$, so
that the left-hand side of the rule $S_{\ell_n}(\cdots(S_{\ell_1}(R)) \cdots )$ is exactly
$\langle p^{\langle w \rangle}; w \rangle$.

Let $R_\mathsf{start}, R_1, \cdots, R_n, R_\mathsf{end}$ be the sequence of rule applications used in an arbitrary
derivation $(\stStart, p^{\langle u \rangle} u) \rewrites (\stEnd, q^{\langle v \rangle} v)$ and let $R \odot R'$ denote
the application of rule $R$ followed by rule $R'$, with the right-hand side of $R$ {\em exactly} matching the
left-hand side of $R'$.
Given the initial stack state
$p^{\langle u \rangle} u$ and the sequence of rule applications, for each $R_k$ there is a unique rule 
$R_k' = S_{\ell^k_n}( \cdots (S_{\ell^k_1}(R_k))\cdots)$ such that
 $\partial = R'_\mathsf{start} \odot R'_1 \odot \cdots \odot R'_n \odot R'_\mathsf{end}$ describes the 
derivation exactly. Now normalize $\partial$ using the rule  $S_\ell(R_i) \odot S_\ell(R_j) \mapsto S_\ell(R_i \odot R_j)$;
this process is clearly reversible, producing a bijection between these normalized expressions and derivations
in $\Deriv{P_C}^\oplus$.

To complete the proof, we obtain a bijection between normalized expressions and the normal forms of elementary
proof trees. The bijection is straightforward: $R_i$ represents the introduction of an axiom from $\mathcal{C}$,
$\odot$ represents an application of the rule $\textsc{S-Trans}$, and $S_\ell$ represents an application of
the rule $\textsc{S-Field}_\ell$. This results in an algebraic expression describing the normalized
proof tree as in \autoref{alg-rep}, which completes the proof.
\end{proof}
 
\subsection{Constructing the Initial Graph}

In this section, we will construct a finite state automaton that accepts strings that encode some of the
behavior of $\mathcal{P_C}$; the next section describes a saturation algorithm that modifies the initial
automaton to create a finite state transducer for $\Deriv{P_C}'$.

The automaton $\mathcal{A_C}$ is constructed as follows:
\begin{enumerate}

\item Add an initial state $\stStart$ and an accepting state $\stEnd$.

\item For each left-hand side $\langle p^a ; u_1 \cdots u_n \rangle$ of a rule in $\Delta_\mathcal{C}$,
add the transitions
\begin{align*}
\stStart &\stackrel{\stPop{p^{a \cdot \langle u_1 \dots u_n \rangle}}}{\rewrite} p^{a \cdot \langle u_1 \dots u_n \rangle} \\
p^{a \cdot \langle u_1 \dots u_n \rangle} &\stackrel{\stPop{u_1}}{\rewrite} p^{a \cdot \langle u_2 \dots u_n \rangle} u_1 \\
&\vdots \\
p^{a \cdot \langle u_n \rangle} u_1 \dots u_{n-1} &\stackrel{\stPop{u_n}}{\rewrite} p^a u_1 \dots u_n
\end{align*}

\item For each right-hand side $\langle q^b ; v_1 \cdots v_n \rangle$ of a rule in $\Delta_\mathcal{C}$,
add the transitions
\begin{align*}
q^b v_1 \dots v_n &\stackrel{\stPush{v_n}}{\rewrite} q^{b \cdot \langle v_n \rangle} v_1 \dots v_{n-1} \\
&\vdots \\
q^{b \cdot \langle v_2 \dots v_n \rangle} v_1 &\stackrel{\stPush{v_1}}{\rewrite} q^{b \cdot \langle v_1 \dots v_n \rangle} \\
 q^{b \cdot \langle v_1 \dots v_n \rangle} &\stackrel{\stPush{ q^{b \cdot \langle v_1 \dots v_n \rangle}}}{\rewrite} \stEnd
\end{align*}

\item For each rule $\pdsrule{p^a}{u}{q^b}{v}$, add the transition
\[p^a u \stackrel{\one}{\rewrite} q^b v\]

\end{enumerate}

\begin{algorithm*}
\caption{Converting a transducer from a set of pushdown system rules}
\label{initial-automaton-code}
\begin{algorithmic}[1]
  
  \Procedure{AllPaths}{$V,E,x,y$}
  \Comment{\parbox[t]{.6\linewidth}{Tarjan's path algorithm. Return a finite state automaton recognizing
the label sequence for all paths from $x$ to $y$ in the graph $(V,E)$.}}
  \State $\dots$
  \State\Return $Q$
  \EndProcedure

  \Statex

  \Procedure{Transducer}{$\Delta_\mathcal{C}$}
  \Comment{\parbox[t]{.45\linewidth}{$\Delta_\mathcal{C}$ is a set of pushdown system rules $\pdsrule{p^a}{u}{q^b}{v}$}}
  \State{$V \gets \{\stStart, \stEnd\}$}
  \State{$E \gets \emptyset$}
  \ForAll{$\pdsrule{p^a}{u_1 \cdots u_n}{q^b}{v_1 \cdots v_m} \in \Delta_\mathcal{C}$}
    \State{$V \gets V \cup \{p^{a \cdot \langle u_1 \cdots u_n\rangle},q^{b \cdot \langle v_1 \cdots v_m\rangle}\}$}
    \State{$E \gets E \cup \{ (\stStart, p^{a \cdot \langle u_1 \cdots u_n \rangle}, \stPop{p^{a \cdot \langle u_1 \cdots u_n \rangle}}) \}$}
    \State{$E \gets E \cup \{ (q^{b \cdot \langle v_1 \cdots v_m \rangle}, \stEnd, \stPush{q^{b \cdot \langle v_1 \cdots v_m \rangle}}) \}$}
    \For{$i \gets 1 \dots n-1$}
      \State{$V \gets V \cup \{p^{a \cdot \langle u_{i+1} \cdots u_n\rangle}u_1 \dots u_i\}$}
      \State{$E \gets E \cup \{(p^{a \cdot \langle u_i \cdots u_n\rangle}u_1 \dots u_{i-1}, p^{a \cdot \langle u_{i+1} \cdots u_n \rangle}u_1 \dots u_i, 
      \stPop{u_i})\}$}
    \EndFor
    \For{$j \gets 1 \dots m-1$}
      \State{$V \gets V \cup \{q^{b \cdot \langle v_{j+1} \cdots u_m \rangle}v_1 \dots v_j\}$}
      \State{$E \gets E \cup \{(q^{b \cdot \langle v_{j+1} \cdots v_m \rangle}v_1 \dots v_j, q^{b \cdot \langle v_j \cdots v_m\rangle}v_1 \dots v_{j-1}, \stPush{v_j})\}$}
    \EndFor
    \State{$E \gets E \cup \{(p^a u_1 \dots u_n, q^b v_1 \dots v_m, \one)\}$}
  \EndFor
  \State{$E \gets$ \Call{Saturated}{$V,E$}}
  \State{$Q \gets$ \Call{AllPaths}{$V,E,\stStart,\stEnd$}}
  \State\Return $Q$
  \EndProcedure

\end{algorithmic}
\end{algorithm*}

\begin{definition}
Following \citet{carayol}, we call a sequence of transitions $\mathsf{p}_1 \cdots \mathsf{p}_n$ {\em productive}
if the corresponding element $\mathsf{p}_1 \otimes \cdots \otimes \mathsf{p}_n$ in $\mathsf{StackOp}_\Sigma$ is
not $\zero$. A sequence is productive if and only if it contains no adjacent terms of the form
$\stPop{y}, \stPush{x}$ with $x \neq y$.
\end{definition}

\begin{lemma}
There is a one-to-one correspondence between productive transition sequences accepted by $\mathcal{A_C}$ and
elementary proofs in $\mathcal{C}$ that do not use the axiom $\textsc{S-Pointer}$.
\label{sat1}
\end{lemma}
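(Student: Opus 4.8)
The plan is to reuse, in restricted form, the bijection established in the proof of the preceding theorem between normal-form elementary proof trees and derivations in $\mathcal{P}_\mathcal{C}$, together with its algebraic repackaging (\autoref{alg-rep}) of a derivation as a term $S_{\ell_1}(\cdots S_{\ell_n}(R_1 \odot \cdots \odot R_k) \cdots)$. The guiding observation is that in that repackaging the leaf constants $R_i$ range over the axioms of $\mathcal{C}$ together with the load/store constraints $p.\mathsf{store} \subtype p.\mathsf{load}$: the former are exactly the rules of $\Delta_\mathcal{C}$, and the latter are exactly the rules of $\Delta_\mathsf{ptr}$, which are in turn exactly the instances of $\textsc{S-Pointer}$. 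Hence a normalized elementary proof avoids $\textsc{S-Pointer}$ if and only if every leaf $R_i$ lies in $\Delta_\mathcal{C}$, so it suffices to exhibit a bijection between the productive accepting paths of $\mathcal{A_C}$ and the normalized terms $S_{\ell_1}(\cdots S_{\ell_n}(R_1 \odot \cdots \odot R_k) \cdots)$ with all $R_i \in \Delta_\mathcal{C}$; composing with the bijection of the preceding theorem (which by \autoref{normal-form-thm} each elementary proof attains in a unique normal form) then gives the claim.

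The core of the argument is to unwind the construction of $\mathcal{A_C}$. By construction it is a union of one ``gadget'' per rule of $\Delta_\mathcal{C}$: the gadget for $\pdsrule{p^a}{u_1 \cdots u_n}{q^b}{v_1 \cdots v_m}$ is a chain of $\mathsf{pop}$-edges out of $\stStart$ spelling $p^{a\cdot\langle u_1 \cdots u_n\rangle}, u_1, \dots, u_n$, a chain of $\mathsf{push}$-edges into $\stEnd$ spelling $v_m, \dots, v_1, q^{b\cdot\langle v_1 \cdots v_m\rangle}$, and a single $\one$-edge joining the two chains; distinct gadgets are glued only where their labelled vertices $p^{a\cdot\langle w\rangle}w_1\cdots w_i$ coincide. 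Thus an accepting path reads a word in $\mathsf{StackOp}_\Sigma$ that alternates blocks of $\mathsf{pop}$-letters with blocks of $\mathsf{push}$-letters, the blocks separated by $\one$-letters, and the path is productive exactly when this word is $\neq \zero$, i.e.\ when at each transition from a $\mathsf{push}$-block to a $\mathsf{pop}$-block (between consecutive $\one$-edges) the letters cancel in matched pairs. That cancellation condition is, letter by letter, precisely the side condition under which the two adjacent rules compose as pushdown transitions; so a productive accepting path is the same datum as a sequence $R_1, \dots, R_k$ of rules of $\Delta_\mathcal{C}$ whose stack actions compose, together with the common label suffix $\ell_1\cdots\ell_n$ that survives every cancellation — which is precisely the normalized term $S_{\ell_1}(\cdots S_{\ell_n}(R_1 \odot \cdots \odot R_k) \cdots)$, now with every $R_i \in \Delta_\mathcal{C}$, the surviving $S_\ell$ wrapper being read off from the portions of the two boundary chains (at $\stStart$ and at $\stEnd$) not consumed by any cancellation. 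Both directions of this correspondence are forced: a path factors uniquely at its $\one$-edges, each factor lies inside one rule's gadget and pins down that rule and the number of labels popped and pushed there; conversely, from a term with $R_i \in \Delta_\mathcal{C}$ and a given surviving suffix the gadget chains and boundary chains are determined, and $\odot$ being defined guarantees each intermediate cancellation is legal, so the reconstructed path is productive and accepting.

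The step I expect to be the main obstacle is the variance bookkeeping in the gluing of gadgets: one must check that the vertex-coincidence test used when sharing states in $\mathcal{A_C}$ — matching a push-chain vertex $q^{b\cdot\langle v_1\cdots v_i\rangle}v_1\cdots v_i$ against a pop-chain vertex of another gadget — corresponds exactly to the condition for $R \odot R'$ to be defined in the term algebra, and that the $\oplus/\ominus$ superscripts threaded through $\Delta_\mathsf{start}$, $\Delta_\mathsf{end}$ and the $S_\ell$ operators remain consistent with \autoref{deriv-signs}. A secondary nuisance is the degenerate case in which a rule's entire right-hand side is cancelled, so that the path passes through a bare control-state vertex $p^\odot$ with empty stack part; this is exactly the place where labelling the top of each chain by $p^{a\cdot\langle w\rangle}$ rather than by $p^a$ is needed for the relevant gadget vertices to coincide at all.
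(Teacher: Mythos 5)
Your proposal follows essentially the same route as the paper's proof, which is only a three-sentence sketch of the same correspondence: decompose a productive accepting path into the per-rule gadgets of $\mathcal{A_C}$, read each $\one$-edge as a rule application and each push/pop cancellation as the condition for two rules to compose, and splice this into the bijection of the preceding theorem between derivations and normal-form elementary proofs. You are in fact more explicit than the paper on the one point that justifies the ``without $\textsc{S-Pointer}$'' clause --- that $\Delta_\mathsf{ptr}$ consists exactly of the $\textsc{S-Pointer}$ instances and is the part of $\Delta$ deliberately omitted from $\mathcal{A_C}$ --- which the paper's proof never states.

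One detail of your account does not work, although it patches an imprecision that the lemma statement itself shares. The outer $S_{\ell_1},\dots,S_{\ell_n}$ wrapper of a normalized term records the common stack suffix left untouched by \emph{every} rule application, and its labels never appear on any edge of $\mathcal{A_C}$: edges exist only for symbols occurring in some rule of $\Delta_\mathcal{C}$, so a proof of $\alpha.\ell \subtype \beta.\ell$ obtained by a trailing $\textsc{S-Field}$ application traverses exactly the same path as the proof of $\alpha \subtype \beta$ it wraps. The wrapper therefore cannot be ``read off from the portions of the boundary chains not consumed by any cancellation''; the correspondence is really with normal-form proofs having trivial outer wrapper (equivalently, proofs modulo trailing $\textsc{S-Field}$ applications), with suffix extension recovered separately via \autoref{deriv-suffix}. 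Two smaller imprecisions: only the pop-to-push transitions inside a gadget carry $\one$-edges (push-to-pop transitions happen silently at shared vertices), and productivity is the global bracket-matching condition rather than a purely junction-local one, since excess pushes surviving one junction may only cancel against pops at a later junction.
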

\begin{proof}
A productive transition sequence must consist of a sequence of pop edges followed by a sequence of
push edges, possibly with insertions of unit edges or intermediate sequences of the form
\[\stPush(u_n) \otimes \cdots \otimes \stPush(u_1) \otimes \stPop(u_1) \otimes \cdots \otimes \stPop(u_n)\]
Following a push or pop edge corresponds to observing or forgetting part of the stack. Following a 
$\one$ edge corresponds to applying a PDS rule from $\mathcal{P_C}$.  
\end{proof}
 
\subsection{Saturation}

The label sequences appearing in \autoref{sat1} are tantalizingly close to having the simple structure
of building up a pop sequence representing an initial state of the pushdown automaton, then building up
a push sequence representing a final state.  But the intermediate terms of the form ``push $u$, then pop it
again'' are unwieldy.  To remove the necessity for those sequences, we can {\em saturate} $\mathcal{A}_\mathcal{C}$
by adding additional $\one$-labeled transitions providing shortcuts to the push/pop subsequences.
We modify the standard saturation algorithm to also lazily instantiate transitions which correspond to
uses of the $\textsc{S-Pointer}$-derived rules in $\Delta_\mathsf{ptr}$.

\begin{lemma}
The reachable states of the automaton $\mathcal{A_C}$ can be partitioned into {\em covariant} and {\em contravariant} states, where a state's variance is defined to be the
variance of any sequence reaching the state from $\stStart$.
\label{state-variance}
\end{lemma}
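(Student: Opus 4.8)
The statement asserts that the function sending a reachable state of $\mathcal{A_C}$ to the variance of a path reaching it from $\stStart$ is well-defined; equivalently, that there is a labelling $\nu$ of the reachable states by elements of $\{\oplus,\ominus\}$ with $\nu(\stStart)=\oplus$ and $\nu(s')=\nu(s)\cdot\langle\mathsf{p}\rangle$ for every transition $s \stackrel{\mathsf{p}}{\rewrite} s'$ (the variance of a $\stPop{x}$ or $\stPush{x}$ edge being $\langle x\rangle$, and of a $\one$ edge being $\oplus$). The plan is to exhibit such a $\nu$ directly from the data the states already carry, verify the cocycle identity $\nu(s')=\nu(s)\cdot\langle\mathsf{p}\rangle$ on each family of transitions introduced in the construction, and then conclude by a one-line induction on path length that every path from $\stStart$ to a reachable $s$ has variance $\nu(s)$; the sets $\nu^{-1}(\oplus)$ and $\nu^{-1}(\ominus)$ are then the required partition into covariant and contravariant states.

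First I would pin down $\nu$. Every reachable state other than $\stStart,\stEnd$ has the form $p^a w$: a control state $p$ with a variance tag $a$, together with a (possibly empty) stack suffix $w$ recorded in the state name. I set $\nu(\stStart)=\nu(\stEnd)=\oplus$, and define $\nu(p^a w)$ to be the variance forced by the pop-chain that creates $p^a w$; the key point is that the superscripts were chosen precisely so that moving a symbol $\ell$ between the control state and the top of the stack multiplies the superscript by $\langle\ell\rangle$, which is exactly the variance of the corresponding pop/push edge, so $\nu$ agrees with path-variance along pop and push chains automatically. Since distinct per-axiom gadgets share a state only when that state is $\stStart$, $\stEnd$, an entry/exit token, or a common prefix inside gadgets of the same variance, the $\nu$-values assigned through different gadgets agree, so $\nu$ is genuinely a function of the state.

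Then I verify the cocycle condition on the four families of edges of $\mathcal{A_C}$: (i) the initial pop $\stStart\stackrel{\stPop{p^{\cdots}}}{\rewrite}p^{\cdots}$ and final push $q^{\cdots}\stackrel{\stPush{q^{\cdots}}}{\rewrite}\stEnd$, which hold by inspection of the tags on the entry and exit tokens together with $\nu(\stStart)=\nu(\stEnd)=\oplus$; (ii) intermediate pop edges and (iii) intermediate push edges, which hold by the choice of superscripts just described; and (iv) the $\one$-edges $p^a u \stackrel{\one}{\rewrite} q^b v$ coming from a rule $\pdsrule{p^a}{u}{q^b}{v}\in\Delta_\mathcal{C}$. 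For (iv) the content is the rule-level form of \autoref{deriv-signs}: every such rule is $\mathsf{rule}^\oplus(c)$ or $\mathsf{rule}^\ominus(c)$, and in either case $a\cdot\langle u\rangle = b\cdot\langle v\rangle$, which says exactly that the two endpoints of the $\one$-edge carry equal $\nu$, as required since $\langle\one\rangle=\oplus$. Finally I would note that the same argument covers the edges added during saturation: a saturation shortcut replaces a productive push/pop sub-path whose labels multiply to $\stPush{v_n}\otimes\cdots\otimes\stPush{v_1}\otimes\stPop{v_1}\otimes\cdots\otimes\stPop{v_n}$, with total variance $\langle v_1\cdots v_n\rangle^2=\oplus$, so the $\one$-labelled shortcut preserves $\nu$; and the lazily instantiated $\textsc{S-Pointer}$ edges come from rules $\mathsf{rules}(v.\mathsf{store}\subtype v.\mathsf{load})$, which again satisfy $a\cdot\langle u\rangle=b\cdot\langle v\rangle$, so their pop/push/$\one$ decomposition is handled exactly as in (i)--(iv).

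The only real obstacle is the sign bookkeeping of item (iv) and its saturation analogue: one must chase the variance tags through $\mathsf{lhs}/\mathsf{rhs}$, through the asymmetry between $\mathsf{rule}^\oplus$ and $\mathsf{rule}^\ominus$, and through the decomposition of each rule into pop/push/$\one$ pieces, checking that they are mutually consistent. But the governing identity $a\cdot\langle u\rangle = b\cdot\langle v\rangle$ for each rule in $\Delta$ is immediate from the definitions of $\mathsf{rule}^\oplus$, $\mathsf{rule}^\ominus$ (it is essentially the proof of \autoref{deriv-signs} read at the level of rules rather than derivations), after which the per-edge verifications are routine and the induction on path length finishes the proof.
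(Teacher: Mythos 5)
Your proposal is correct and follows essentially the same route as the paper, whose entire proof is ``by construction of $\Delta_\mathcal{C}$ and $\mathcal{A}_\mathcal{C}$'': you simply make explicit the labelling $\nu$ that the construction already encodes in the variance superscripts of the state names, verify the cocycle identity edge-family by edge-family (with the rule-level form of the sign invariant $a\cdot\langle u\rangle=b\cdot\langle v\rangle$ doing the work on the $\one$-edges), and induct on path length. The treatment of saturation edges is not needed for this lemma, which concerns $\mathcal{A}_\mathcal{C}$ before saturation, but it does no harm and is reused later.
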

\begin{proof}
By construction of $\Delta_\mathcal{C}$ and $\mathcal{A}_\mathcal{C}$.
\end{proof}

\begin{lemma}
There is an involution $n \mapsto \overline{n}$ on $\mathcal{A_C}$ defined by
\[\overline{x^a u} = x^{\ominus \cdot a} u\]
\[\overline{\stStart} = \stEnd\]
\[\overline{\stEnd} = \stStart\]
\label{path-symmetry}
\end{lemma}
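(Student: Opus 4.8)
The plan is to split the statement into two parts: (i) the displayed formula really does define a total map from the states of $\mathcal{A_C}$ to itself (well-definedness), and (ii) this map squares to the identity. Part (ii) is pure sign-monoid arithmetic and will be immediate; essentially all the work is in part (i), so that is where I expect the main obstacle to lie.

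For well-definedness I would trace through the construction of $\mathcal{A_C}$ from $\Delta_\mathcal{C}$. Every non-terminal state produced has the form $x^a w$, where $a\in\{\oplus,\ominus\}$ is a sign, $w\in\Sigma^*$ a stack fragment, and $x$ a (possibly $\mathsf{lhs}/\mathsf{rhs}$-tagged) base control symbol, while the two terminal states $\stStart,\stEnd$ are interchanged by the map. The key fact is that $\Delta_\mathcal{C}=\bigcup_{c\in\mathcal{C}}\{\mathsf{rule}^\oplus(c),\mathsf{rule}^\ominus(c)\}$ is closed under the ``mirror'' operation that exchanges the two sides of a rule while negating both control-state signs — the same symmetry already invoked to show that $(X^a u,Y^b v)\in\Deriv{P_C}$ entails $(Y^{\ominus\cdot b}v,X^{\ominus\cdot a}u)\in\Deriv{P_C}$. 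Inspecting the four families of transitions added in the construction of $\mathcal{A_C}$, the chain of pop-states built from the left-hand side of $\mathsf{rule}^\oplus(c)$ is carried, fragment for fragment, onto the chain of push-states built from the right-hand side of its mirror $\mathsf{rule}^\ominus(c)$ with every accumulated variance product negated, and symmetrically for right-hand sides; hence $x^a w$ being a state forces $x^{\ominus\cdot a}w$ to be a state, so $n\mapsto\overline n$ is well-defined.

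Part (ii) then follows by computing $\overline{\overline{x^a w}}=x^{(\ominus\cdot\ominus\cdot a)}w=x^{\oplus\cdot a}w=x^a w$, using that $\ominus\cdot\ominus=\oplus$ and that $\oplus$ is the identity of the sign monoid (\autoref{variance-def}), together with $\overline{\overline{\stStart}}=\overline{\stEnd}=\stStart$ and the analogous computation for $\stEnd$. If one additionally wants $\overline{\cdot}$ to act as an anti-automorphism of the labelled graph $\mathcal{A_C}$ (reversing each edge and interchanging $\mathsf{push}$ with $\mathsf{pop}$), the same pairing $\mathsf{rule}^\oplus(c)\leftrightarrow\mathsf{rule}^\ominus(c)$ supplies it edge-by-edge: the $\mathbf 1$-edge of a rule is matched with the $\mathbf 1$-edge of its mirror, and each pop-edge in a left-hand-side chain is matched with the corresponding push-edge in the mirror's right-hand-side chain. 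The only real difficulty is bookkeeping — keeping the $\mathsf{lhs}/\mathsf{rhs}$ tags and the telescoping variance products straight while reading off the states generated by the transducer construction — and once the mirror symmetry of $\Delta_\mathcal{C}$ is in hand no genuinely new idea is required.
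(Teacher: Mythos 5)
Your proposal is correct and follows essentially the same route as the paper, whose entire proof is the one-line observation that the symmetry is ``immediate due to the use of the rule constructors $\mathsf{rule}^\oplus$ and $\mathsf{rule}^\ominus$ when forming $\Delta_\mathcal{C}$'' --- exactly the mirror pairing of rules you identify as the key fact. Your elaboration of well-definedness and the trivial sign computation $\ominus\cdot\ominus=\oplus$ is a faithful (if more detailed) expansion of that argument.
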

\begin{proof}
Immediate due to the use of the rule constructors $\mathsf{rule}^\oplus$ and $\mathsf{rule}^\ominus$ when
forming $\Delta_\mathcal{C}$.
\end{proof}

In this section, the automaton $\mathcal{A_C}$ will be {\em saturated} by adding transitions to
create a new automaton $\mathcal{A}_\mathcal{C}^\mathsf{sat}$.

\begin{definition}
A sequence is called {\em reduced} if it is productive and contains no factors of the form
$\stPop{x} \otimes \stPush{x}$.
\end{definition}
Reduced productive sequences all have the form of a sequence of pops, followed by a sequence of pushes.

\noindent
The goal of the saturation algorithm is twofold:
\begin{enumerate}
\item Ensure that for any productive sequence accepted by $\mathcal{A_C}$ there is an equivalent
reduced sequence accepted by $\mathcal{A}_\mathcal{C}^\mathsf{sat}$.
\item Ensure that $\mathcal{A}_\mathcal{C}^\mathsf{sat}$ can represent elementary proofs that use
$\textsc{S-Pointer}$.
\end{enumerate}
The saturation algorithm \ref{sat-alg-code} proceeds by maintaining, for each state $q \in \mathcal{A_C}$, a set of
reaching-pushes $R(q)$.  The reaching push set $R(q)$ will contain the pair $(\ell, p)$ only
if there is a transition sequence in $\mathcal{A_C}$ from $p$ to $q$ with weight $\stPush{\ell}$.
When $q$ has an outgoing $\stPop{\ell}$ edge to $q'$ and $(\ell, p) \in R(q)$, we add a new
transition $p \stackrel{\one}{\to} q'$ to $\mathcal{A_C}$.

A special propagation clause is responsible for propagating reaching-push facts as if rules from
$\Delta_\mathsf{ptr}$ were instantiated, allowing the saturation algorithm to work even though the
corresponding unconstrained pushdown system has infinitely many rules.
This special clause is justified by considering
the standard saturation rule when $x.\mathsf{store} \subtype x.\mathsf{load}$ is
added as an axiom in $\mathcal{C}$. An example appears in \autoref{pointer-sat}:
a saturation edge is added from $x^\oplus.\mathsf{store}$ to $y^\oplus.\mathsf{load}$
due to the pointer saturation rule, but the same edge would also have been added if the 
states and transitions corresponding to $p.\mathsf{store} \subtype p.\mathsf{load}$
(depicted with dotted edges and nodes) were added to $\mathcal{A}_\mathcal{C}$.

\begin{algorithm*}
\caption{Saturation algorithm}
\label{sat-alg-code}
\begin{algorithmic}[1]

  \Procedure{Saturated}{$V,E$}
     \Comment{$V$ is a set of vertices partitioned into $V = V^\oplus \amalg V^\ominus$}
     \State{$E' \gets E$}
     \Comment{$E$ is a set of edges, represented as triples $(\mathsf{src}, \mathsf{tgt}, \mathsf{label})$}
     \ForAll{$x \in V$}
        \State{$R(x) \gets \emptyset$}
     \EndFor
     \ForAll{$(x,y,e) \in E$ with $e = \stPush{\ell}$}
        \Comment{Initialize the reaching-push sets $R(x)$}
        \State{$R(y) \gets R(y) \cup \{(\ell,x)\}$}
     \EndFor
     \Repeat
       \State{$R_\text{old} \gets R$}
       \State{$E'_\text{old} \gets E'$}
       \ForAll{$(x,y,e) \in E'$ with $e = \one$}
         \State{$R(y) \gets R(y) \cup R(x)$}
       \EndFor
       \ForAll{$(x,y,e) \in E'$ with $e = \stPop{\ell}$}
         \ForAll{$(\ell,z) \in R(x)$}
           \Comment{The standard saturation rule.}
           \State{$E' \gets E' \cup \{(z,y,\one)\}$}
         \EndFor
       \EndFor
       \ForAll{$x \in V^\ominus$}
         \Comment{Lazily apply saturation rules corresponding to $\textsc{S-Pointer}$.}
         \ForAll{$(\ell, z) \in R(x)$ with $\ell = \mathsf{.store}$}
           \Comment{See \autoref{pointer-sat} for an example.}
           \State{$R(\overline{x}) \gets R(\overline{x}) \cup \{(\mathsf{.load},z)\}$}
         \EndFor
         \ForAll{$(\ell, z) \in R(x)$ with $\ell = \mathsf{.load}$}
           \State{$R(\overline{x}) \gets R(\overline{x}) \cup \{(\mathsf{.store},z)\}$}
         \EndFor
       \EndFor
     \Until{$R = R_\text{old}$ and $E' = E'_\text{old}$}
     \State\Return{$E'$}
  \EndProcedure

\end{algorithmic}
\end{algorithm*}

\begin{figure*}
\begin{minipage}{0.7\textwidth}
\[\begin{tikzpicture}[scale=0.7]

\node [draw] at (0,10) (s) {$\textsc{\#Start}$};
\node [draw] at (0,8) (a) {$A_\mathsf{L}^\oplus$};
\node [draw] at (0,6) (xs) {$x.\mathsf{store}^\oplus$};
\node [draw] at (3,7) (x) {$x^\ominus$};
\node [draw] at (6,8) (pm) {$p^\ominus$};
\node [draw,style=dotted] at (5,6) (psp) {$p.\mathsf{store}^\oplus$};
\node [draw,style=dotted] at (7,6) (plm) {$p.\mathsf{load}^\ominus$};
\node [draw,style=dotted] at (5,4) (plp) {$p.\mathsf{load}^\oplus$};
\node [draw,style=dotted] at (7,4) (psm) {$p.\mathsf{store}^\ominus$};
\node [draw] at (6,2) (pp) {$p^\oplus$};
\node [draw] at (3,3) (y) {$y^\oplus$};
\node [draw] at (0,4) (yl) {$y.\mathsf{load}^\oplus$};
\node [draw] at (0,2) (b) {$B_\mathsf{R}^\oplus$};
\node [draw] at (0,0) (e) {$\textsc{\#End}$};

\path[->,every node/.style={font=\scriptsize}] (s) edge node[left] {$\stPop{A_\mathsf{L}}$} (a);
\path[->,every node/.style={font=\scriptsize}] (a) edge node[left] {$\one$} (xs);

\path[->,every node/.style={font=\scriptsize}] (yl) edge node[left] {$\one$} (b);
\path[->,every node/.style={font=\scriptsize}] (b) edge node[left] {$\stPush{B_\mathsf{R}}$} (e);

\path[->,every node/.style={font=\scriptsize}] (xs) edge node[above] {$\stPush{\mathsf{store}}$} (x);
\path[->,every node/.style={font=\scriptsize}] (y) edge node[below] {$\stPop{\mathsf{load}}$} (yl);

\path[->,every node/.style={font=\scriptsize}] (x) edge node[above] {$\one$} (pm);
\path[->,every node/.style={font=\scriptsize}] (pp) edge node[below] {$\one$} (y);

\path[->,every node/.style={font=\scriptsize}] (pm) edge[style=dotted] node[left] {$\stPop{\mathsf{store}}$} (psp);
\path[->,every node/.style={font=\scriptsize}] (pm) edge[style=dotted] node[right] {$\stPop{\mathsf{load}}$} (plm);
\path[->,every node/.style={font=\scriptsize}] (psp) edge[style=dotted] node[left] {$\one$} (plp);
\path[->,every node/.style={font=\scriptsize}] (plm) edge[style=dotted] node[right] {$\one$} (psm);
\path[->,every node/.style={font=\scriptsize}] (plp) edge[style=dotted] node[left] {$\stPush{\mathsf{load}}$} (pp);
\path[->,every node/.style={font=\scriptsize}] (psm) edge[style=dotted] node[right] {$\stPush{\mathsf{store}}$} (pp);

\path[->,every node/.style={font=\scriptsize}]
  (xs.south east) edge[style=dashed, bend left] node[right] {$\one$} (yl.north east);

\end{tikzpicture}\]
\end{minipage}
\begin{minipage}{0.3\textwidth}
\begin{verbatim}
{
  p =  y;
  x =  p;
 *x =  A;
  B = *y;
}
\end{verbatim}
\end{minipage}
\caption{Saturation using an implicit application of $p.\mathsf{store} \subtype p.\mathsf{load}$. The
initial constraint set was $\{y \subtype p,\quad p \subtype x,\quad A \subtype x.\mathsf{store},\quad
y.\mathsf{load} \subtype B\}$, modeling the simple program on the right. The dashed edge was
added by a saturation rule that only fires because of the lazy handling in \autoref{sat-alg-code}.
The dotted states and edges show how the graph would look if the corresponding rule from
$\Delta_\mathsf{ptr}$ were explicitly instantiated.}
\label{pointer-sat}
\end{figure*}

Once the saturation algorithm completes, the automaton $\mathcal{A}^\mathsf{sat}_\mathcal{C}$ has the
property that, if there is a transition sequence from $p$ to $q$ with weight {\em equivalent} to
\[\stPop{u_1} \otimes \cdots \otimes \stPop{u_n} \otimes \stPush{v_m} \otimes \cdots \otimes \stPush{v_1},\]
then there is a path from $p$ to $q$ that has, ignoring $\one$ edges, {\em exactly} the label sequence
\[\stPop{u_1}, \dots, \stPop{u_n}, \stPush{v_m}, \dots, \stPush{v_1}.\]

\subsection{Shadowing}
\label{shadowing}

The automaton $\mathcal{A}^\mathsf{sat}_\mathcal{C}$ now accepts push/pop sequences representing the
changes in the stack during any legal derivation in the pushdown system $\Delta$.  After
saturation, we can guarantee that every derivation is represented by a path which first pops
a sequence of tokens, then pushes another sequence of tokens.

Unfortunately, $\mathcal{A}^\mathsf{sat}_\mathcal{C}$ still accepts unproductive transition sequences
which push and then immediately pop token sequences.  To complete the construction, we
form an automaton $Q$ by intersecting $\mathcal{A}^\mathsf{sat}_\mathcal{C}$
with an automaton for the language of words consisting of only pops, followed by only pushes.

This final transformation yields an
automaton $Q$ with the property that for every transition sequence $s$ accepted by
$\mathcal{A}^\mathsf{sat}_\mathcal{C}$, $Q$ accepts a sequence $s'$ such that $\sem{s} = \sem{s'}$ in 
$\mathsf{StackOp}$ and $s'$ consists of a sequence of pops followed by a sequence of pushes.
 This ensures
that $Q$ only accepts the productive transition sequences in $\mathcal{A}^\mathsf{sat}_\mathcal{C}$.
Finally, we can treat $Q$ as a finite-state transducer by treating transitions labeled $\stPop{\ell}$
as reading the symbol $\ell$ from the input tape, $\stPush{\ell}$ as writing $\ell$ to the output
tape, and $\one$ as an $\varepsilon$-transition.

$Q$ can be further manipulated through determinization and/or minimization to produce a
compact transducer representing the valid transition sequences through $\mathcal{P}_\mathcal{C}$.
Taken as a whole, we have shown that $Xu \stackrel{Q}{\mapsto} Yv$ if and only if $X$ and $Y$
are interesting variables and there is an elementary derivation of $\mathcal{C} \proves X.u \subtype Y.v$.

We make use of this process in two places during type analysis: first, by computing $Q$ relative
to the type variable of a function, we get a transducer that represents all
elementary derivations of relationships between the function's inputs and outputs.
\autoref{transducer-to-pds} is used to convert the transducer $Q$ back to a pushdown
system $\mathcal{P}$, such that $Q$ describes all valid derivations in $\mathcal{P}$.
Then the rules in $\mathcal{P}$ can be interpreted as subtype constraints, resulting in a 
simplification of the constraint set relative to the formal type variables.

Second, by computing $Q$ relative to the set of type constants we obtain a transducer that
can be efficiently queried to determine which derived type variables are bound above or below
by which type constants.  This is used by the $\textsc{Solve}$ procedure in \autoref{type-infer-alg2} 
to populate lattice elements decorating the inferred sketches.

\begin{algorithm}
\caption{Converting a transducer to a pushdown system}
\label{transducer-to-pds}
\begin{algorithmic}[0]

  \Procedure{TypeScheme}{Q}
    \State{$\Delta \gets \text{new PDS}$}
    \State{$\Delta.\mathsf{states} \gets Q.\mathsf{states}$}
    \ForAll{$p \stackrel{t}{\to} q \in Q.\mathsf{transitions}$}
      \If{$t = \stPop{\ell}$}
        \State{\Call{AddPDSRule}{$\Delta, \pdsrule{p}{\ell}{q}{\varepsilon}$}}
      \Else
        \State{\Call{AddPDSRule}{$\Delta, \pdsrule{p}{\varepsilon}{q}{\ell}$}}
      \EndIf
    \EndFor
    \State{\Return{$\Delta$}}
  \EndProcedure

\end{algorithmic}
\end{algorithm}

\section{The Lattice of Sketches}
\label{sketch-appendix}

Throughout this section, we fix a lattice $(\Lambda, <:, \vee, \wedge)$ of atomic types.  We
do not assume anything about $\Lambda$ except that it should have finite height so that
infinite subtype chains eventually stabilize.
For example purposes, we will take 
$\Lambda$ to be the lattice of semantic classes depicted in \autoref{sample-lattice}.

Our initial implementation of sketches did not use the auxiliary lattice $\Lambda$.
We found that adding these decorations to the sketches helped preserve high-level types of interest to the
end user during type inference.  This allows us to
recover high-level C and Windows typedefs such as {\verb|size_t|}, {\verb|FILE|},
{\verb|HANDLE|}, and {\verb|SOCKET|} that are useful for program understanding and
reverse engineering, as noted by \citet{rewards}.

Decorations also enable a simple mechanism by which the user can extend Retypd's type system, adding
semantic purposes to the types for known functions. For example, we can extend $\Lambda$ to add seeds for
a tag {\verb|#signal-number|} attached to the less-informative {\verb|int|} parameter to
{\verb|signal()|}.
This approach also allows us to distinguish between opaque typedefs and the underlying type, as in
{\verb|HANDLE|} and {\verb|void*|}.  Since the semantics of a {\verb|HANDLE|} are quite
distinct from those
of a {\verb|void*|}, it is important to have a mechanism that can preserve the typedef name.

\begin{figure}
\[\begin{tikzpicture}[scale=0.7]
\node (top) at (0,3)  {$\top$};
\node (num) at (-1,2) {$\typename{num}$};
\node (str) at (1,2)  {$\typename{str}$};
\node (url) at (1,1)  {$\typename{url}$};
\node (bot) at (0,0)  {$\bot$};
\draw (top) -- (num) -- (bot);
\draw (top) -- (str) -- (url) -- (bot);
\end{tikzpicture}\]
\caption{The sample lattice $\Lambda$ of atomic types.}
\label{sample-lattice}
\end{figure}

\subsection{Basic Definitions}
\begin{definition}
A {\em sketch} is a regular tree with edges labeled by elements of $\Sigma$ and nodes labeled
with elements of $\Lambda$.  The set of all sketches, with $\Sigma$ and $\Lambda$ implicitly
fixed, will be denoted $\mathsf{Sk}$.  We may alternately think of a sketch as a 
prefix-closed regular language $\mathcal{L}(S) \subseteq \Sigma^*$ and a function $\nu : S \to \Lambda$
such that each fiber $\nu^{-1}(\lambda)$ is regular.
It will be convenient to write $\nu_S(w)$
for the value of $\nu$ at the node of $S$ reached by following the word $w \in \Sigma^*$.

By collapsing equal subtrees, we can represent sketches as deterministic finite state automata with
each state labeled by an element of $\Lambda$, as in \autoref{sketch-example}.
Since the regular language associated with a sketch is prefix-closed, all states of the associated automaton
are accepting.
\end{definition}

\begin{lemma}
The set of sketches $\mathsf{Sk}$ forms a lattice with meet and join operations $\sqcap, \sqcup$ defined
according to \autoref{sketch-lattice}.  $\mathsf{Sk}$ has a top element given by the sketch accepting
the language $\{\varepsilon\}$, with the single node labeled by $\top \in \Lambda$.

If $\Sigma$ is finite, $\mathsf{Sk}$ also has a bottom element accepting the language $\Sigma^*$, with
label function $\nu_\bot(w) = \bot$ when $\langle w \rangle = \oplus$, or $\nu_\bot(w) = \top$ when
$\langle w \rangle = \ominus$.

We will use $X \subsketch Y$ to denote the partial order on sketches compatible with the lattice operations,
so that $X \sqcap Y = X$ if and only if $X \subsketch Y$.
\end{lemma}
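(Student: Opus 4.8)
The plan is to take the explicit meet and join of \autoref{sketch-lattice} and verify the lattice axioms for them directly, then read off $\subsketch$ and the two bounds. Recall that we identify a sketch $S$ with a prefix-closed regular language $\mathcal{L}(S) \subseteq \Sigma^*$ together with a labeling $\nu_S : \mathcal{L}(S) \to \Lambda$ whose fibers are regular, and that the operations act by $\mathcal{L}(X \sqcap Y) = \mathcal{L}(X) \cup \mathcal{L}(Y)$ and $\mathcal{L}(X \sqcup Y) = \mathcal{L}(X) \cap \mathcal{L}(Y)$, while the node labels at a word $w$ are combined with the $\Lambda$-operation dictated by the variance of $w$: $\sqcap$ uses $\wedge$ when $\langle w \rangle = \oplus$ and $\vee$ when $\langle w \rangle = \ominus$, and $\sqcup$ uses the opposite pairing. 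When $w$ lies in the domain of only one operand, the missing label is taken to be the identity in $\Lambda$ of the operation being applied at $w$ --- thus $\top$ at covariant positions and $\bot$ at contravariant positions for $\sqcap$, and dually for $\sqcup$ --- which amounts to extending the absent sketch by $\top_{\mathsf{Sk}}$ (resp.\ $\bot_{\mathsf{Sk}}$) before combining.

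The first step is closure of $\mathsf{Sk}$ under $\sqcap$ and $\sqcup$. Representing $X$ and $Y$ as $\Lambda$-labeled deterministic automata, a product construction recognizes $\mathcal{L}(X) \cup \mathcal{L}(Y)$ (resp.\ $\mathcal{L}(X) \cap \mathcal{L}(Y)$); for $\sqcap$ one also needs ``run-off'' states, reached once one component has left its automaton while the other continues, tagged with the running variance so the correct identity label can be supplied. Since a product automaton has finitely many states and each is labeled by a single meet or join of two $\Lambda$-values, only finitely many elements of $\Lambda$ occur, every fiber of the resulting labeling is recognized by a sub-automaton, and the tree has finitely many subtrees up to labeled isomorphism: the result is again a sketch.

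The second step is the lattice identities --- commutativity, associativity, idempotence, and the two absorption laws --- which I would verify pointwise. On the language component they are exactly the Boolean identities for $\cup$ and $\cap$. On the label component they reduce, at each word $w$, to the corresponding identities in $(\Lambda, \vee, \wedge, <:)$, with the single twist that the $\oplus$/$\ominus$ pairing is applied; since $\langle w \rangle$ depends only on $w$, the same clause is used on both sides of each identity, so the reduction is uniform. Absorption is the only case that touches the missing-branch convention: when $w$ lies in one operand's domain but not the other's, the inserted label is by construction the identity of the operation applied at $w$, so it drops out exactly as in $\Lambda$. This establishes that $(\mathsf{Sk}, \sqcap, \sqcup)$ is a lattice; defining $X \subsketch Y$ to mean $X \sqcap Y = X$ then yields a partial order with $\sqcup$ as its join, and unwinding the definitions shows that $X \subsketch Y$ holds precisely when $\mathcal{L}(X) \supseteq \mathcal{L}(Y)$ and, for every $w \in \mathcal{L}(Y)$, $\nu_X(w) <: \nu_Y(w)$ if $\langle w \rangle = \oplus$ and $\nu_Y(w) <: \nu_X(w)$ if $\langle w \rangle = \ominus$.

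Finally I would treat the bounds. The one-node sketch $\top_{\mathsf{Sk}}$ with $\mathcal{L} = \{\varepsilon\}$ and $\nu(\varepsilon) = \top$ satisfies $X \subsketch \top_{\mathsf{Sk}}$ for every $X$, because $\mathcal{L}(X) \supseteq \{\varepsilon\}$ (nonempty and prefix-closed) and $\nu_X(\varepsilon) <: \top$; hence it is the greatest element. When $\Sigma$ is finite, the sketch $\bot_{\mathsf{Sk}}$ with $\mathcal{L} = \Sigma^*$ and $\nu_\bot(w) = \bot$ at covariant $w$, $\top$ at contravariant $w$, is a bona fide element of $\mathsf{Sk}$ (only two subtrees up to isomorphism, both regular), and $\bot_{\mathsf{Sk}} \subsketch X$ for every $X$ since $\Sigma^* \supseteq \mathcal{L}(X)$ and, on $\mathcal{L}(X)$, $\bot <: \nu_X(w)$ at covariant $w$ and $\nu_X(w) <: \top$ at contravariant $w$; hence it is the least element. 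I expect the delicate part to be the variance bookkeeping together with the missing-branch convention: choosing any fill other than the $\Lambda$-identity of the position's operation would break either absorption or the identification of $\top_{\mathsf{Sk}}$ and $\bot_{\mathsf{Sk}}$ as the bounds, so the proof really hinges on getting that convention and the induced description of $\subsketch$ to match.
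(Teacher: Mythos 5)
Your proof is correct. Note that the paper states this lemma without any proof at all, so there is nothing to compare against; your direct verification of the lattice axioms is the natural argument. The one genuinely delicate point is the one you identified: Figure~18's formula for $\nu_{X \sqcap Y}$ is only written for $w \in \mathcal{L}(X) \cap \mathcal{L}(Y)$, even though $\mathcal{L}(X \sqcap Y) = \mathcal{L}(X) \cup \mathcal{L}(Y)$ contains one-sided words, and your ``missing-branch'' convention (fill with the identity of the operation dictated by $\langle w \rangle$) is the unique choice that makes absorption go through and yields the description of $\subsketch$ that the paper actually uses elsewhere (e.g.\ the contravariant inequality $\nu_{S_\beta} <: \nu_{S_X}$ under $S_X \subsketch S_\beta$ in the narrowing example). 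One tiny presentational quibble: in the product-automaton closure argument, the running variance must be tracked in \emph{all} states, not just the run-off states, since whether a node's label is a meet or a join of the component labels depends on $\langle w \rangle$ everywhere; your construction clearly accommodates this, but the text only mentions variance tagging for the run-off states.
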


\subsubsection{Modeling Constraint Solutions with Sketches}

Sketches are our choice of entity for modeling solutions to the constraint sets of
\autoref{section-syntax}.
\begin{definition}A {\em solution} to a constraint set $\mathcal{C}$ over the
type variables $\mathcal{V}$ is a set of bindings $S : \mathcal{V} \to \mathsf{Sk}$
such that
\begin{itemize}
\item If $\overline{\kappa}$ is a type constant, $\mathcal{L}(S_\kappa) = \{\varepsilon\}$
and $\nu_{S_\kappa}(\varepsilon) = \kappa$. 
\item If $\mathcal{C} \proves \term{X.v}$ then $v \in \mathcal{L}(X)$.
\item If $\mathcal{C} \proves X.u \subtype Y.v$ then $\nu_X(u) <: \nu_Y(v)$.
\item If $\mathcal{C} \proves X.u \subtype Y.v$ then $u^{-1} S_X \subsketch v^{-1} S_Y$,
where $v^{-1} S_X$ is the sketch corresponding to the subtree reached by 
following the path $v$ from the root of $S_X$.
\end{itemize}
\end{definition}

The main utility of sketches is that they are almost a free tree model of the constraint language.  
Any constraint set $\mathcal{C}$ is satisfiable over the lattice of sketches, as long as
$\mathcal{C}$ cannot prove an impossible subtype relation in $\Lambda$.

\begin{theorem}
Suppose that $\mathcal{C}$ is a constraint set over the variables $\{\tau_i\}_{i \in I}$.
Then there exist sketches $\{S_i\}_{i \in I}$ such that $w \in S_i$ if and only if
$\mathcal{C} \proves \term{\tau_i.w}$.
\end{theorem}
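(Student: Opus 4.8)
The plan is to make the Steensgaard-style construction already outlined for \autoref{recognize-var} fully rigorous. First I would build the finite directed labeled graph $G$ whose vertices are $n(\alpha)$ for every derived type variable $\alpha$ that occurs in $\mathcal{C}$ together with all of its prefixes, with an edge $n(\alpha)\edge{\ell}n(\alpha.\ell)$ for each such $\alpha.\ell$. Then I would take $\sim$ to be the least equivalence relation on the vertex set that (i) identifies $n(\alpha)$ with $n(\beta)$ whenever $\alpha\subtype\beta$ is an \emph{axiom} of $\mathcal{C}$, and (ii) is an edge congruence: if $n(\alpha)\sim n(\beta)$ and $G$ has edges $n(\alpha)\edge{\ell}n(\alpha')$, $n(\beta)\edge{\ell'}n(\beta')$ with $\ell=\ell'$ or $\{\ell,\ell'\}=\{\mathsf{.load},\mathsf{.store}\}$, then $n(\alpha')\sim n(\beta')$. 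Passing to the quotient $G/\!\!\sim$ yields a finite automaton in which every state is accepting; I would define $S_i$ to be the regular tree obtained by unfolding $G/\!\!\sim$ from the state $[n(\tau_i)]$ and labeling every node with $\top\in\Lambda$ (the lattice decorations are irrelevant to the present statement and are computed separately in \autoref{shadowing}).

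The heart of the matter is the equivalence ``$\mathcal{C}\proves\term{\tau_i.w}$ iff $G/\!\!\sim$ has a $w$-labeled path out of $[n(\tau_i)]$.'' For the forward direction I would run a simultaneous induction on normal-form derivations (\autoref{normal-form-thm}) establishing that (a) every derivable $\term{X.w}$ is witnessed by a $w$-labeled path from $[n(X)]$ in $G/\!\!\sim$, and (b) every derivable $X.u\subtype Y.v$ forces the endpoint of the $u$-path from $[n(X)]$ and the endpoint of the $v$-path from $[n(Y)]$ to lie in the same $\sim$-class. Case by case: the base case $\term{\alpha}\in\mathcal{C}$ uses the spelling-$\alpha$ path built into $G$; $\textsc{T-Prefix}$ truncates a path; an axiom $\alpha\subtype\beta$ is clause (i); $\textsc{S-Trans}$ is transitivity of $\sim$; $\textsc{S-Field}_{\oplus}$ and $\textsc{S-Field}_{\ominus}$ together with $\textsc{S-Refl}$ are clause (ii) with $\ell=\ell'$; $\textsc{T-Left}$, $\textsc{T-Right}$ and $\textsc{T-InheritL}$, $\textsc{T-InheritR}$ follow by combining (a) and (b); and the effect of $\textsc{S-Pointer}$ (that merging $\alpha$ with the variable carrying the axiom $\gamma.\mathsf{load}\subtype\gamma.\mathsf{store}$ or a $.\mathsf{load}/.\mathsf{store}$ declaration makes both $\term{\alpha.\mathsf{load}}$ and $\term{\alpha.\mathsf{store}}$ derivable) is exactly the $\mathsf{.load}/\mathsf{.store}$ alternative in clause (ii), possibly synthesizing a path whose target vertex is not itself in $G$. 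For the reverse direction I would read each edge traversal $n(\alpha)\edge{\ell}n(\alpha')$ as re-establishing $\term{\alpha.\ell}$ from the axiom for $\alpha.\ell$ plus $\textsc{T-Prefix}$, and each $\sim$-merge encountered along the path as backed, by construction, either by a subtype axiom (clause (i), hence $\textsc{T-InheritR}$) or by pushing an earlier-justified merge through matching labels ($\textsc{T-InheritL}$/$\textsc{T-InheritR}$), or by $\textsc{S-Pointer}$ followed by those inheritance rules in the $\mathsf{.load}/\mathsf{.store}$ case; stringing these together yields a derivation of $\term{\tau_i.w}$.

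Finiteness and regularity then come for free: $\mathcal{C}$ is finite, so $G$ and $G/\!\!\sim$ have finitely many states, the language of path labels out of $[n(\tau_i)]$ is a regular prefix-closed subset of $\Sigma^*$, and its unfolding has only finitely many subtrees up to labeled isomorphism, i.e.\ it is a genuine sketch in the sense of \autoref{sketch-def}. I expect the bookkeeping in the forward direction to be the main obstacle: one has to argue that the congruence closure (ii) is neither too weak---it must realize every consequence of $\textsc{T-InheritL}$, $\textsc{T-InheritR}$ and $\textsc{S-Pointer}$, including cases where the relevant derived type variable appears only inside the proof and so is not a vertex of $G$, forcing the corresponding path to be synthesized in the quotient---nor too strong, i.e.\ it must not manufacture $\term{\tau_i.w}$ for non-provable $w$, which is why clause (i) uses only the \emph{axioms} $\alpha\subtype\beta$ and why the reverse direction re-derives each identification from a deduction rule. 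The normal-form theorem is what keeps this case analysis finite and makes the mutual induction close.
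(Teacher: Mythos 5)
Your construction is exactly the one the paper uses: the same prefix-closed graph $G$, the same congruence closure $\sim$ (subtype axioms plus label-matching with the $\mathsf{.load}$/$\mathsf{.store}$ exception for $\textsc{S-Pointer}$), and the same reading of the regular path language out of $[n(\tau_i)]$ as the tree structure of $S_i$. The mutual induction on normal-form derivations that you spell out is precisely the content the paper compresses into ``by construction,'' so this is the same proof, just with the bookkeeping made explicit.
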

\begin{proof}
The languages $\mathcal{L}(S_i)$ can be computed an algorithm that is similar in
spirit to Steensgaard's
method of almost-linear-time pointer analysis \citep{steensgaard}.
Begin by forming a graph with one
node $n(\alpha)$ for each
derived type variable appearing in $\mathcal{C}$, along with each of its prefixes.  Add a labeled
edge $n(\alpha) \stackrel{\ell}{\to} n(\alpha.\ell)$  for each derived type variable $\alpha.\ell$ to
form a graph $G$.  Now quotient $G$ by the equivalence relation $\sim$ defined by $n(\alpha) \sim
n(\beta)$ if $\alpha \subtype \beta \in \mathcal{C}$, and $n(\alpha') \sim n(\beta')$ whenever there are edges
$n(\alpha) \stackrel{\ell}{\to} n(\alpha')$ and $n(\beta) \stackrel{\ell'}{\to} n(\beta')$ in $G$
with $n(\alpha) \sim n(\beta)$ where either $\ell = \ell'$ or $\ell = \mathsf{.load}$ and $\ell' = 
\mathsf{.store}$.

The relation $\sim$ is the symmetrization of $\subtype$, with the first defining rule roughly
corresponding to $\textsc{T-InheritL}$ and $\textsc{T-InheritR}$, and the second rule corresponding to
$\textsc{S-Field}_\oplus$ and $\textsc{S-Field}_\ominus$.  The unusual condition on $\ell$ and $\ell'$ is due to
the $\textsc{S-Pointer}$ rule.

By construction, there exists a path with label sequence $u$ through $G/\!\!\sim$ starting at the equivalence class of
$\tau_i$ if and only if $\mathcal{C} \proves \term{\tau_i.u}$.  We can take this as the definition of the
language accepted by $S_i$.
\end{proof}

Working out the lattice elements that should label $S_i$ is a trickier problem; the basic idea is to use
the same pushdown system construction that appears during constraint simplification to answer queries
about which type constants are upper and lower bounds on a given derived type variable.
The computation of upper and lower lattice bounds on a derived type variable appears in \autoref{shadowing}.

\begin{algorithm}
\caption{Computing sketches from constraint sets}
\label{type-infer-alg4}
\begin{algorithmic}[0]

  \Procedure{InferShapes}{$\mathcal{C}_\text{initial}, B$}
    \State{$\mathcal{C} \gets $ \Call{Substitute}{$\mathcal{C}_\text{initial}, B$}}
    \State{$G \gets \emptyset$}
    \Comment{Compute constraint graph modulo $\sim$}
    \ForAll{$p.\ell_1 \dots \ell_n \in \mathcal{C}.\mathsf{derivedTypeVars}$}
      \For{$i \gets 1 \dots n$}
        \State{$s \gets $ \Call{FindEquivRep}{$p. \ell_1 \dots \ell_{i-1}, G$}}
        \State{$t \gets $ \Call{FindEquivRep}{$p. \ell_1 \dots \ell_{i}, G$}}
        \State{$G.\mathsf{edges} \gets G.\mathsf{edges} \cup (s, t, \ell_i)$}
      \EndFor
    \EndFor
    \ForAll{$x \subtype y \in \mathcal{C}$}
      \State{$X \gets $ \Call{FindEquivRep}{$x, G$}}
      \State{$Y \gets $ \Call{FindEquivRep}{$y, G$}}
      \State{\Call{Unify}{$X, Y, G$}}
    \EndFor
    \Repeat
      \Comment{Apply additive constraints and update $G$}
      \State{$\mathcal{C}_\text{old} \gets \mathcal{C}$}
      \ForAll{$c \in \mathcal{C}_\text{old}$ with $c = \textsc{Add}(\_)$ or $\textsc{Sub}(\_)$}
        \State{$D \gets $ \Call{ApplyAddSub}{$c, G, \mathcal{C}$}}
        \ForAll{$\delta \in D$ with $\delta = X \subtype Y$}
          \State{\Call{Unify}{$X, Y, B$}}
        \EndFor
      \EndFor
    \Until{$\mathcal{C}_\text{old} = \mathcal{C}$}
    \ForAll{$v \in \mathcal{C}.\mathsf{typeVars}$}
      \Comment{Infer initial sketches}
      \State{$S \gets \text{new Sketch}$}
      \State{$\mathcal{L}(S) \gets $ \Call{AllPathsFrom}{$v, G$}}
      \ForAll{states $w \in S$}
        \If{$\langle w \rangle = \oplus$}
          \State{$\nu_S(w) \gets \top$}
        \Else
          \State{$\nu_S(w) \gets \bot$}
        \EndIf
      \EndFor
      \State{$B[v] \gets S$}
    \EndFor
  \EndProcedure

  \Statex

  \Procedure{Unify}{$X, Y, G$}
    \Comment{Make $X \sim Y$ in $G$}
    \If{$X \neq Y$}
      \State{\Call{MakeEquiv}{$X, Y, G$}}
      \ForAll{$(X', \ell) \in G.\mathsf{outEdges}(X)$}
        \If{$(Y', \ell) \in G.\mathsf{outEdges}(Y)$ for some $Y'$}
          \State{\Call{Unify}{$X',Y',G$}}
        \EndIf
      \EndFor
    \EndIf
  \EndProcedure

\end{algorithmic}
\end{algorithm}

\subsubsection{Sketch Narrowing at Function Calls}
\label{reverse-dns}

\begin{figure}
\[\begin{tikzpicture}[->,shorten >=1pt,auto,semithick,node distance=2cm,initial text={},scale=0.7]
   \tikzstyle{every state}=[draw]

   \node[initial,state] (L)                {\scriptsize$\top$};
   \node[state]         (Lp) [right of=L]  {\scriptsize$\top$};
   \node[state]         (Lu) [right of=Lp] {\scriptsize$\typename{str}$};
   \node[state]         (N)  [below of=L]  {\scriptsize$\bot$};
   \node[state]         (Np) [right of=N]  {\scriptsize$\bot$};
   \node[state]         (Nu) [right of=Np] {\scriptsize$\typename{str}$};

   \path (L)  edge                  node {\scriptsize$\mathsf{.store}$      } (Np)
         (N)  edge                  node {\scriptsize$\mathsf{.store}$      } (Lp)
         (Lp) edge                  node {\scriptsize$.\sigma\mathsf{32@0}$ } (Lu)
         (Np) edge                  node {\scriptsize$.\sigma\mathsf{32@0}$ } (Nu)
         (L)  edge                  node {\scriptsize$\mathsf{.load}$       } (Lp)
         (N)  edge                  node {\scriptsize$\mathsf{.load}$       } (Np)
         (Lp) edge [bend right=60]  node [swap] {\scriptsize$.\sigma\mathsf{32@4}$ } (L)
         (Np) edge [bend left=60]   node {\scriptsize$.\sigma\mathsf{32@4}$ } (N);
\end{tikzpicture}\]
\caption{Sketch representing a linked list of strings $\texttt{struct~ LL ~\{ ~str~ s; ~struct ~LL* ~a;~ \}*}$.}
\label{sketch-example}
\end{figure}

\begin{figure}
\[\begin{tikzpicture}[->,shorten >=1pt,auto,semithick,node distance=2cm,initial text={},scale=0.7]
  \tikzstyle{every state}=[draw]

  \node[initial,state] (L)                {\scriptsize$\top$};
  \node[state]         (Lp) [right of=L]  {\scriptsize$\bot$};
  \node[state]         (Lu) [right of=Lp] {\scriptsize$\typename{url}$};

  \path (Lp) edge                  node {\scriptsize$.\sigma\mathsf{32@0}$ } (Lu)
        (L)  edge                  node {\scriptsize$\mathsf{.store}$      } (Lp);
\end{tikzpicture}\]
\caption{Sketch for $Y$ representing \tt{\_Out\_ url * u;}}
\end{figure}

It was noted in \autoref{deduction-rule-sec} that the rule $\textsc{T-InheritR}$ leads to a
system with structural typing: any two types in a subtype relation must have the same fields.
In the language of sketches, this means that if two type variables are in a subtype relation
then the corresponding sketches accept exactly the same languages.  Superficially, this seems
problematic for modeling typecasts that narrow a pointed-to object as motivated by
the idioms in \autoref{ptr-to-member}.

The missing piece that allows us to effectively narrow objects is instantiation of callee type
schemes at a callsite.  To demonstrate how polymorphism enables narrowing, consider the 
example type scheme $\scheme{F}{\mathcal{C}}{F}$ from \autoref{type-recovery-example}.
The function {\verb|close_last|} can be invoked by providing any actual-in type $\alpha$
such that $\alpha \subtype F.\mathsf{in}_{\mathsf{stack}0}$; in particular, $\alpha$ can have
{\em more} capabilities than $F.\mathsf{in}_{\mathsf{stack}0}$ itself. That is, we can
pass a more constrainted (``more capable'') type as an actual-in to a function that expected
a less constrained input.  In this way, we recover aspects of the physical, nonstructural
subtyping utilized by many C and C++ programs via the pointer-to-member or pointer-to-substructure
idioms described in \autoref{ptr-to-member}.

\begin{example}
Suppose we have a reverse DNS lookup function {\verb|reverse_dns|} with C type signature
  {\verb|void reverse_dns|}\\{\verb|(num addr,url* result)|}.  Furthermore, assume that
the implementation of {\verb|reverse_dns|} works by writing the resulting URL
to the location pointed to by {\verb|result|}, yielding a constraint of the form
$\typename{url} \subtype \texttt{result}.\mathsf{store}.\sigma\mathsf{32@0}$.  So
{\verb|reverse_dns|} will have an inferred type scheme of the form
\[ \forall \alpha, \beta .
(\alpha \subtype \typename{num}, \typename{url} \subtype \beta.\mathsf{store}.\sigma\mathsf{32@0})
 \Rightarrow \alpha \times \beta \to \mathsf{void}\]
Now suppose we have the structure {\verb|LL|} of \autoref{sketch-example}
representing a linked list of strings, with instance {\verb|LL * myList|}.
Can we safely invoke {\verb|reverse_dns(addr, (url*) myList)|}?

Intuitively, we can see that this should be possible: since the linked list's payload is
in its first field, a value of type {\verb|LL*|} also looks like a value of type
{\verb|str*|}.  Furthermore, {\verb|myList|} is not {\verb|const|},
so it can be used to store the function's output.

Is this intuition borne out by Retypd's type system?  The answer is yes, though it takes
a bit of work to see why.  Let us write $S_\beta$ for the instantiated sketch of $\beta$
at the callsite of {\verb|reverse_dns|} in question; the constraints $\mathcal{C}$
require that $S_\beta$ satisfy
$.\mathsf{store}.\sigma\mathsf{32@0} \in \mathcal{L}(S_\beta)$ and
$\typename{url} <: \nu_{S_\beta}(.\mathsf{store}.\sigma\mathsf{32@0})$.  The actual-in has
the type sketch $S_X$ seen in \autoref{sketch-example}, and the copy from actual-in to
formal-in will generate the constraint $X \subtype \beta$. 

Since we have already satisfied the two constraints on $S_\beta$ coming from its use in {\verb|reverse_dns|},
we can freely add other words to $\mathcal{L}(S_\beta)$, and can freely set their node labels to almost
any value we please subject only to the constraint $S_X \subsketch S_\beta$.
This is simple enough: we just add every word in $\mathcal{L}(S_X)$ to $\mathcal{L}(S_\beta)$.
If $w$ is a word accepted by $\mathcal{L}(S_X)$ that must be added to $\mathcal{L}(S_\beta)$, we will define
$\nu_{S_\beta}(w) := \nu_{S_X}(w)$.  Thus, both the shape and the node labeling on $S_X$ and $S_\beta$ match, with
one possible exception: we must check that $X$ has a nested field $.\mathsf{store}.\sigma\mathsf{32@0}$, and
that the node labels satisfy the relation 
\[\nu_{S_\beta}(.\mathsf{store}.\sigma\mathsf{32@0}) <: \nu_{S_X}(.\mathsf{store}.\sigma\mathsf{32@0})\] 
since $.\mathsf{store}.\sigma\mathsf{32@0}$ is contravariant and $S_X \subsketch S_\beta$.
$X$ does indeed have the required field, and $\typename{url} <: \typename{str}$; the
function invocation is judged to be type-safe.
\end{example}
 
\section{Additional Algorithms}
This appendix holds a few algorithms referenced in the main text and other appendices.
\label{the-algorithms}

\begin{algorithm}
\caption{Type scheme inference}
\label{type-infer-alg}
\begin{algorithmic}[0]

  \Procedure{InferProcTypes}{$\mathsf{CallGraph}$}
    \State{$T \gets \emptyset$}
    \Comment{$T$ is a map from procedure to type scheme.}
    \ForAll{$S \in$ \Call{Postorder}{$\mathsf{CallGraph.sccs}$}}
      \State{$\mathcal{C} \gets \emptyset$}
      \ForAll{$P \in S$}
        \State{$T\left[P\right] \gets \emptyset$}
      \EndFor
      \ForAll{$P \in S$}
        \State{$\mathcal{C} \gets \mathcal{C} ~\cup$ \Call{Constraints}{$P, T$}}
      \EndFor
      \State{$\mathcal{C} \gets $ \Call{InferShapes}{$\mathcal{C}, \emptyset$}}
      \ForAll{$P \in S$}
        \State{$\mathcal{V} \gets P.\mathsf{formalIns} \cup P.\mathsf{formalOuts}$}
        \State{$Q \gets $\Call{Transducer}{$\mathcal{C}, \mathcal{V} \cup \overline{\Lambda}$}}
        \State{$T\left[P\right] \gets $\Call{TypeScheme}{$Q$}}
      \EndFor
    \EndFor
  \EndProcedure

  \Statex

  \Procedure{Constraints}{$P, T$}
    \State{$\mathcal{C} \gets \emptyset$}
    \ForAll{$i \in P.\mathsf{instructions}$}
      \State{$C \gets C ~\cup$ \Call{AbstractInterp}{$\mathsf{TypeInterp}, i$}}
      \If{$i$ calls $Q$}
        \State{$\mathcal{C} \gets \mathcal{C} ~\cup$ \Call{Instantiate}{$T[Q], i$}}
      \EndIf
    \EndFor
    \State{\Return{$\mathcal{C}$}}
  \EndProcedure

\end{algorithmic}
\end{algorithm}

\begin{algorithm}
\caption{C type inference}
\label{type-infer-alg2}
\begin{algorithmic}[0]

  \Procedure{InferTypes}{$\mathsf{CallGraph}, T$}
    \State{$B \gets \emptyset$}
    \Comment{$B$ is a map from type variable to sketch.}
    \ForAll{$S \in$ \Call{ReversePostorder}{$\mathsf{CallGraph.sccs}$}}
      \State{$\mathcal{C} \gets \emptyset$}
      \ForAll{$P \in S$}
        \State{$T\left[P\right] \gets \emptyset$}
      \EndFor
      \ForAll{$P \in S$}
        \State{$\mathcal{C}_\partial \gets T[P]$}
        \State{\Call{Solve}{$\mathcal{C}_\partial, B$}}
        \State{\Call{RefineParameters}{$P, B$}}
        \State{$\mathcal{C} \gets$ \Call{Constraints}{$P, T$}}
        \State{\Call{Solve}{$\mathcal{C}, B$}}
      \EndFor
    \EndFor
    \State{$A \gets \emptyset$}
    \ForAll{$x \in B.\mathsf{keys}$}
      \State{$A[x] \gets $ \Call{SketchToAppxCType}{$B[x]$}}
    \EndFor
    \State{\Return{$A$}}
  \EndProcedure

  \Statex

  \Procedure{Solve}{$\mathcal{C}, B$}
    \State{$\mathcal{C} \gets $ \Call{InferShapes}{$\mathcal{C}, B$}}
    \State{$Q \gets $\Call{Transducer}{$\mathcal{C}, \overline{\Lambda}$}}
    \ForAll{$\lambda \in \Lambda$}
      \ForAll{$X u$ such that $\lambda \stackrel{Q}{\mapsto} X u$}
        \State{$\nu_{B[X]}(u) \gets \nu_{B[X]}(u) \vee \lambda$}
      \EndFor
      \ForAll{$X u$ such that $X u \stackrel{Q}{\mapsto} \lambda$}
        \State{$\nu_{B[X]}(u) \gets \nu_{B[X]}(u) \wedge \lambda$}
      \EndFor
    \EndFor
  \EndProcedure

\end{algorithmic}
\end{algorithm}

\begin{algorithm}
\caption{Procedure specialization}
\label{type-infer-alg3}
\begin{algorithmic}[0]

  \Procedure{RefineParameters}{$P, B$}
    \ForAll{$i \in P.\mathsf{formalIns}$}
      \State{$\lambda \gets \top$}
         \ForAll{$a \in P.\mathsf{actualIns}(i)$}
            \State{$\lambda \gets \lambda \sqcup B[a]$}
         \EndFor
      \State{$B[i] \gets B[i] \sqcap \lambda$}
    \EndFor
    \ForAll{$o \in P.\mathsf{formalOuts}$}
      \State{$\lambda \gets \bot$}
         \ForAll{$a \in P.\mathsf{actualOuts}(o)$}
            \State{$\lambda \gets \lambda \sqcap B[a]$}
         \EndFor
      \State{$B[o] \gets B[o] \sqcup \lambda$}
    \EndFor
  \EndProcedure

\end{algorithmic}
\end{algorithm}

\begin{figure}
\centering
\begin{minipage}{.9\linewidth}
\begin{tabular}{c}
\begin{minipage}{\linewidth}
     \vspace{0.05in}
        \centering
     \emph{Language}
     \[\mathcal{L}(X \sqcap Y) = \mathcal{L}(X) \cup \mathcal{L}(Y)\]
     \[\mathcal{L}(X \sqcup Y) = \mathcal{L}(X) \cap \mathcal{L}(Y)\]
      \vspace{-0.1in}
\end{minipage} \\
\begin{minipage}{\linewidth}
     \vspace{0.05in}
        \centering
  \emph{Node labels}
      \[\nu_{X \sqcap Y}(w) = \begin{cases}
          \nu_X(w) \wedge \nu_Y(w) & \text{if } \langle w \rangle = \oplus \\
          \nu_X(w) \vee \nu_Y(w) & \text{if } \langle w \rangle = \ominus
          \end{cases} \]
      \[\nu_{X \sqcup Y}(w) =
        \begin{cases}
          \nu_X(w) & \text{if } w \in \mathcal{L}(X) \setminus \mathcal{L}(Y) \\
          \nu_Y(w) & \text{if } w \in \mathcal{L}(Y) \setminus \mathcal{L}(X) \\
          \nu_X(w) \vee \nu_Y(w) & \text{if } w \in \mathcal{L}(X) \cap \mathcal{L}(Y),\\
          & \quad \langle w \rangle = \oplus \\
          \nu_X(w) \wedge \nu_Y(w) & \text{if } w \in \mathcal{L}(X) \cap \mathcal{L}(Y),\\
          & \quad \langle w \rangle = \ominus
        \end{cases}\]
        \vspace{0.05in}
\end{minipage} \\
\end{tabular}
\end{minipage}
\caption{Lattice operations on the set of sketches.}
\label{sketch-lattice}
\end{figure}

\section{Other C Type Resolution Policies}

\begin{example}
The initial type-simplification stage results in types that are as general as possible.
Often, this means that types are found to be more general than is strictly helpful to
a (human) observer. A policy called $\textsc{RefineParameters}$ is used to specialize
type schemes to the most {\em specific} scheme that is compatible with all uses.
For example, a C++ object may include a getter function with a highly polymorphic
type scheme, since it could operate equally well on any structure with a field of the right type at the
right offset.  But we expect that in every calling context, the getter will be called on a specific 
object type (or perhaps its derived types).  By specializing the function signature, we make use of
contextual clues in exchange for generality before presenting a final C type to the user.
\end{example}

\begin{example}
  Suppose we have a C++ class
  
\begin{code}
class MyFile
{
 public:
   char * filename() const {
     return m_filename;
   }
 private:
   FILE * m_handle;
   char * m_filename;
};
\end{code}

\noindent
In a 32-bit binary, the implementation of {\verb|MyFile::filename|} (if not inlined) will be
roughly equivalent to the C code
\begin{code}
typedef int32_t dword;
dword get_filename(const void * this)
{
   char * raw_ptr = (char*) this;
   dword * field_ptr =
       (dword*) (raw_ptr + 4);
   return *field_ptr;
}
\end{code}

\noindent
Accordingly, we would expect the most-general inferred type scheme for {\verb|MyFile::filename|} to be
\[\forall \alpha, \beta . (\beta \subtype \texttt{dword},
\alpha.\mathsf{load}.\sigma\mathsf{32@4} \subtype \beta) \Rightarrow \alpha \to \beta\]
indicating that {\verb|get_filename|} will accept a pointer to anything which has a value of some 32-bit
type $\beta$ at offset 4, and will return a value of that same type. If the function is truly used
polymorphically, this is exactly the kind of precision that we wanted our type system to maintain.

But in the more common case, {\verb|get_filename|} will only be called with values where
{\verb|this|} has type {\verb|MyFile*|} (or perhaps a subtype, if we include inheritance).
If every callsite to
{\verb|get_filename|} passes it a pointer to {\verb|MyFile*|}, it may be best to specialize
the type of {\verb|get_function|} to the monomorphic type
\[\mathsf{get\_filename} : \texttt{const~MyFile*} \to \texttt{char*}\]
\end{example}

The function $\textsc{RefineParameters}$ in \autoref{type-infer-alg3} is used to specialize each
function's type {\em just enough} to match how the function is actually used in a program, at the cost of reduced generality.

\begin{example}
A useful but less sound heuristic is represented by the {\verb|reroll|} policy for
handling types which look like unrolled recursive types:
\begin{lstlisting}[mathescape]
reroll(x):
 if there are u and $\ell$ with x.$\ell$u = x.$\ell$,
      and sketch(x) $\subsketch$ sketch(x.$\ell$):
   replace x with x.$\ell$
 else:
   policy does not apply
\end{lstlisting}
In practice, we often need to add other guards which inspect the shape of $x$ to determine
if the application of reroll appears to be appropriate or not. For example, we may require
$x$ to have at least one field other than $\ell$ to help distinguish a pointer-to-linked-list
from a pointer-to-pointer-to-linked-list.
\end{example}
 
\newpage
\section{Details of the \autoref{type-recovery-example} Example}
The results of constraint generation for the example program in \autoref{type-recovery-example} appears in \autoref{generated-constraints}.  The constraint-simplification algorithm
builds the automaton $Q$ (\autoref{q-dot}) to recognize the simplified entailment closure
of the constraint set.  $Q$ recognizes exactly the input/output pairs of the form
\[\left(\mathsf{close\_last}.\mathsf{in}_{\mathsf{stack}0}(.\mathsf{load}.\sigma\mathsf{32@0})^*
.\mathsf{load}.\sigma\mathsf{32@4}, (\mathrm{int} ~|~ \mathrm{\#FileDescriptor})\right)\]
and
\[\left((\mathrm{int} ~|~ \mathrm{\#SuccessZ}), \mathsf{close\_last}.\mathsf{out}_{\mathsf{eax}}\right)\]
To generate the simplified constraint set, a type variable $\tau$ is synthesized for the single internal state in $Q$.  The path leading from the start state to $\tau$ generates the constraint
\[\mathsf{close\_last}.\mathsf{in}_{\mathsf{stack}0} \subtype \tau\]
The loop transition generates
\[\tau.\mathsf{load}.\sigma\mathsf{32@0} \subtype \tau\]
and the two transitions out of $\tau$ generate
\begin{align*}
\tau.\mathsf{load}.\sigma\mathsf{32@4} &\subtype \mathrm{int} \\
\tau.\mathsf{load}.\sigma\mathsf{32@4} &\subtype \mathrm{\#FileDescriptor}
\end{align*}
Finally, the two remaining transitions from start to end generate
\begin{align*}
  \mathrm{int} &\subtype \mathsf{close\_last}.\mathsf{out}_{\mathsf{eax}} \\
  \mathrm{\#SuccessZ} &\subtype \mathsf{close\_last}.\mathsf{out}_{\mathsf{eax}}
\end{align*}
To generate the simplified constraint set, we gather up these constraints (applying
some lattice operations to combine inequalities that only differ by a lattice constant)
and close over the introduced $\tau$ by introducing an $\exists \tau$ quantifier.
The result is the constraint set of \autoref{type-recovery-example}.

\begin{figure}[b]
  \[
  \begin{tikzpicture}[>=latex',line join=bevel,scale=0.4]
  \pgfsetlinewidth{1bp}
  \pgfsetcolor{black}
  \node[shape=circle,draw] (s) at (37.942bp,19.588bp) {};
  \node[shape=circle,draw] (t) at (252.16bp,68.075bp) {};
  \node[shape=circle,draw,accepting] (e) at (544.7bp,19.588bp) {};

  \draw [->] (1.1549bp,19.588bp) .. controls (2.6725bp,19.588bp) and (15.097bp,19.588bp)  .. (s);

  \path [->] (s) edge [bend left=20]  node [above] {\small{close\_last.in / $\varepsilon$}} (t);
  \path [->] (t) edge [bend left=20]  node [above] {\small{.load.32@4 / \#FileDescriptor}} (e);
  \path [->] (t) edge [bend right=10] node [below] {\small{.load.32@4 / int}} (e);
  \path [->] (s) edge [bend right=20] node [below] {\small{\#SuccessZ / close\_last.out}} (e);
  \path [->] (s) edge [bend right=50] node [below] {\small{int / close\_last.out}} (e);
  \path [->] (t) edge [loop above,in=60,out=135,looseness=10] node [above] {\small{.load.32@0 / $\varepsilon$}} (t);
  \begin{scope}
  \definecolor{strokecol}{rgb}{0.0,0.0,0.0};
  \pgfsetstrokecolor{strokecol}
\end{scope}
  \begin{scope}
  \definecolor{strokecol}{rgb}{0.0,0.0,0.0};
  \pgfsetstrokecolor{strokecol}
\end{scope}
  \begin{scope}
  \definecolor{strokecol}{rgb}{0.0,0.0,0.0};
  \pgfsetstrokecolor{strokecol}
\end{scope}
  \end{tikzpicture}
  \]
  \caption{The automaton $Q$ for the constraint system in \autoref{type-recovery-example}.}
  \label{q-dot}
\end{figure}
 
\begin{figure*}
  \hrule
\begin{code}[frame=none]
_text:08048420  close_last proc near 
_text:08048420  close_last:
_text:08048420          mov     edx,dword [esp+fd]
                           AR_close_last_INITIAL[4:7] <: EDX_8048420_close_last[0:3]
                           close_last.in@stack0 <: AR_close_last_INITIAL[4:7]
                           EAX_804843F_close_last[0:3] <: close_last.out@eax
_text:08048424          jmp     loc_8048432
_text:08048426          db 141, 118, 0, 141, 188, 39
_text:0804842C          times 4 db 0
_text:08048430  
_text:08048430  loc_8048430:
_text:08048430          mov     edx,eax
                           EAX_8048432_close_last[0:3] <: EDX_8048430_close_last[0:3]
_text:08048432  
_text:08048432  loc_8048432:
_text:08048432          mov     eax,dword [edx]
                           EDX_8048420_close_last[0:3] <: unknown_loc_106
                           EDX_8048430_close_last[0:3] <: unknown_loc_106
                           unknown_loc_106.load.32@0 <: EAX_8048432_close_last[0:3]
_text:08048434          test    eax,eax
_text:08048436          jnz     loc_8048430
_text:08048438          mov     eax,dword [edx+4]
                           EDX_8048420_close_last[0:3] <: unknown_loc_111
                           EDX_8048430_close_last[0:3] <: unknown_loc_111
                           unknown_loc_111.load.32@4 <: EAX_8048438_close_last[0:3]
_text:0804843B          mov     dword [esp+fd],eax
                           EAX_8048438_close_last[0:3] <: AR_close_last_804843B[4:7]
_text:0804843F          jmp     __thunk_.close
                           AR_close_last_804843B[4:7] <: close:0x804843F.in@stack0
                           close:0x804843F.in@stack0 <: #FileDescriptor
                           close:0x804843F.in@stack0 <: int
                           close:0x804843F.out@eax <: EAX_804843F_close_last[0:3]
                           int <: close:0x804843F.out@eax
_text:08048443  
_text:08048443  close_last endp
\end{code}
\caption{The constraints obtained by abstract interpretation of the example code
in \autoref{type-recovery-example}.}
\label{generated-constraints}
\end{figure*}

\end{document}